\theoremstyle{plain}
\newtheorem{theorem}{Theorem}
\newtheorem{lemma}[theorem]{Lemma}
\newtheorem{proposition}[theorem]{Proposition}
\theoremstyle{definition}
\newtheorem{definition}[theorem]{Definition}
\newtheorem{example}[theorem]{Example}
\newtheorem{algorithm}[theorem]{Algorithm}
\newtheorem{remark}[theorem]{Remark}
\newtheorem{notation}[theorem]{Notation}
\newcommand{\R}{\mathbb{R}}
\newcommand{\Q}{\mathbb{Q}}
\newcommand{\Z}{\mathbb{Z}}
\newcommand{\Lap}{\mathcal{L}}
\newcommand{\Sp}{\mathscr{S}}
\newcommand{\Edg}{\mathcal{E}}
\newcommand{\Ve}{\mathcal{V}}
\newcommand{\Ka}{\boldsymbol{\kappa}}
\newcommand{\Katau}{\boldsymbol{\tau}}
\newcommand{\uri}{\rightarrow_\circ}
\newcommand{\cte}{C}
\newcommand{\cond}{\mathcal{C}}
\newcommand{\bu}{\mathbf{u}}
\newcommand{\bx}{x}
\newcommand{\stoich}{S}
\newcommand{\stoichM}{M}
\newcommand{\stoichMp}{M^\bot}
\newcommand{\bino}{T}
\newcommand{\binoM}{B}
\newcommand{\binoMp}{B^\bot}
\newcommand{\intal}{\mathrm{Int}}
\newcommand{\GE}{G_E}
\newcommand{\ww}{\mathbf{w}}
\newcommand{\vv}{\mathbf{v}}
\DeclareMathOperator{\sign}{sign}
\DeclareMathOperator{\diag}{diag}
\DeclareMathOperator{\rank}{rank}
\title{The structure of MESSI biological systems}
\author[M. P\'erez Mill\'an and A. Dickenstein]{Mercedes P\'erez Mill\'an*,** and Alicia Dickenstein*,**}
\address{* Dto.\ de Matem\'atica, FCEN, Universidad de Buenos Aires,
    Ciudad Universitaria, Pab.\ I, C1428EGA Buenos Aires, Argentina.}
\address{** IMAS  (UBA-CONICET), Ciudad Universitaria,
Pab.\ I, C1428EGA Buenos Aires, Argentina.}
\email{alidick@dm.uba.ar, mpmillan@dm.uba.ar}
\date{}
\thanks{This work was partially supported by UBACYT 20020100100242, CONICET PIP 11220150100473 and 11220150100483, and ANPCyT PICT 2013-1110, Argentina.} 
\begin{document}

\begin{abstract}
We introduce a general framework for biological systems, called MESSI systems, 
that describe Modifications of type Enzyme-Substrate or Swap with Intermediates, 
and we prove general results based on the network structure.
Many posttrans\-la\-tion\-al modification networks are MESSI systems. 
Examples are the motifs in [Feliu and Wiuf (2012a)], 
sequential distributive and processive multisite phosphorylation networks, 
most of the examples in~[Angeli et al.~(2007)],
phosphorylation cascades, 
two component systems as in~[Kothamachu et al.~(2015)], 
the bacterial EnvZ/OmpR network in~[Shinar and Feinberg (2010)], 
and all linear networks. We show that,
under mass-action kinetics, MESSI systems are conservative. We simplify the study
of steady states of these systems by explicit elimination of intermediate complexes
and we give conditions to ensure an explicit rational parametrization of the variety
of steady states (inspired by [Feliu and Wiuf (2013a, 2013b), Thomson and Gunawardena (2009)]).  
We define an important subclass of MESSI systems with
toric steady states [P\'erez Mill\'an et al.~(2012)]
and we give for MESSI systems with toric steady states an
easy algorithm to determine the capacity for multistationarity. In this case,  
the algorithm provides rate constants for which multistationarity takes place, 
based on the theory of oriented matroids.
\end{abstract}

% \begin{keywords}
%  biological networks, steady states, MESSI systems, multistationarity
% \end{keywords}
% 
% \begin{AMS}
%   92C42 - 13P25
% \end{AMS}

\maketitle

%%===============================
%\section*{Introduction}
%%===============================

\section{Introduction}

\medskip

Many processes within cells involve some kind of posttrans\-la\-tion\-al modification of proteins.
We introduce a general framework for biological systems that describe Modifications of type
Enzyme-Substrate or Swap with Intermediates, which we call MESSI systems, and which allows us
to prove general results on their dynamics from the structure of the network, under mass-action kinetics.
This subclass of mechanisms has attracted considerable theoretical attention due to its abundance in nature
and the special characteristics in the topologies of the networks.

The basic idea in the definition of MESSI systems (see Definitions~\ref{def:messi} and~\ref{def:ms}) is that the 
mathematical modeling reflects the different chemical behaviors. The chemical species can be grouped into
different subsets according to the way they participate in the reactions, very much akin to the intuitive
partition of the species according to their function.
We show that MESSI systems are conservative (and thus all trajectories are defined for any positive time), 
and we study the important questions of persistence and multistationarity. 
Informally, persistence means that no species which is present 
can tend to be eliminated in the course of the reaction \cite{ADLS07}.
Multistationarity (see Definition~\ref{def:multi}) is also a crucial property, 
since its occurrence can be thought of as a mechanism for switching between 
different response states in cell signaling systems and enables multiple outcomes
for cellular decision making, with the same stoichiometric content.%total amounts.

Examples of MESSI systems of major biological importance are phosporylation cascades,
such as the mitogen-activated protein kinases (MAPKs)
cascades \cite{CDVS16,sig-016,kholo00}. MAPKs are serine/threonine kinases
that play an essential role in signal transduction by modulating gene transcription in
the nucleus in response to changes in the cellular environment and participate in a
number of disease states including chronic inflammation and cancer
\cite{davis,kyriakis,pearson,schaeffer,zarubin} as they control key cellular functions
\cite{hornberg,pearson,schoeberl,tvg07,widmann}.
Also the {\em multisite phosphorylation system} is a MESSI system.
This network describes the phosphorylation of a protein in multiple sites by a
kinase/phosphatase pair in a sequential and distributive mechanism 
\cite{cyc-007,NFAT-002,NFAT-001,sig-016,NFAT-003,sig-051}.
In prokaryotic cells, an example of a MESSI system can be found in \cite{sf10},
representing the \emph{Escherichia coli} EnvZ-OmpR system which
consists of the sensor kinase EnvZ, and the response-regulator OmpR 
(see also \cite{HsSi00,IgNi89,PrSi95,StRo00,ZhQi00}).
This signaling system is a prototypical two-component signaling
system \cite{PrSi95,StRo00}. All linear systems are also MESSI.

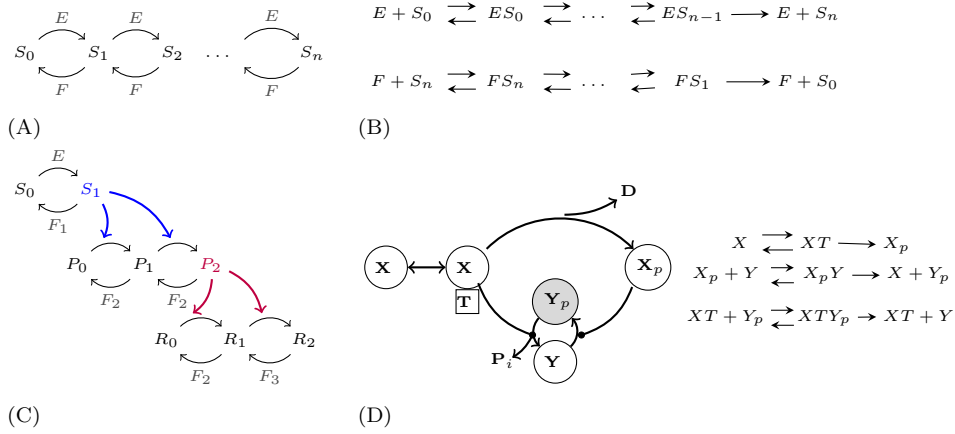
\begin{figure}[t!]  %[H]
\begin{tabular}{ll}
{\tiny 
\begin{tikzpicture}[node distance=0.5cm]
  \node[] at (1.1,-1.5) (dummy2) {};
 \node[left=of dummy2] (p0) {$S_0$};
 \node[right=of p0] (p1) {$S_1$}
   edge[->, bend left=45] node[below] {\textcolor{black!70}{\tiny{$F$}}} (p0)
   edge[<-, bend right=45] node[above] {\textcolor{black!70}{\tiny{$E$}}} (p0);
 \node[right=of p1] (p2) {$S_2$}
   edge[->, bend left=45] node[below] {\textcolor{black!70}{\tiny{$F$}}} (p1)
   edge[<-, bend right=45] node[above] {\textcolor{black!70}{\tiny{$E$}}} (p1);
 \node[right=0.1cm of p2] (p4) {$\dots\phantom{S_2}$};
 \node[right=0.4cm of p4] (p5) {$S_n$}
   edge[->, bend left=45] node[below] {\textcolor{black!70}{\tiny{$F$}}} (p4)
   edge[<-, bend right=45] node[above] {\textcolor{black!70}{\tiny{$E$}}} (p4);
\end{tikzpicture}} &
{\tiny
 \begin{tikzpicture}
  \matrix (m) [matrix of math nodes, row sep=2em,ampersand replacement=\&,
    column sep=2em]{
    E+S_0 \& ES_0\& \vphantom{ES_0}\dots\vphantom{ES_0} \& ES_{n-1} \& E+S_n \\
    F+S_n \& FS_n \& \vphantom{ES_0}\dots\vphantom{ES_0} \& FS_1 \& F+S_0 \\};
  \path[-stealth] (m-1-4) edge node[above] {} (m-1-5)
    ($(m-1-1)+(0.6,0.1)$) edge node[above] {}  ($(m-1-2)+(-0.4,0.1)$)
    ($(m-1-2)+(-0.4,-0.1)$) edge node[below] {}  ($(m-1-1)+(0.6,-0.1)$)
    ($(m-1-2)+(0.5,0.1)$) edge node[above] {}  ($(m-1-3)+(-0.3,0.1)$)
    ($(m-1-3)+(-0.3,-0.1)$) edge node[below] {}  ($(m-1-2)+(0.5,-0.1)$)
    ($(m-1-3)+(0.5,0.1)$) edge node[above] {}  ($(m-1-4)+(-0.5,0.12)$)
    ($(m-1-4)+(-0.5,-0.08)$) edge node[below] {}  ($(m-1-3)+(0.5,-0.1)$);
  \path[-stealth] ($(m-2-4)+(0.45,0)$) edge node[above] {} (m-2-5)
    ($(m-2-1)+(0.6,0.1)$) edge node[above] {}  ($(m-2-2)+(-0.4,0.1)$)
    ($(m-2-2)+(-0.4,-0.1)$) edge node[below] {}  ($(m-2-1)+(0.6,-0.1)$)
    ($(m-2-2)+(0.5,0.1)$) edge node[above] {}  ($(m-2-3)+(-0.3,0.1)$)
    ($(m-2-3)+(-0.3,-0.1)$) edge node[below] {}  ($(m-2-2)+(0.5,-0.1)$)
    ($(m-2-3)+(0.5,0.1)$) edge node[above] {}  ($(m-2-4)+(-0.5,0.12)$)
    ($(m-2-4)+(-0.5,-0.08)$) edge node[below] {}  ($(m-2-3)+(0.5,-0.1)$);
\end{tikzpicture}}\\
{\footnotesize (A)}& {\footnotesize (B)}\\
{\tiny
 \begin{tikzpicture}[node distance=0.4cm]
  \node[] at (2.2,-2) (dummy3) {};
  \node[left=of dummy3] (r0) {$R_0$};
  \node[right=of r0] (r1) {$R_1$}
    edge[->, bend left=45] node[below] {\textcolor{black!70}{\tiny{$F_2$}}} (r0)
    edge[<-, bend right=45] node[above] {} (r0);
  \node[right=of r1] (r2) {$R_2$}
    edge[->, bend left=45] node[below] {\textcolor{black!70}{\tiny{$F_3$}}} (r1)
    edge[<-, bend right=45] node[above] {} (r1);
  \node[] at (1,-1) (dummy2) {};
  \node[left=of dummy2] (p0) {$P_0$};
  \node[right=of p0] (p1) {$P_1$}
    edge[->, bend left=45] node[below] {\textcolor{black!70}{\tiny{$F_2$}}} (p0)
    edge[<-, bend right=45] node[above] {} (p0);
  \node[right=of p1] (p2) {\textcolor{purple}{$P_2$}}
    edge[->, bend left=45] node[below] {\textcolor{black!70}{\tiny{$F_2$}}} (p1)
    edge[<-, bend right=45] node[above] {} (p1)
    edge[->,thick,color=purple, bend left=25] node[above] {} ($(r0.north)+(10pt,5pt)$)
    edge[->,thick,color=purple, bend left=25] node[above] {} ($(r1.north)+(10pt,5pt)$);
  \node[] at (0.3,0) (dummy) {};
  \node[left=of dummy] (s0) {$S_0$};
  \node[right=of s0] (s1) {\textcolor{blue}{$S_1$}}
    edge[->, bend left=45] node[below] {\textcolor{black!70}{\tiny{$F_1$}}} (s0)
    edge[<-, bend right=45] node[above] {\textcolor{black!70}{\tiny{$E$}}} (s0)
    edge[->,thick,color=blue, bend left=25] node[above] {} ($(p0.north)+(10pt,5pt)$)
    edge[->,thick,color=blue, bend left=25] node[above] {} ($(p1.north)+(10pt,5pt)$);
  \end{tikzpicture}} &
{\tiny
 \begin{tikzpicture}[node distance=0.4cm]
  \node[]at(0,0)(a){};
  \node[right=3pt of a] (red) {
    \begin{tikzpicture}
      \node[]at(0,0)(a){};
      \node[above=-8pt of a] (f1) {
      \begin{tikzpicture}
    \matrix (m) [matrix of math nodes, row sep=1em,ampersand replacement=\&,
    column sep=2em]{
    X \& XT\& X_p \\};
    \path[-stealth]
      (m-1-2) edge (m-1-3)
      ($(m-1-1)+(0.3,0.1)$) edge ($(m-1-2)+(-0.3,0.1)$)
      ($(m-1-2)+(-0.3,-0.1)$) edge ($(m-1-1)+(0.3,-0.1)$);
      \end{tikzpicture}
      };
      \node[below=-7pt of a] (f2y3) {
      \begin{tikzpicture}
    \matrix (m) [matrix of math nodes, row sep=0.7em,ampersand replacement=\&,
      column sep=1em]{
      X_p+Y\& X_pY \& X+Y_p \\
      XT+Y_p\& XTY_p \& XT+Y \\};
    \path[-stealth]
      (m-1-2) edge (m-1-3)
      ($(m-1-1)+(0.6,0.1)$) edge ($(m-1-2)+(-0.4,0.1)$)
      ($(m-1-2)+(-0.4,-0.1)$) edge ($(m-1-1)+(0.6,-0.1)$)
      (m-2-2) edge (m-2-3)
      ($(m-2-1)+(0.6,0.1)$) edge ($(m-2-2)+(-0.4,0.1)$)
      ($(m-2-2)+(-0.4,-0.1)$) edge ($(m-2-1)+(0.6,-0.1)$);
      \end{tikzpicture}
      };
    \end{tikzpicture}
  };
  \node[left=-15pt of a] (dibu) {
    \begin{tikzpicture}[node distance=0.2cm]
      \node[inner sep = 0pt] at(0.6,-0.9) (a1) {};
      \node[inner sep = 0pt] at(1.25,-0.9) (a2) {};
      \node[draw,circle,text width=0.3cm]  at (0,0) (xt) {\textbf X};
      \node[below=0mm of xt,draw, rectangle,text width=0.13cm] (t) {\hspace{-0.8mm}\tiny \textbf{T}};
      \node[left=5mm of xt, draw,circle,text width=0.3cm]  (x) {\textbf X};
      \node[right=0.7cm of t,draw, circle,text width=0.3cm,fill=gray!30] (yp) {\textbf{Y$_p$}};
      \node[below=of yp,draw, circle,text width=0.3cm] (y) {\textbf{Y}};
      \node[right=1.8cm of xt, draw,circle,text width=0.3cm]  (xp) {\textbf{X$_p$}};
      \draw (y) edge[->, bend right=45,thick] node[above] {} (yp);
      \draw (y) edge[<-, bend left=45,thick]  node[above] {} (yp);
      \draw (xp) edge[<-,bend right=45,thick]  node[above] {} ($(xt.east)+(-1pt,7pt)$);
      \draw (a1) edge[bend left=25,thick]  node[above] {} ($(t.east)+(-0.4pt,7pt)$);
      \draw (a2) edge[bend right=25,thick]  node[above] {} ($(xp.south)+(-6pt,1.4pt)$);
      \filldraw(a2) circle (.04cm) ;
      \filldraw(a1) circle (.04cm) ;
      \draw (a1) edge [->,bend left=15,thick] ($(a1)+(-0.27,-0.3)$) node[right] {};
      \draw[shift={($(a1)+(-0.3,-0.3)$)}]  node[below left=-6pt] {\tiny \textbf{P$_i$}};
      \draw (1,0.7) edge [->,bend right=15,thick] (1.7,0.9) node[above] {};
      \draw[shift={(1.7,0.9)}]  node[above right=-2pt] {\tiny \textbf{D}};
      \draw[<->,thick] (x.east)--(xt.west);
    \end{tikzpicture}
  };
\end{tikzpicture}}\\
{\footnotesize (C)} & {\footnotesize (D)} 
\end{tabular}
\caption{Examples of MESSI systems: Sequential n-site phosphorylation/dephosphorlation 
(A) distributive case \cite{PM12,ws08},
(B) processive case \cite{ShCo,MaHoKh04};
(C) Phosphorylation cascade;
(D) Schematic diagram of an EnvZ-OmpR bacterial model~\cite{sf10}.}\label{fig:4}
\end{figure}

We depict in Figures~\ref{fig:4} and~\ref{fig:sadi} 
some examples of important biochemical networks which are MESSI networks.
\footnote{As usual, in the figures we summarize with the scheme
{\small
\begin{tikzpicture}[node distance=0.5cm]
  \node[] at (1.1,-1.5) (dummy2) {};
  \node[left=of dummy2] (p0) {\textcolor{black!70}{$S_0$}};
  \node[right=of p0] (p1) {\textcolor{black!70}{$S_1$}}
    edge[<-, bend right=45,color=black!60] node[above] {\textcolor{black!60}{\small{${E}$}}} (p0);
\end{tikzpicture}
}
a sequence of reactions with intermediates such as
$S_0+E \overset{\kappa_1}{\underset{\kappa_2}{\rightleftarrows}}
  ES_0 \overset{\kappa_3}{\rightarrow}
  S_1+E$.}
Figure~\ref{fig:4}(A) features the $n$-site phosphorylation-de\-phos\-phorylation of a protein by a 
kinase-phosphatase pair in a sequential and distributive mechanism. The total of $n$ phosphate groups 
are allowed to be added to the unphosphorylated substrate $S_0$ by an enzyme $E$.
The substrate $S_i$ is the phosphoform obtained from $S_0$ by attaching $i$ phosphate groups to it.  
Each phosphoform can accept (via an enzymatic reaction involving $E$) or lose (via a reaction 
involving the phosphatase $F$) at most one phosphate; this means that the mechanism is ``distributive.''
In addition,  the phosphorylation is said to be ``sequential'' because multiple phosphate groups must 
be added in a specific order and removed in a specific order as well.
The sequential and processive phosphorylation/de\-phos\-phorylation 
of a substrate at $n$ sites \cite{MaHoKh04,ShCo} is depicted in Figure~\ref{fig:4}(B).
The substrate undergoes $n \geq 1$ phosphorylations after binding to the kinase and forming the 
enzyme-substrate complex; only the fully phosphorylated substrate is released, and hence only 
two phosphoforms have to be considered: the unphosphorylated substrate $S_0$ and the fully phosphorylated substrate $S_n$. 
Processive dephosphorylation proceeds similarly.
All the motifs in \cite{fw12} are MESSI networks, as are the phosphorylation cascades shown in Figure~\ref{fig:sadi}.  
The cascade in Figure~\ref{fig:4}(C) features the sequential activation of a
specific MAPK kinase kinase (MAPKKK, denoted $S$) and a MAPK kinase (MAPKK, denoted $P$), which in turn
phosphorylates and activates the downstream MAPK (denoted $R$). The activated forms are
$S_1$, $P_2$ and $R_2$, respectively. Figure~\ref{fig:sadi} features two cascade motifs with two layers,
which are a combination of two one-site modification cycles with either a specific or the same
phosphatase acting in each layer. It is already known \cite{fw12} 
that the cascade in (A) exhibits multistationarity while the
cascade in (B) is monostationary. We will recover these results under the framework of MESSI
systems (they will both prove to be s-toric MESSI systems, see Definition~\ref{def:storic}). We will moreover consider the
cascade in Figure~\ref{fig:sadi} (A) as one of our running examples in this article, and sometimes
we will also include a drug $D$ acting by a sequestration mechanism such as $P_1+D \rightleftarrows P_1D$.
{\small
\begin{figure}
\centering
\begin{tabular}{ccc}
 {\footnotesize (A)} \begin{tikzpicture}[scale=0.7,node distance=0.5cm]
  \node[] at (1.1,-1.5) (dummy2) {};
  \node[left=of dummy2] (p0) {$P_0$};
  \node[right=of p0] (p1) {$P_1$}
    edge[->, bend left=45] node[below] {\textcolor{black!70}{\small{$F$}}} (p0)
    edge[<-, bend right=45] node[above] {} (p0);
  \node[] at (-1,0) (dummy) {};
  \node[left=of dummy] (s0) {$S_0$};
  \node[right=of s0] (s1) {\textcolor{blue}{$S_1$}}
    edge[->, bend left=45] node[below] {\textcolor{black!70}{\small{$F$}}} (s0)
    edge[<-, bend right=45] node[above] {\textcolor{black!70}{\small{$E$}}} (s0)
    edge[->,thick,color=blue, bend left=25] node[above] {} ($(p0.north)+(25pt,10pt)$);
 \end{tikzpicture}
& \hspace{0.6cm} &
 {\footnotesize (B)} \begin{tikzpicture}[scale=0.7,node distance=0.5cm]%%%
  \node[] at (1.1,-1.5) (dummy2) {};
  \node[left=of dummy2] (p0) {$P_0$};
  \node[right=of p0] (p1) {$P_1$}
    edge[->, bend left=45] node[below] {\textcolor{black!70}{\small{$F_2$}}} (p0)
    edge[<-, bend right=45] node[above] {} (p0);
  \node[] at (-1,0) (dummy) {};
  \node[left=of dummy] (s0) {$S_0$};
  \node[right=of s0] (s1) {\textcolor{blue}{$S_1$}}
    edge[->, bend left=45] node[below] {\textcolor{black!70}{\small{$F_1$}}} (s0)
    edge[<-, bend right=45] node[above] {\textcolor{black!70}{\small{$E$}}} (s0)
    edge[->,thick,color=blue, bend left=25] node[above] {} ($(p0.north)+(25pt,10pt)$);
 \end{tikzpicture}
\end{tabular}
\caption{Same and different phosphatases in two different MESSI cascades}\label{fig:sadi}
\end{figure}
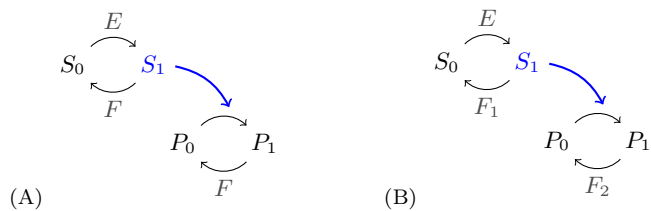}
Figure~\ref{fig:4}(D) depicts a schematic diagram of an EnvZ-OmpR bacterial model~\cite{sf10}, which
is a MESSI network.
The sensor EnvZ (X) phosphorylates itself by binding and breaking
down ATP (T). The phosphorylated form X$_p$ catalyzes the transfer
of a phosphoryl group to the response-regulator OmpR (Y). X,
together with ATP dephosphorylates Y$_p$, a transcription factor that regulates
the expression of various protein pores.

Our work continues the ideas in chemical reaction network theory (CNRT), 
which connects qualitative properties of ordinary differential equations  
corresponding to a reaction network to the network structure.
CNRT has been developed over the last 40 years, initially through the 
work of Horn and Jackson and subsequently by Feinberg 
and his students and collaborators (for example, see~\cite{fe79,
Fein95DefOne})  and Vol'pert \cite{vh85}.
Biochemical reaction networks, that is, chemical reaction networks in biochemistry, 
is the principal current application of these developments. In particular, our work 
is inspired by previous articles by Thomson and Gunawardena \cite{TG09}, who set 
the posttranslational modification (PTM) framework; Mirzaev and Gunawardena \cite{MiGu13}, 
who detailed the Laplacian dynamics; Feliu and Wiuf \cite{fw13,fw13b}, who clarified the 
elimination of intermediate complexes; and M\"{u}ller et al. \cite{mfrcsd13}, who collected and 
clarified the role of signs in the determination of multistationarity. 
Also related to our work are the papers by Gnacadja on constructive chemical reaction 
networks~\cite{Gn11a,Gn11b}, who gave an alternative approach to the PTM setting.
The MESSI structure we propose simplifies and unifies most of these approaches.

The precise conditions are given in Definitions~\ref{def:messi} and~\ref{def:ms}.
In particular,  complexes in a MESSI network are mono or bimolecular. As remarked in~\cite{TG09}, 
one main assumption for this modeling is that donor molecules
that provide modifiers are kept at constant concentration on the time scaling of the
reactions we are modeling, and their effects can be absorbed into the rate constants.
The main difference between our approach and theirs is that they do not allow  a
species to act as a substrate in one reaction and then as an enzyme in another 
(neither does~\cite{MiGu13}), which in particular excludes all enzymatic cascades. 
This is considered in~\cite{Gn11a,Gn11b}. However, none of these previous settings
allow swaps and monomolecular reactions between core species that our framework incorporates. 
Regarding~\cite{fw13,fw13b,mfrcsd13}, we pay special attention to networks with toric steady 
states~\cite{PM12}.

Theorem~\ref{th:conservations} explicitly describes conservation relations that imply that any 
MESSI system is conservative. Theorem~\ref{th:bss} gives conditions that ensure that a MESSI 
system is persistent. We give necessary conditions for the existence of a rational parametrization 
of the variety of positive steady states in Theorem~\ref{th:par}, which is the generalization
of the main theorem in~\cite{TG09} to our setting.
Proposition~\ref{prop:int} expresses the role of intermediates in the steady states of the system.  
Theorem~\ref{th:toric_toric} shows a frequent class of MESSI systems with special steady states, 
cut out by binomial equations and termed as toric steady states~\cite{PM12}, 
that allow for an easier determination of multistationarity. 

We give for MESSI systems with toric steady states an algorithm to determine the 
capacity for multistationarity based on Theorems~\ref{th:monostationarity} 
and~\ref{th:multistationarity}.
If this is the case, the algorithm provides rate constants for which multistationarity 
takes place, based on the theory of oriented matroids~\cite{Bjetal99}. This is a 
specialized procedure, easy to tune to produce different choices of rate constants, 
besides the general algorithms for injectivity implemented, for instance, by Feinberg 
and his group in the Chemical Reaction Network Toolbox~\cite{Tool}. Links to other 
algorithms can be found at 
{\verb#https://reaction-networks.net/wiki/Mathematics_of_Reaction_Networks#}.
The proofs of our statements are concentrated in the Appendix.

\bigskip

%%===============================
%\section*{MESSI systems}
%%===============================
 
\section{MESSI systems}

\medskip

In this section we review the notion of a chemical reaction network in order to introduce the definition of 
MESSI networks and MESSI systems (when these networks are endowed with mass-action kinetics). 
The conditions in the definition might seem to be very restrictive (mathematically), but indeed 
we show many examples of popular  networks in systems biology that lie in this framework.

\subsection*{Chemical reaction systems}

\medskip

\noindent We briefly recall the basic setup of chemical reaction networks and how they give rise to autonomous 
dynamical systems under mass-action kinetics (see Example~\ref{ex:enzsys}). 
%%%
Given a set of $s$ chemical species, a {\em chemical reaction network} on this set of species is a finite 
directed graph whose vertices are indicated by complexes and whose edges are labeled by parameters 
(reaction rate constants). The labeled digraph is denoted $G = (\Ve,\Edg, \Ka)$, with vertex set $\Ve$,
edge set $\,\Edg$, and edge labels $\Ka \in \R_{>0}^{\#\Edg}$.
If $(y,y')\in \Edg$, we denote $y\to y'$. Complexes determine  vectors in $\Z_{\ge 0}^{s}$
according to the stoichiometry of the species they consist of.
We identify a complex  with its corresponding vector and also with the formal linear
combination of species specified by its coordinates. 

\begin{example}[Basic example of an enzymatic network] \label{ex:enzsys}
We present a basic example that illustrates how a chemical reaction network gives 
rise to a dynamical system. This example represents a classical mechanism of enzymatic 
reactions, usually known as the futile cycle \cite{sig-016,kholo00,ws08}:
%{\small 
\begin{equation}\label{eq:reaction_ex}
 S_0+E \underset{\kappa_2}{\overset{\kappa_1}{\rightleftarrows}} U_1 \overset{\kappa_3}{\rightarrow} S_1+E \qquad
 S_1+F \underset{\kappa_5}{\overset{\kappa_4}{\rightleftarrows}} U_2 \overset{\kappa_6}{\rightarrow} S_0+F,
\end{equation}%}
where $U_1$ and $U_2$ are \emph{intermediate species},
$S_0$ and $S_1$ are \emph{substrates}, and 
$E$ and $F$ are \emph{enzymes}.
The source and the product of each reaction are called {\em complexes}.
The concentrations of the six species change in time as the reactions occur.
We order the $s=6$ species as follows: $U_1$, $U_2$, $S_0$, $S_1$, $E$, $F$, and we denote the 
concentrations by
$[U_1]=u_1$, $[U_2]=u_2$, $[S_0]=x_1$, $[S_1]=x_2$, $[E]=x_3$,  $[F]=x_4$.
The first three complexes in the network~\eqref{eq:reaction_ex} give rise to
the vectors $(0,0,1,0,1,0)$, $(1,0,0,0,0,0)$, and $(0,0,0,1,1,0)$.
Under the assumption of {\em mass-action kinetics},
we obtain then the following polynomial dynamical
system:
\[
\begin{array}{r@{\hskip 1mm}c@{\hskip 1mm}cc@{\hskip 1mm}c@{\hskip 1mm}l}
\frac{du_1}{dt} & = &\kappa_1 x_1.x_3-(\kappa_2+\kappa_3)u_1,
& \frac{du_2}{dt} & = &\kappa_4 x_2.x_4-(\kappa_5+\kappa_6)u_2,\\
\\
\frac{dx_1}{dt} & = &-\kappa_1 x_1.x_3+\kappa_2 u_1+\kappa_6 u_2,
& \frac{dx_2}{dt} & = &-\kappa_4 x_2.x_4+\kappa_5 u_2+\kappa_3 u_1,\\
\\
\frac{dx_3}{dt} & = &-\kappa_1 x_1.x_3+(\kappa_2+\kappa_3)u_1,
& \frac{dx_4}{dt} & = &-\kappa_4 x_2.x_4+(\kappa_4+\kappa_5)u_2.
\end{array}\]
\end{example}

The unknowns $x_1,x_2,\ldots,x_s$ represent the
concentrations of the species in the network, and we regard them as functions of time $t$.
Under mass-action kinetics, the chemical reaction network $G$ defines the
following chemical reaction dynamical system:
\begin{equation}
\label{CRN}
{\dot \bx}~=~\left( \frac{dx_1}{dt} ,\frac{dx_2}{dt}  ,\dots , \frac{dx_s}{dt}   \right)  ~=~
\underset{y\to y'}{\sum} \kappa_{yy'} \,  \bx^y \, (y'-y),
\end{equation}
\noindent where $\bx=(x_1,\dots,x_s)$ and $\bx^y=x_1^{y_1}\cdots x_s^{y_s}$.
The right-hand side of each differential equation $dx_\ell/dt$ is a polynomial 
$f_\ell(\bx,\Ka)$, in the variables $x_1,\dots, x_s$ with real coefficients  $\Ka$.
The associated \emph{steady state variety} $V_f$
is defined as the common nonnegative zeros of the polynomials $f_\ell$, that is, 
\begin{equation}\label{eq:Vf}
 V_f  := \{ x \in \R_{\ge 0}^ s \, : \,  f_\ell(x, \Ka) =0, \quad \ell=1,\dots, s\}.
\end{equation}

The linear subspace spanned by the reaction vectors $\stoich = \{ y' -y\, : \, y \to y' \}$ is called the
{\em stoichiometric subspace}.
Notice from~\eqref{CRN} that the vector $\dot \bx(t)$  lies in $\stoich$ for all time $t$.
In fact, a trajectory $\bx(t)$ beginning at a vector $\bx(0)=\bx^0 \in \R^s_{\ge 0}$
remains in the {\em stoichiometric compatibility class}
$(\bx^0+\stoich) \cap \mathbb{R}^s_{\geq 0}$ for all positive time. The equations of
$\bx^0+\stoich$ give rise to linear \emph{conservation relations} of the system. 

\begin{definition}\label{def:multi}
We say that the system {\em has the capacity for multistationarity}
if there exists a choice of rate constants $\Ka$ such that there are two or more steady
states in one stoichiometric compatibility class. 
On the other hand, if for any choice of rate constants 
there is at most one steady state in each stoichiometric compatibility class,
the system is said to be \emph{monostationary}.
\end{definition} 
It may happen that the vectors $\dot{\bx}(t)$ lie in a smaller subspace $\stoich'\subseteq \stoich$,
called the \textit{kinetic subspace} \cite{feho77}. In this case,  the trajectories live
in $(x^0+\stoich') \cap \mathbb{R}^s_{\geq 0}$ for some initial state $x^0 \in \mathbb{R}^s_{\geq 0}$,
and the concepts of mono- and multistationarity might be defined with respect to this smaller affine subspace.
In this article, we focus on the classical Definition~\ref{def:multi}.

\subsection*{Definition of MESSI systems}

\medskip

A MESSI network is a particular type of chemical reaction network, which includes all monomolecular (linear) ones. 
As we mentioned in the introduction, the main ingredient in the definition is the existence of a {partition} of the 
set of species that is, a decomposition into disjoint subsets, with the following properties.

\begin{definition}\label{def:messi}
A chemical reaction network is called a MESSI network if there is a {\bf partition} of the set of {species} $\Sp$ 
 \begin{equation}\label{eq:S}
\Sp=\Sp^{(0)}\bigsqcup \Sp^{(1)} \bigsqcup \Sp^{(2)} \bigsqcup \dots \bigsqcup \Sp^{(m)},
\end{equation}
where $m \ge 1$ and $\bigsqcup$ denotes disjoint union, such that the {complexes} and {reactions} satisfy the 
conditions below.

We call the cardinalities $\#\Sp^{(0)}=p$, $\#\Sp^{(\alpha)}=n_\alpha$ for any $\alpha > 0$ and
$\underset{\alpha >0}{\sum}n_\alpha=n$. We allow $p$ to be $0$, but we assume that all $n_\alpha$ are
positive.  Species in $\Sp^{(0)}$ are called {\bf intermediate},
and species in $\Sp_1:= \Sp \setminus\Sp^{(0)}$ are termed {\bf core}.
When convenient, we will distinguish intermediate and core species in the notation in the following way:
$\Sp^{(0)}=\{U_1,\dots, U_p\}$, $\Sp_1 =\{X_1,\dots,X_n\}$. Thus, the vectors determined 
by the {complexes} $(\lambda_1, \dots, \lambda_p, \nu_1,\dots, \nu_n)$  live in $\Z_{\ge 0}^{p+n}$ 
and define the formal linear combination of species $\sum_{i=1}^p \lambda_i U_i + \sum_{j=1}^n \nu_j X_j$. 

\medskip

Complexes are also partitioned into two disjoint sets, and the following conditions hold:
\begin{itemize}
\item[$(\mathcal N_1)$] {\bf Intermediate complexes} are complexes that
consist of a unique intermediate species that only appears in that complex.
The vector corresponding to the unimolecular complex $U_i$ is denoted by $y_i$.
\item[$(\mathcal N_2)$] {\bf Core complexes} \cite{fw13} are mono or bimolecular and consist of either one
or two core species. If the core complex consists of only the species $X_i$,
the corresponding vector will  be denoted by $y_{i0}$.  
\item[$(\mathcal N_3)$] When a core complex consists of two species
$X_i, X_j$, they {\em must} belong to \emph{different} sets $\Sp^{(\alpha)}, \Sp^{(\beta)}$
with $\alpha \neq \beta, \alpha, \beta \geq1$. We also denote the complex $X_i+ X_j = X_j+ X_i$ by $y_{ij}=y_{ji}$.
\end{itemize}

We say that complex $y$  reacts to complex $y'$ {\bf via intermediates} if either $y\to y'$ or
there exists a path of reactions from $y$ to $y'$ {\em only through intermediate complexes}.  
This is denoted  by $y \uri y'$.
The intermediate complexes of a MESSI network satisfy, moreover, the following condition:
\begin{itemize}
\item[$(\cond)$]  For every intermediate complex $y_k$, there exist core
complexes $y_{ij}$ and $y_{\ell m}$ such that $y_{ij}\uri y_k$ and $y_k\uri y_{\ell m}$.
\end{itemize}

\medskip

Finally, {reactions} are constrained by the following rules:
\begin{itemize}
 \item[$({\mathcal R}_1)$]  If three species are related by $X_i+ X_j \uri X_k$ or $X_k \uri X_i + X_j$, 
 then $X_k$ is an intermediate species.
 \item[$({\mathcal R}_2)$]  If two core species $X_i, X_j$ are related by $X_i\uri X_j$, then there exists 
 $\alpha \ge 1$ such that both belong to $\Sp^{(\alpha)}$.
 \item[$({\mathcal R}_3)$] If $X_i+X_j\uri X_k+X_\ell$, then there exist $\alpha \neq \beta$ such that 
 $X_i,X_k \in \Sp^{(\alpha)}$, $X_j,X_\ell \in \Sp^{(\beta)}$ or $X_i,X_\ell \in \Sp^{(\alpha)}$, 
 $X_j,X_k \in \Sp^{(\beta)}$.
\end{itemize}

We will say that the partition~\eqref{eq:S} \emph{defines a MESSI structure} on
the network.
\end{definition}

\begin{example} \label{ex:toy}
We present a toy example that shows which kinds of reactions are allowed and which are not.
Consider the following digraph, where we assume $Y_1$ and $Y_2$ to be monomolecular complexes:
\[X_1+X_2\to Y_1 \rightleftarrows Y_2 \to Y_3.\]
Then, $Y_1$ and $Y_2$ must consist of an intermediate species by rule $({\mathcal R}_1)$.
For  Condition~($\cond$) to hold, necessarily $Y_3$ must be a core
complex since there are no arrows leaving from $Y_3$. Moreover, rule $({\mathcal R}_1)$ imposes
that $Y_3$ is of the form $X_\ell+X_m$, and by rule $({\mathcal R}_3)$, if $X_1\in \Sp^{(\alpha)}$ and
$X_2 \in \Sp^{(\beta)}$, then $\alpha\neq\beta$ and either $X_\ell\in\Sp^{(\alpha)},
X_m\in \Sp^{(\beta)}$ or $X_m\in\Sp^{(\alpha)},
X_\ell\in \Sp^{(\beta)}$.
\end{example}

%\begin{definition}\label{def:messi}
%A chemical reaction network is a \emph{MESSI network} if there is a partition of the set $\Sp$ of
%species as in~\eqref{eq:S}, such that
%the partition of the complexes into intermediate and core satisfies 
%conditions $(\mathcal N_1), (\mathcal N_2), (\mathcal N_3)$ and $(\cond)$, and the 
%reactions satisfy the rules $({\mathcal R}_1), ({\mathcal R}_2)$ and $({\mathcal R}_3)$.
%In this case we say that the given partition \emph{defines a MESSI structure} on
%the network.
%\end{definition}

Notice that a MESSI network is defined once the partition of $\Sp$ is given and all conditions and 
rules in Definition~\ref{def:messi}  are verified.
It is important to point out that  even if in the chemical setting there are natural partitions 
of the set of species given by the different types of molecules, there can be many ways to define a
partition which defines a MESSI structure. 
We can define a partial order in the set of all possible partitions of the species of a given biochemical network.  
\begin{definition}\label{def:part}
Given two partitions
$\Sp = \Sp^{(0)}\sqcup \Sp^{(1)} \sqcup \Sp^{(2)} \sqcup \dots \bigsqcup \Sp^{(m)}$ 
and $\Sp = {\Sp'}^{(0)}\sqcup \Sp'^{(1)} \sqcup \Sp'^{(2)} \sqcup \dots \bigsqcup \Sp'^{(m')}$, 
we say that the first partition refines the second one if and only if $\Sp^{(0)}\supseteq {\Sp'}^{(0)}$ 
and for any $\alpha \ge 1$, there exists
$\alpha'\ge 1$ such that $\Sp^{(\alpha)}\subseteq {\Sp'}^{(\alpha')}$.
With this partial order we have the notion of  a {minimal partition}. 
\end{definition}

Before presenting our two running examples, we define enzyme behavior and swaps.
\begin{definition}\label{def:enzyme}
A species $X_j$ that satisfies $X_i+X_j\uri X_\ell+X_j$ for some $X_i, X_\ell$ is said
to act as an \emph{enzyme}. In this case, we call $X_i$ the \emph{substrate} 
and $X_\ell$ the \emph{product}.
A reaction via intermediates is called a \emph{swap} if $X_i+X_j \uri X_\ell+X_m$,
and $i,j\notin\{\ell,m\}$ (so, neither $X_i$ nor $X_j$ acts as an enzyme
in $X_i+X_j \uri X_\ell+X_m$).
\end{definition}

Notice that if a species $X_j$ in a MESSI network \emph{only} acts as an enzyme,
we can consider a singleton subset $\Sp^{(\alpha)}=\{X_j\}$.

\begin{example}[First running example]\label{ex:ES}
 Consider the network in Figure~\ref{fig:sadi} (A), with digraph %$G$:
{\small
  $$S_0+E \overset{\kappa_1}{\underset{\kappa_2}{\rightleftarrows}}
  ES_0 \overset{\kappa_3}{\rightarrow}
  S_1+E  \quad \quad
  S_1+F \overset{\kappa_{4}}{\underset{\kappa_{5}}{\rightleftarrows}}
  FS_1 \overset{\kappa_{6}}{\rightarrow}
  S_0+F$$
  $$P_0+S_1 \overset{\kappa_7}{\underset{\kappa_8}{\rightleftarrows}}
  S_1P_0 \overset{\kappa_9}{\rightarrow}
  P_1+S_1  \quad \quad
  P_1+F \overset{\kappa_{10}}{\underset{\kappa_{11}}{\rightleftarrows}}
  FP_1 \overset{\kappa_{12}}{\rightarrow}
  P_0+F.$$}
We can consider the partition $\Sp^{(0)}=\{ES_0,FS_1,S_1P_0, FP_1\}$  (intermediate species), and
$\Sp^{(1)}=\{S_0,S_1\}$, $\Sp^{(2)}=\{P_0,P_1\}$, $\Sp^{(3)}=\{E\}$, $\Sp^{(4)}=\{F\}$ (partition of the core species).
The intermediate complexes correspond to the intermediate species, and the remaining complexes are core complexes. 
This partition defines a MESSI structure in the network.
In fact, there is another possible choice
of partition which  also gives a MESSI structure to the network, considering
$\Sp^{(0)}$, $\Sp^{(1)}$ and $\Sp^{(2)}$ as before, but $\Sp^{(3)}$ and $\Sp^{(4)}$ are replaced
by their union $\{E,F\}$. We can see in this example that species $E$ and $F$ 
only act as enzymes, while species $S_1$ acts as an enzyme in the second layer but in the first 
one it plays the role of a substrate of  $F$ and of a product of $E$.
\end{example}

\begin{example}[Second running example]\label{ex:S}
An example of swap can be the seen in the transfer of a modifier molecule,
such as a phosphate group in a two-component system, from one molecule to another.
We consider as our second running example the EnvZ/OmpR system. The corresponding digraph
$G$ is featured in Figure~\ref{fig:4}(D).
The only possible partition for this network to be a MESSI network is 
$\Sp^{(0)}=\{\text{X}_p\text{Y}, \text{XTY}_p\}$, $\Sp^{(1)}=\{X,XT,X_p\}$, $\Sp^{(2)}=\{Y,Y_p\}$.
The reaction via intermediates in the second connected component of the graph of reactions is a swap.  
On the other hand, $XT$ acts as an enzyme in the last component of $G$. 
\end{example}

In Example~\ref{ex:ES}, there are two different partitions, but the first one is a refinement of the second one. 
However, there might be noncomparable partitions, as we show in the following example.

\begin{example}[Non-comparable partitions]\label{ex:noref} 
Consider the following network:
\[X_1+X_2 \to X_3+X_4 , \quad  X_4+X_5 \to X_6+X_1.\]
Set $\Sp^{(0)}=\emptyset$, $\Sp^{(1)}=\{X_1,X_4\}$ and $\Sp^{(2)}=\{X_2,X_3,X_5,X_6\}$. We can refine
$\Sp^{(2)}$ into $\Sp'^{(2)}=\{X_2,X_3\}$ and $\Sp'^{(3)}=\{X_5,X_6\}$.  
In both cases, we get the structure of a MESSI network. If we instead consider
$\Sp''^{(0)}=\emptyset$, $\Sp''^{(1)}=\{X_1,X_3,X_5\}$ and $\Sp''^{(2)}=\{X_2,X_4,X_6\}$ 
there is no possible way of refining $\Sp''^{(2)}$ without violating $({\mathcal R}_3)$. 
The second and third partitions are not comparable, and both are minimal in the poset of
partitions of the species set which yield a MESSI structure on the given network.
\end{example}

The main focus of this work is the properties of MESSI networks endowed with 
kinetics. Throughout this text we will always assume mass-action kinetics.
\begin{definition}\label{def:ms}
 We call a \emph{MESSI system} the mass-action kinetics dynamical system as
 in~\eqref{CRN} associated with a MESSI network.
\end{definition}

%%===============================
%\section*{Conservation relations and persistence in MESSI systems}\label{sec:cons}
%%===============================

\section{Conservation relations and persistence in MESSI systems}\label{sec:cons}

\medskip

We first describe the equations of the stoichiometric subspace of a MESSI system, which give linear conservation 
relations along the trajectories. We then focus on the steady states of MESSI systems.
We  give sufficient conditions for MESSI systems to be persistent.

\subsection*{Conservation relations}

\medskip

\noindent A chemical reaction system  is said to be \emph{conservative} if there exists a linear 
combination of the species in the network with all \emph{positive} coefficients which is constant
along each trajectory (i.e., for all time $t$).
Clearly, for any trajectory starting at a positive point, this constant is a positive real number. 
In this case, all stoichiometric compatibility classes are compact.
In this section we show that MESSI systems are conservative, by exhibiting natural conservation relations. 
This implies that all trajectories are bounded and defined for any positive time. 

\begin{notation} We denote the concentration of the species with small letters.
For example, $u_i$ denotes the concentration of $U_i$ and $x_j$ denotes the concentration of $X_j$.
\end{notation}

Given a  MESSI network and a partition of the species set as in Definition~\ref{def:messi}, 
we define  for any $\alpha \ge 1$ the set of indices
\begin{equation}\label{eq:intal}
 \intal(\alpha)=\{k: \text{there exists } y_{ij} \text{ with either }
X_i\in \Sp^{(\alpha)} \text{ or } X_j\in \Sp^{(\alpha)} \text{ such that } y_{ij}  \uri y_k\}.
\end{equation}
We also denote by $\Sp\intal(\alpha)$ the set of species with indices in $\intal(\alpha)$.
Note that the subsets $\intal(\alpha)$ are in general not disjoint, but condition $(\mathcal C)$ implies
that $\cup_{\alpha\ge1} \Sp\intal (\alpha) = \Sp^{(0)}$.
It is straightforward to see that the conditions imposed on a MESSI 
network ensure that for any $\alpha \ge 1$ the set of variables 
$\Sp^{(\alpha)} \cup \Sp\intal(\alpha)$ is a siphon~\cite{ADLS07}.
We will show in Theorem~\ref{th:conservations} below that the following explicit linear conservation 
relations with $\{0,1\}$ coefficients hold:
\begin{equation}\label{eq:consalpha}
\ell_\alpha(u,x)\, = \, \cte_\alpha, \text{ where } 
\ \ell_\alpha(u,x)= \sum_{X_i \in \Sp^{(\alpha)}} x_i + \sum_{k\in\intal(\alpha)} u_k,
\end{equation}
for some  constant $\cte_\alpha$, which is positive if the trajectory intersects the positive orthant. 
This is a direct consequence of Theorem~2.1 in~\cite{fw13} and of Theorem~5.3 in \cite{Gn11b}.  
The second part of Theorem~\ref{th:conservations} gives sufficient conditions for these  relations 
to generate \emph{all} the equations defining a stoichiometric compatibility class.  
We show in Example~\ref{ex:22} that if we relax any of these conditions, 
the result is not true. See also Proposition~\ref{prop:kinstoi} on the conditions to ensure 
that the kinetic and the stoichiometric subspaces coincide. 

\begin{theorem}\label{th:conservations}
 Given a chemical reaction network $G$ and a partition of the set of species $\Sp$ 
 as in~\eqref{eq:S} that defines a MESSI structure, for each subset of species 
 $\Sp^{(\alpha)}$, $1\le\alpha \le m$, the linear form $\ell_\alpha$ in~\eqref{eq:consalpha} 
 defines a conservation relation of the system. 
 In particular, all MESSI systems are conservative.
 
 Furthermore, if there are no swaps in $G$, and the partition is \emph{minimal} in the poset 
 of partitions defining a MESSI system structure on $G$, then $\dim(\stoich^\bot)=m$.
 
 If, moreover, the stoichiometric subspace coincides with the kinetic subspace, then 
 the only possible conservation relations in the system are linearly generated by 
 the conservations~\eqref{eq:consalpha} for $1\leq \alpha\leq m$.
\end{theorem}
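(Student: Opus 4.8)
The plan is to establish the three statements in turn, deriving everything from the combinatorial rules $(\mathcal N_1)$--$(\mathcal R_3)$. The backbone is a \emph{consistency lemma}: if two core complexes $z,z'$ satisfy $z\uri y_k$ and $z'\uri y_k$ for a common intermediate complex $y_k$, then $z$ and $z'$ contain the same number of species in each $\Sp^{(\alpha)}$ (this number being $0$ or $1$ by $(\mathcal N_3)$). To prove it, I would use $(\cond)$ to pick a core complex $y_{\ell m}$ with $y_k\uri y_{\ell m}$, so that $z\uri y_{\ell m}$ and $z'\uri y_{\ell m}$; then $(\mathcal R_1)$ forces $z$, $z'$ and $y_{\ell m}$ to be simultaneously mono- or bimolecular, and $(\mathcal R_2)$ (mono case) or $(\mathcal R_3)$ (bimolecular case) shows they all occupy the same classes $\Sp^{(\alpha)}$.

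\emph{The conservation relations and conservativeness.} Let $w_\alpha$ be the $\{0,1\}$-valued coefficient vector of $\ell_\alpha$; I must check $w_\alpha\cdot(y'-y)=0$ for each reaction $y\to y'$, according to the types of $y,y'$ permitted by $(\mathcal N_1)$--$(\mathcal N_3)$. If both are core complexes, mixed mono/bimolecular transitions are excluded by $(\mathcal R_1)$ and the equality is exactly $(\mathcal R_2)$ or $(\mathcal R_3)$. If one endpoint is an intermediate complex $y_k$, I compare the number of species of the other endpoint lying in $\Sp^{(\alpha)}$ with the indicator of ``$k\in\intal(\alpha)$''; one direction is immediate from the definition of $\intal(\alpha)$, and the other follows from the consistency lemma together with $(\cond)$. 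Finally $\sum_{\alpha=1}^m w_\alpha$ is strictly positive on every core species (which lies in a unique $\Sp^{(\alpha)}$) and on every intermediate $U_k$ (by $(\cond)$ some core complex with a species in some $\Sp^{(\alpha)}$ reaches $y_k$, so $k\in\intal(\alpha)$); hence $\sum_\alpha\ell_\alpha$ is a positive conservation relation and the system is conservative. (Alternatively, this part follows from the elimination of intermediates in~\cite{fw13,Gn11b}, as noted above.)

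\emph{The dimension of $\stoich^\bot$.} On the core coordinates $w_\alpha$ equals the indicator of $\Sp^{(\alpha)}$, and since the $\Sp^{(\alpha)}$ with $\alpha\ge1$ are nonempty and pairwise disjoint, $w_1,\dots,w_m$ are linearly independent, giving $\dim(\stoich^\bot)\ge m$ unconditionally. For the reverse inequality I take $w\in\stoich^\bot$ and split it into its intermediate part $w^U$ and its core part $w^X$. First eliminate intermediates: for each $k$, choose via $(\cond)$ a core complex $z$ with $z\uri y_k$; evaluating $w$ along the defining reaction path, and using that by $(\mathcal N_1)$ each intermediate occurs in a unique complex, yields $w^U_k=w^X\cdot z$, which the consistency lemma makes independent of $z$. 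So $w$ is determined by $w^X$, and the condition $w\in\stoich^\bot$ amounts to $w^X\cdot y=w^X\cdot y'$ for every $y\uri y'$ with $y,y'$ core complexes. In the \emph{absence of swaps} each such relation is monomolecular $X_i\uri X_j$ or enzymatic $X_i+X_j\uri X_\ell+X_j$, so these conditions say precisely that $w^X$ is constant on each class of the equivalence relation generated by such pairs. The decisive point is that \emph{minimality} of the partition forces each $\Sp^{(\alpha)}$ to be a single such class: if not, splitting $\Sp^{(\alpha)}$ along these classes would produce a strictly finer MESSI partition --- here only $(\mathcal R_2)$ and $(\mathcal R_3)$ need to be re-checked, and they hold because, with no swaps, two core species forced to lie in a common class by these rules are always joined by a monomolecular or enzymatic change --- contradicting minimality. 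Therefore $w^X$ is a linear combination of the indicators of the $\Sp^{(\alpha)}$, and feeding this back into $w^U_k=w^X\cdot z$ (using the consistency lemma once more) gives $w=\sum_\alpha c_\alpha w_\alpha$. Hence $\dim(\stoich^\bot)=m$.

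\emph{The kinetic subspace, and the main obstacle.} A linear form is a conservation relation of the dynamical system exactly when it is constant along every trajectory, i.e.\ when it annihilates the kinetic subspace $\stoich'$; when $\stoich'=\stoich$ this space is $\stoich^\bot$, which the previous part identifies with the span of $w_1,\dots,w_m$, so the conservation relations are precisely the linear combinations of $\ell_1,\dots,\ell_m$. I expect the hardest step to be the bound $\dim(\stoich^\bot)\le m$: the intermediate-elimination reduction requires care, but the genuine subtlety is the minimality argument, where one must verify that splitting each $\Sp^{(\alpha)}$ into change-classes preserves $(\mathcal R_3)$ --- this is exactly where the ``no swaps'' hypothesis is essential, since a swap can force two core species into a common class of a minimal partition with no change-reaction linking them, so that the refinement, and hence the bound, would fail.
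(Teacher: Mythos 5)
Your argument is correct, but it reaches the key bound $\dim(\stoich^\bot)\le m$ by a genuinely different route from the paper's. The paper works primally: it imports the identity $\dim(\stoich)=\dim(\stoich_1)+p$ from the intermediate-elimination results of \cite{fw13} and then exhibits $n-m$ independent vectors $e_\ell-e_{i}$ in the reduced stoichiometric subspace $\stoich_1$ by telescoping along the connected components of $G_2$, with minimality entering through Lemma~\ref{lem:min} (each $\Sp^{(\alpha)}$ is the vertex set of one component of $G_2$). You instead work dually and self-containedly: your consistency lemma (all core complexes reacting via intermediates to a fixed $y_k$ have the same class profile) lets you eliminate the intermediate coordinates of any $w\in\stoich^\bot$, and minimality enters through an explicit refinement argument (splitting each $\Sp^{(\alpha)}$ along the equivalence classes of the mono/enzymatic change relation would produce a strictly finer MESSI partition unless each class is already a single block). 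Your version has the advantage of not relying on the cited elimination machinery for either the conservation relations or the dimension count, and of isolating exactly where ``no swaps'' is used: a swap only yields $w_i+w_j=w_\ell+w_m$ rather than an equality of two coordinates, so the proposed refinement would violate $(\mathcal R_3)$ --- consistent with Example~\ref{ex:22}. The paper's version is shorter because it reuses Lemma~\ref{lem:min} and \cite{fw13}. Two small points to tidy in a full write-up: when verifying that $\ell_\alpha$ is conserved you must also treat reactions between two intermediate complexes $y_k\to y_{k'}$ (same consistency-lemma argument, using that any core complex reaching $y_k$ also reaches $y_{k'}$ and that $(\cond)$ supplies at least one such complex); and the identity $w^U_k=w^X\cdot z$ is independent of the choice of $z$ simply because it holds for every admissible $z$, not by the consistency lemma, which only controls the class profile of $z$ and says nothing about $w^X\cdot z$ before you know $w^X$ is constant on classes.
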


\begin{example}[Examples~\ref{ex:ES} and \ref{ex:S}, continued]
 For the cascade with one phosphatase in Example~\ref{ex:ES}, 
the hypotheses in Theorem~\ref{th:conservations} are satisfied and the conservation relations are the following:
 \begin{align*}
  s_0+s_1+u_1+u_2+u_3= S_{tot},\quad &
  p_0+p_1+u_3+u_4= P_{tot},\\
  e+u_1=  E_{tot},\quad &
  f+u_2+u_4=  F_{tot},
 \end{align*}
where we use small letters for the concentration of the corresponding species. 
The concentration of the intermediates species $es_0, fs_1,s_1p_0,fp_1$ are denoted by $u_1, u_2, u_3, u_4$, 
respec\-tively.
In Example~\ref{ex:S}, the conservation relations are
 \begin{equation*}
  x+xt+x_p+x_py+xty_p= X_{tot},\quad
  y+y_p+x_py+xty_p= Y_{tot}.
 \end{equation*}
\end{example}

\begin{example}[Necessity of the hypotheses in Theorem~\ref{th:conservations}]\label{ex:22}
The following is Example 22 from \cite{SaWiFe15}. It satisfies the hypotheses in Theorem~\ref{th:conservations}
except for the absence of swaps:
\[
\begin{array}{l}
X_1+X_5 \rightarrow X_2+X_6\\
X_3+X_6 \rightarrow X_4+X_5\\
X_4+X_6 \rightarrow X_3+X_7.
\end{array}
\]
It is straightforward to see that the only possible {\em minimal} partition
is $\Sp^{(1)}=\{X_1,X_2\}$, $\Sp^{(2)}=\{X_3,X_4\}$, $\Sp^{(3)}=\{X_5,X_6,X_7\}$,
which gives three linearly independent conservation relations $\ell_1, \ell_2, \ell_3$. 
However, there is a fourth independent conservation relation:
\[x_1 +x_4 +x_6 + 2 x_7 = \cte.\]
 \end{example}

Before stating the sufficient conditions to ensure that the kinetic and the stoichiometric subspaces coincide,
we recall some concepts from graph theory that will be useful in the rest of the article.

%\begin{definition}\label{def:wrsc}
Given a directed graph $G= (\mathcal V, \mathcal E)$, define the following equivalence relation
between the vertices: two vertices $i, j \in \mathcal V$ are related if and only if there is a
directed path from $i$ to $j$, and a directed path from $j$ to $i$. Equivalence classes of vertices
define the vertices of the \emph{strongly connected} components of $G$. Thus, a directed graph is  
strongly connected when for each ordered pair of vertices there is a directed path from the first 
vertex to the second one. Note that the underlying undirected graph of a strongly connected graph 
is connected. If one strongly connected component has no edges from any node in the component to a 
node in a different strongly connected component, it is called a \textit{terminal strongly connected component}.

A directed graph $G$ is said to be \emph{weakly reversible} if each connected component is strongly connected. 
This means that if there is a directed path from a vertex $i$ to another vertex $j$, there is also a directed 
path from $j$ to $i$, but it could happen that no path exists in any of the two directions.
Thus $G$ is strongly connected if and only if it is weakly reversible and connected, and the connected
components of a weakly reversible graph are strongly connected. 
%\end{definition}

\begin{example}\label{ex:wrsc}
The underlying directed graph of the chemical reaction network
\begin{equation*}
X_3 \overset{\kappa_1}{\leftarrow} X_1 \overset{\kappa_2}{\underset{\kappa_3}{\rightleftarrows}}X_2 \overset{\kappa_4}{\rightarrow} X_4,
\end{equation*}
is connected but not weakly reversible. It has three strongly connected components: the node $X_3$ (with no arrows),
the node $X_4$ (again, with no arrows),  which are terminal strongly connected components, 
and the subgraph $X_1 {\rightleftarrows} X_2$, which is not terminal.
\end{example}

The following result is from~\cite{feho77}.

\begin{proposition} \label{prop:kinstoi} 
If $G$ has only one terminal strongly connected component in each connected component,
the number of generators of the conservation relations is $s-\dim(\stoich)$,
where $s$ is the total number of species and $\stoich$ is the stoichiometric subspace.
In this case, the stoichiometric and the kinetic subspaces coincide.
\end{proposition}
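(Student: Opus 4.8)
The plan is to reduce everything to the single assertion that the kinetic subspace $\stoich'$ equals the stoichiometric subspace $\stoich$. Indeed, a linear form $\ell$ on $\R^s$ is a conservation relation of~\eqref{CRN} (constant along every trajectory) if and only if $\ell$ vanishes on $\dot\bx(t)$ for all $t$, that is, if and only if $\ell$ annihilates $\stoich'$; hence the space of conservation relations is $(\stoich')^\bot$, of dimension $s-\dim(\stoich')$. So once we know $\stoich'=\stoich$, the count $s-\dim(\stoich)$ follows and both assertions of the proposition are proved.

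To identify $\stoich'$ I would use the standard factorization of the right-hand side of~\eqref{CRN}. Let $\mathscr{C}$ be the set of complexes of $G$, with $c=\#\mathscr{C}$; since distinct vertices of a reaction network correspond to distinct vectors, the complex monomials $\bx^{y}$, $y\in\mathscr{C}$, are pairwise distinct. Let $Y$ be the $s\times c$ matrix whose columns are the complex vectors, let $\Psi(\bx)=(\bx^{y})_{y\in\mathscr{C}}\in\R^{c}$, and let $A_\Ka$ be the $c\times c$ matrix with off-diagonal entries $(A_\Ka)_{y'y}=\kappa_{yy'}$ for each reaction $y\to y'$ and with column sums equal to $0$ (the negative weighted Laplacian of the reaction digraph on $\mathscr{C}$). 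Then $f(\bx,\Ka)=Y\,A_\Ka\,\Psi(\bx)$. Because the monomials $\bx^{y}$ are linearly independent functions on $\R^s_{>0}$, the set $\{\Psi(\bx):\bx\in\R^s_{>0}\}$ spans $\R^{c}$, and therefore $\stoich'=\operatorname{span}\{f(\bx,\Ka):\bx\in\R^s_{>0}\}=Y\cdot\operatorname{im}(A_\Ka)$. On the other hand $\stoich=Y\cdot W$, where $W=\operatorname{span}\{e_{y'}-e_{y}:y\to y'\}$ is the image of the incidence matrix of the reaction digraph; and since each column of $A_\Ka$ is a linear combination of the vectors $e_{y'}-e_{y}$ over the reactions leaving its source, $\operatorname{im}(A_\Ka)\subseteq W$, so $\stoich'\subseteq\stoich$.

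Next I would match dimensions. Elementary graph theory gives $\dim W=c-\ell$, where $\ell$ is the number of connected components (linkage classes) of the reaction digraph. For $A_\Ka$ I would invoke the classical rank computation for such Laplacians: $\dim\ker(A_\Ka)$ equals the number $t$ of terminal strongly connected components of the reaction digraph. (One direction is the weighted Matrix--Tree construction, which produces, for each terminal strongly connected component, a kernel vector strictly positive on that component and zero elsewhere; these are independent by disjointness of supports. The reverse inequality uses that from every non-terminal complex one can reach a terminal one, which forces the value of any kernel vector at a non-terminal complex to be determined by its values on the terminal components.) Hence $\dim\operatorname{im}(A_\Ka)=c-t$. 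The hypothesis is exactly that $t=\ell$, so $\dim\operatorname{im}(A_\Ka)=c-\ell=\dim W$; together with $\operatorname{im}(A_\Ka)\subseteq W$ this yields $\operatorname{im}(A_\Ka)=W$, and applying $Y$ gives $\stoich'=\stoich$, completing the proof.

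The one non-routine ingredient is the identity $\dim\ker(A_\Ka)=t$ used in the last step. The construction of the positive tree constants (the ``$\ge t$'' half) is a direct application of the weighted Matrix--Tree theorem restricted to each terminal strong component; the ``$\le t$'' half is the genuinely directed argument and is precisely the point at which I would either reproduce the Feinberg--Horn reasoning or simply cite~\cite{feho77}. All the rest --- the factorization $f=YA_\Ka\Psi$, independence of the complex monomials, $\dim W=c-\ell$, and the passage from $\stoich'=\stoich$ to the stated count of conservation relations --- is standard linear algebra.
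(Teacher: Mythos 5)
The paper does not actually prove this proposition: it is stated as a quoted result and attributed directly to \cite{feho77}, with no argument given in the text or the appendix. Your write-up is therefore not ``the same as'' or ``different from'' the paper's proof so much as a reconstruction of the classical Feinberg--Horn argument that the paper leaves implicit, and it is correct. The chain of reductions is sound: conservation relations form $(\stoich')^\bot$ because $\stoich'$ is by definition the span of the right-hand sides $f(\bx,\Ka)$; the factorization $f=Y A_\Ka\Psi$ together with the linear independence of distinct monomials on the positive orthant gives $\stoich'=Y\cdot\operatorname{im}(A_\Ka)$; the inclusion $\operatorname{im}(A_\Ka)\subseteq W$ follows from reading off the columns of $A_\Ka$; and the dimension count $\dim W=c-\ell$ versus $\dim\operatorname{im}(A_\Ka)=c-t$ forces equality exactly when $t=\ell$, which is the hypothesis. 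You correctly isolate the one genuinely nontrivial input, $\dim\ker(A_\Ka)=t$, and your sketch of both inequalities (Matrix--Tree vectors supported on the terminal strong components for ``$\ge t$'', reachability of a terminal component from every vertex for ``$\le t$'') is the standard route; citing \cite{feho77} or the Laplacian references the paper already uses (e.g.\ \cite{MiGu13}) for that identity is entirely appropriate. The only cosmetic caveat is that the paper never formally defines the kinetic subspace as $\operatorname{span}\{f(\bx,\Ka):\bx\in\R^s_{>0}\}$, so if this were to be inserted into the text you would want to state that definition explicitly before using it; with that in place the argument is complete.
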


When there is more than one terminal strongly connected component in one connected component,
even if there are no swaps, we can find other conservation relations. For instance, consider the chemical
reaction network  in Example~\ref{ex:wrsc} and the partition of the
set of species: $\Sp^{(0)}=\emptyset$ and $\Sp^{(1)}=\{X_1,X_2,X_3,X_4\}$. Besides the linear relation  
$x_1+x_2+x_3+x_4=\cte_1$, we get another independent relation: 
$\kappa_4\kappa_1x_2-\kappa_4\kappa_2x_3+\kappa_1(\kappa_3+\kappa_4)x_4=\cte_2$.

\subsection*{The associated digraphs}

%\smallskip
\medskip

Consider a directed graph  $G=(\Ve,\Edg, \Ka)$ with a partition of the set of species 
which defines a MESSI structure in the network.
We associate to $G$ three other digraphs, denoted by $G_1, G_2$, $G_E$.

%\smallskip

\smallskip

\begin{definition}\label{def:G1}
Given a chemical reaction network with directed graph $G=(\Ve,\Edg,\Ka)$, 
together with a partition of the set of species $\Sp$ which defines
a MESSI structure in the network with $p$ intermediate species and $n$ core species as in~\eqref{eq:S}, 
we associate a digraph $G_1=(\Ve_1,\Edg_1)$ with a set of $n$ {species} consisting of the core species in $G$
and with the inherited partition:
\begin{equation}\label{eq:G1}
 \Sp_1= \Sp^{(1)} \bigsqcup \Sp^{(2)} \bigsqcup \dots \bigsqcup \Sp^{(m)} = \Sp \setminus \Sp^{(0)}.
\end{equation}
The vertex set $\Ve_1$ consists of all the core complexes $y_{ij}$
and the edge set  is equal to $\Edg_1=\{y_{ij}\to y_{\ell m}: y_{ij},y_{\ell m} \in \Ve_1  \text{ and }  
y_{ij}\uri y_{\ell m}{\text{ in } G}\}$.
\end{definition}

Note that $G_1$ might have loops.
It is easy to check that partition~\eqref{eq:G1} defines a MESSI structure on $G_1$ for any choice
of positive labels in $\R_{>0}^{\# \Edg_1}$.  

We  now define a chemical reaction network on $G_1$ by decorating the edges $\Edg_1$ with labels $\Katau(\Ka)$, which
are rational functions of the original rate constants $\Ka$, 
following~\cite[Theorem~3.1]{fw13}.

\begin{definition}\label{def:tau}
The map $\Katau:\R_{>0}^{\# \Edg}\to \R_{>0}^{\# \Edg_1}$ is defined as follows. For each $X_i+X_j\uri X_\ell+X_m$ in $G$  
the reaction constant $\tau$ in $G_1$ which gives the label $X_i+X_j\overset{\tau}{\longrightarrow} X_\ell+X_m$ has the form
\begin{equation} \label{eq:tau}
\tau=\kappa+\overset{p}{\underset{k=1}{\sum}}\kappa_k\mu_k,
\end{equation}
where $\kappa\ge 0$ is positive when $X_i+X_j\overset{\kappa}{\longrightarrow} X_\ell+X_m$ in $G$ 
(and $\kappa=0$ otherwise), and $\kappa_k\ge 0$ is positive if $U_k\overset{\kappa_k}{\longrightarrow} X_\ell+X_m$ 
and $X_i+X_j\uri U_k$ in $G$ (and $\kappa_k=0$ otherwise).  The explicit expression of the coefficients 
$\mu_k$ is given in display (15) in the proof of Theorem~3.1 in the electronic supplementary material (ESM) 
of~\cite{fw13}; we will describe them for particular cases of interest to us in Section~\ref{sec:param}.
\end{definition}

It is straightforward to see that  $\Katau$ defines a rational map (that is,$\Q(\tau) \subset \Q(\Ka)$). 
The main property of this assignment is the following.

\begin{remark}\label{rem:tauq}
When we label the edges in $G_1$ with 
the real constants $\Katau(\Ka)\in \R_{>0}^{\# \Edg_1}$, the steady states of the mass-action chemical 
reaction systems defined by $G$ and $G_1$ are in one-to-one correspondence. We refer the reader 
to the proof of Theorem~3 in the ESM of \cite{fw13} and to the more recent article~\cite{MFW16}.
\end{remark}

We now introduce a new associated labeled digraph $G_2$.  

\begin{definition}\label{def:G2}
Consider a chemical reaction network with directed graph $G=(\Ve,\Edg,\Ka)$, 
together with a partition of the set of species $\Sp$ which defines
a MESSI structure in the network, and its associated labeled digraph $G_1=(\Ve_1,\Edg_1, \Katau)$ 
from Definition~\ref{def:G1}.
We first define a labeled multidigraph
where we ``hide'' the concentrations of some of the species in the labels.
 The species set $\Ve_2$ of $G_2=(\Ve_2,\Edg_2, \Katau_x)$ is again equal to
the set of core species $\Sp_1$, with the induced partition. 

The edge set $\Edg_2$ is defined as follows.
We keep all monomolecular reactions $X_i\to X_j$ in $\Edg_1$ and
for each reaction $X_i+X_\ell \overset{\tau}{\longrightarrow} X_j+X_m$ in $\Edg_1$,
with $X_i,X_j \in \Sp^{(\alpha)}$, $X_\ell,X_m \in \Sp^{(\beta)}$,
we consider two reactions $X_i \overset{\tau x_\ell}{\longrightarrow} X_j$
and $X_\ell \overset{\tau x_i}{\longrightarrow} X_m$.
We obtain in principle a multidigraph $MG_2$ that might contain loops or parallel edges 
between any pair of nodes (i.e., directed edges with the same source and target nodes).
We define the digraph $G_2$ by collapsing into one edge all parallel edges in $MG_2$, 
and we define the labels $\Katau_x$ of the edges in $\Edg_2$ as the sum of the labels 
of the corresponding collapsed edges in $MG_2$.

We will moreover denote by $G_2^\circ$ the digraph obtained from the deletion of loops 
and isolated nodes of $G_2$.
\end{definition}

By rules $({\mathcal R}_1)$, $({\mathcal R}_2)$ and $({\mathcal R}_3)$, $G_2$ is a 
\emph{linear} graph (its vertices are labeled by a single species).  The labels on 
the edges of $MG_2$ (and of $G_2$) depend on the rate constants but might also depend 
on the concentrations $x_1, \dots, x_n$. 

\begin{example}[Examples~\ref{ex:ES} and~\ref{ex:S}, continued]
The graphs $G_1$ and $G_2^\circ$ associated to the networks in Examples~\ref{ex:ES} and~\ref{ex:S} are depicted in 
Figure~\ref{fig:Gs}.

\begin{figure}
\centering
 \begin{tabular}{|>{\centering}m{1.3in} >{\centering}m{0.1in}  >{\centering}m{1.3in} >{\centering\arraybackslash}m{1.5in}|}
 \hline
 \multicolumn{1}{|l}{$G_1$:} &  & \multicolumn{1}{l}{$G_2^\circ$:} & \multicolumn{1}{l|}{$G_E$:}\\
 $\begin{array}{l}
  S_0+E \overset{\tau_1}{\rightarrow} S_1+E \\
  S_1+F \overset{\tau_2}{\rightarrow} S_0+F \\
  P_0+S_1 \overset{\tau_3}{\rightarrow} P_1+S_1\\
  P_1+F \overset{\tau_4}{\rightarrow} P_0+F
\end{array}$ & 
\text{\Large \textcolor{blue}{$\Rightarrow$}} & 
$\begin{array}{l}
S_0\underset{\tau_2f}{\overset{\tau_1e}{\rightleftarrows}} S_1\\
P_0\underset{\tau_4f}{\overset{\tau_3s_1}{\rightleftarrows}} P_1
\end{array}$ & 
\begin{tikzpicture}[ampersand replacement=\&] %[description/.style={fill=white,inner sep=2pt}]
\matrix (m) [matrix of math nodes, row sep=1.5em, column sep=1em, text height=1.5ex, text depth=0.25ex]
{ \Sp^{(3)}\&  \Sp^{(1)} \& \Sp^{(2)} \\
\Sp^{(4)} \&  \& \\};
\draw[->]($(m-1-1)+(0.4,0)$) to node[below] (x) {} ($(m-1-2)+(-0.4,0)$);
\draw[->]($(m-2-1)+(0.4,0)$) to node[below] (x) {} ($(m-1-2)+(-0.35,-0.15)$);
\draw[->]($(m-1-2)+(0.4,0)$) to node[below] (x) {} ($(m-1-3)+(-0.4,0)$);
\draw[->]($(m-2-1)+(0.4,0)$) to node[below] (x) {} ($(m-1-3)+(-0.4,-0.1)$);
\end{tikzpicture}\\ 
& & & \\
\multicolumn{1}{|l}{$G_1$:} &  & \multicolumn{1}{l}{$G_2^\circ$:} & \multicolumn{1}{l|}{$G_E$:}\\
\begin{tikzpicture}
 \node[]at(0,0)(a){$\begin{array}{l}
  X \underset{\tau_2}{\overset{\tau_1}{\rightleftarrows}} XT
  \overset{\tau_3}{\rightarrow} X_p\\
  X_p+Y \overset{\tau_4}{\rightarrow} X+Y_p \\
  XT+Y_p \overset{\tau_5}{\rightarrow} XT+Y
  \end{array}$};
\end{tikzpicture} & 
\text{\Large \textcolor{blue}{$\Rightarrow$}} & 
\begin{tikzpicture}[ampersand replacement=\&] %[description/.style={fill=white,inner sep=2pt}]
  \matrix (m) [matrix of math nodes, row sep=2.5em, column sep=0.1em, text height=1.5ex, text depth=0.25ex]
    {X \& \overset{\tau_1}{\underset{\tau_2}{\rightleftarrows}} \& XT \& \overset{\tau_3}{\rightarrow}  \& X_p \\};
  \draw[->](m-1-5) to[in=-45,out=-135] node[below] (x) {\footnotesize $\tau_4y$} (m-1-1);
  \node[]at(0,-2)(a){$Y \underset{\tau_5xt}{\overset{\tau_4x_p}{\rightleftarrows}} Y_p$};
\end{tikzpicture} & 
$\Sp^{(1)}\rightleftarrows \Sp^{(2)}$\\
\hline
\end{tabular}\\

\caption{The graphs $G_1$, $G_2^\circ$ and $G_E$ for the running examples. 
The corresponding sets $\Sp^{(\alpha)}$ can be found 
in Example~\ref{ex:ESS}}\label{fig:Gs}

\end{figure}
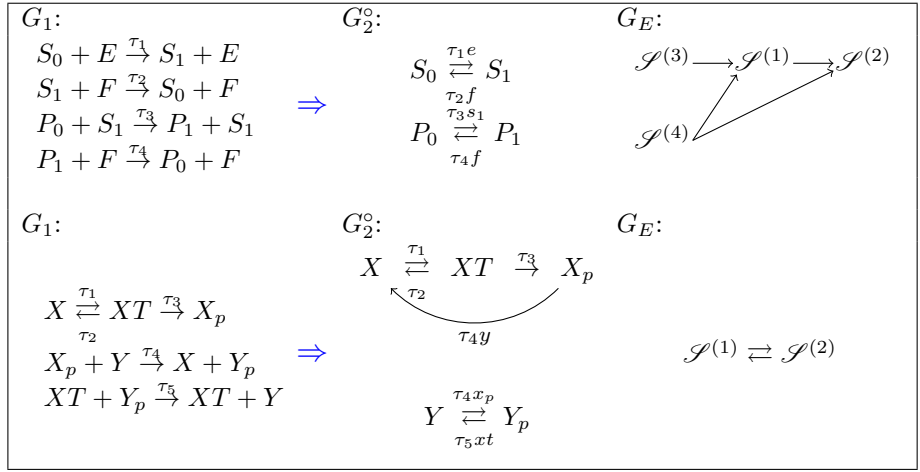
\end{example}

\smallskip

\begin{remark}\label{rem:G2}
We  get the following important fact from the definition of the associated digraphs and networks
for any MESSI network with digraph $G$: 
the networks of the associated digraphs $G_1$ and $G_2$  \emph{determine the same polynomial equations}. 
They moreover define, together with the corresponding equations of the intermediate species, 
the steady states of $G$. We have already observed in Remark~\ref{rem:tauq} that 
the steady states of $G_1$ and $G$ are in one-to-one correspondence.
%this is the case for $G_1$. 
Indeed, if we consider $G_2$ in a mass-action fashion, we can see that the same
terms are added and substracted, obtaining the same equations associated to $G_1$.
However, we cannot recover the dynamical properties of $G_1$ (nor $G$)
from $G_2$ since we admit species (concentrations) as both vertices and edge labels. 
\end{remark}

Note that for each $\alpha\ge1$, if one species of $\Sp^{(\alpha)}$ appears on a vertex
of $G_2$, by $({\mathcal R}_2)$ and $({\mathcal R}_3)$ and the construction of $G_2$, all the species in the vertices of
the corresponding connected component of $G_2$ belong to the \emph{same}
 subset $\Sp^{(\alpha)}$ in the original partition~\eqref{eq:S}. 
In fact, the same partition~\eqref{eq:G1} defines a MESSI structure on $G_2$. Moreover, we have the following.

\begin{lemma}\label{lem:min}
The partition of the set of species $\Sp$ of $G$ in~\eqref{eq:S}
is minimal in the poset of partitions defining a MESSI structure on the network 
if and only if the set of intermediate species is maximal, the connected components 
of $G_2$ are in bijection with the subsets $\Sp^{(\alpha)}$, and the set of nodes 
of the corresponding component equals $\Sp^{(\alpha)}$.
Thus, by considering the connected components in $G_2$ we can refine any partition 
of the species set $\Sp$ to a minimal one defining a MESSI structure on $G$.
\end{lemma}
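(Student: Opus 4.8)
The plan is to prove the biconditional in Lemma~\ref{lem:min} by carefully unwinding the definition of refinement (Definition~\ref{def:part}) together with the construction of $G_2$ (Definition~\ref{def:G2}) and rules $({\mathcal R}_1)$--$({\mathcal R}_3)$. First I would establish the forward direction by contraposition: suppose the partition is \emph{not} minimal, so there is a strictly finer partition defining a MESSI structure. Then either ${\Sp'}^{(0)}\supsetneq \Sp^{(0)}$, in which case the set of intermediates is not maximal and we are done; or $\Sp^{(0)}={\Sp'}^{(0)}$ but some $\Sp^{(\alpha)}$ splits into two or more of the ${\Sp'}^{(\beta)}$'s. In the latter case I would argue that, since every reaction connecting two core complexes is an arrow of $G_1$ and hence (after the construction of $G_2$) gives edges of $G_2$ between species in the \emph{same} $\Sp^{(\alpha)}$, a connected component of $G_2$ can never straddle two distinct blocks of a \emph{finer} MESSI partition; therefore the node set of each connected component of $G_2$ must be contained in a single ${\Sp'}^{(\beta)}$, which forces those node sets to be strictly smaller than some $\Sp^{(\alpha)}$ (or the components not to be in bijection with the blocks). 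This shows the failure of at least one of the three listed conditions.

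For the converse I would assume the three conditions hold --- the intermediate set is maximal, the connected components of $G_2$ are in bijection with the $\Sp^{(\alpha)}$, and the node set of the component corresponding to $\alpha$ equals $\Sp^{(\alpha)}$ --- and show minimality directly. Given any MESSI partition $\Sp={\Sp'}^{(0)}\sqcup \Sp'^{(1)}\sqcup\cdots$ that the original one refines is impossible to be strictly coarser unless it equals the original; so instead I would take an arbitrary MESSI partition $\Sp'$ and show our partition refines it. Maximality of the intermediates gives $\Sp^{(0)}\supseteq {\Sp'}^{(0)}$. For the core blocks: given $\alpha\ge 1$, pick any $X_i\in\Sp^{(\alpha)}$; it lies in some ${\Sp'}^{(\beta)}$, and I must show all of $\Sp^{(\alpha)}\subseteq {\Sp'}^{(\beta)}$. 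Since $\Sp^{(\alpha)}$ is exactly the node set of a connected component of $G_2$, any $X_j\in\Sp^{(\alpha)}$ is joined to $X_i$ by an undirected path of edges of $G_2$, each of which comes from a reaction of $G$ of the form $X_a\uri X_b$ or $X_a+X_c\uri X_b+X_d$ with $X_a,X_b$ in $\Sp^{(\alpha)}$. By rules $({\mathcal R}_2)$ and $({\mathcal R}_3)$ applied to the partition $\Sp'$, each such edge forces its two endpoints into the same ${\Sp'}^{(\gamma)}$, so walking along the path keeps us inside ${\Sp'}^{(\beta)}$; hence $X_j\in{\Sp'}^{(\beta)}$. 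This gives the refinement, so our partition is minimal. The final sentence of the lemma is then immediate: given \emph{any} MESSI partition of $G$, enlarge the intermediate set to its maximum (this is well-defined and still a MESSI structure, using condition $(\cond)$ and Definition~\ref{def:tau}) and then declare the blocks to be the node sets of the connected components of the resulting $G_2$; by the argument just given this is a minimal MESSI partition refining the original.

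The main obstacle I anticipate is the bookkeeping around the intermediate species: I need to know that "maximal intermediate set" is a well-defined notion (i.e., that there is a unique largest choice of $\Sp^{(0)}$ compatible with a MESSI structure, obtained by absorbing into $\Sp^{(0)}$ every species that is forced to be intermediate by rule $({\mathcal R}_1)$ and condition $(\cond)$), and that enlarging the intermediates and then re-running the $G_1$/$G_2$ construction genuinely yields a finer partition rather than breaking the MESSI axioms. I expect this to follow from the observation that $({\mathcal R}_1)$ is the only rule that \emph{demands} a species be intermediate, so a species may be safely moved from $\Sp_1$ to $\Sp^{(0)}$ precisely when it never appears in a bimolecular core complex and has the required incoming and outgoing paths of Condition~$(\cond)$; the set of such species is closed under the operation, giving the maximum. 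The remaining steps --- that edges of $G_2$ respect $({\mathcal R}_2)$, $({\mathcal R}_3)$, and that connected components of $G_2$ never merge blocks of a finer partition --- are essentially the content of the paragraph preceding the lemma in the excerpt, so they can be invoked rather than re-derived.
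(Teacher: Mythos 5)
There is a genuine gap, on two counts. First, your two main paragraphs prove the \emph{same} implication twice: ``suppose the partition is not minimal \dots\ this shows the failure of at least one of the three conditions'' is the contrapositive of ``the three conditions imply minimality,'' which is exactly what your second paragraph then sets out to prove directly. The other implication --- that a \emph{minimal} partition must satisfy the three conditions --- is never established, and that is where the real content lies: one must show that if some $\Sp^{(\alpha)}$ properly contains the node set of a connected component of $G_2$ (or if the intermediate set can be enlarged), then splitting $\Sp^{(\alpha)}$ along the components of $G_2$ still satisfies $(\mathcal N_3)$, $({\mathcal R}_2)$, $({\mathcal R}_3)$, hence yields a strictly finer MESSI partition, contradicting minimality. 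The key observation, which is the heart of the paper's own proof, is that two core species lying in different components of $G_2$ are never linked by any relation $y_{ij}\uri y_{h\ell}$, so the split cannot violate any rule. You only gesture at this in your final paragraph (``by the argument just given''), but the argument just given shows minimality of the refined partition, not that it \emph{is} a MESSI partition.

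Second, your direct proof of ``three conditions $\Rightarrow$ minimal'' attempts to show that your partition refines an \emph{arbitrary} MESSI partition $\Sp'$, i.e.\ that it is a minimum of the poset rather than merely a minimal element. This is false: Example~\ref{ex:noref} exhibits two non-comparable minimal partitions of the same network. The step that breaks is ``by $({\mathcal R}_3)$ applied to $\Sp'$, each edge of $G_2$ forces its two endpoints into the same ${\Sp'}^{(\gamma)}$'': for $X_a+X_c\uri X_b+X_d$, rule $({\mathcal R}_3)$ only forces \emph{one of two} pairings, and a non-comparable $\Sp'$ may use the other one (indeed $G_2$ itself depends on the partition through this choice, so ``the edges of $G_2$'' is not partition-independent). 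The argument is valid only when $\Sp'$ \emph{refines} the given partition, since then the pairing dictated by $({\mathcal R}_3)$ is inherited; in that restricted form it is correct but merely reproduces your first paragraph. So your first paragraph stands (modulo making explicit that a finer partition with the same intermediates induces the same graph $G_2$), but the forward direction of the equivalence and the final ``thus'' clause still require the refinement construction described above.
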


We finally define the associated digraph $G_E$.

\begin{definition}\label{def:GE}
Consider a MESSI network with  directed graph $G$, together with a \emph{minimal} partition of the set of species as in~\eqref{eq:S}. Let $G_2$
 and  $G_2^0$ be as in Definition~\ref{def:G2}.
We define a new digraph $\GE= (\Ve_E, \Edg_E)$. The set of vertices equals $\Ve_E= \{\Sp^{(\alpha)}, \, \alpha \ge 1\}$.
The pair $(\Sp^{(\alpha)},\Sp^{(\beta)})$ lies in $\Edg_E$ when there is a species
in $\Sp^{(\alpha)}$ in a {label} of an edge in $G_2^0$ between (different) species of $\Sp^{(\beta)}$.
\end{definition}

Example~\ref{ex:ESS} below shows the corresponding digraphs $G_E$ for our two 
running examples.

\subsection*{Persistence}\label{sssec:boundary}

\medskip

As MESSI systems are conservative by Theorem~\ref{th:conservations}, we know by Theorem~2 in~\cite{ADLS07} 
that a MESSI system is persistent when there are no \emph{relevant boundary steady states}. This means that 
there are no steady states in the intersection of the boundary $\partial(\R^s_{\ge 0})$ of the nonnegative 
orthant with a stoichiometric compatibility class through a point in $\R^s_{>0}$. Persistence means that any 
trajectory starting from a point with positive coordinates stays at a positive distance from any point in the 
boundary. 

Note that a necessary condition for system~\eqref{CRN} to have a positive steady  state is the existence of 
a positive relation among the vectors $y'-y$, that is, a positive vector $\lambda$ such that 
$\sum_{y \to y'}\lambda_{yy'} (y'-y)=0$. If this is satisfied, we will say that the system is \emph{consistent}.

We give in Theorem~\ref{th:bss} combinatorial conditions which ensure the persistence and consistency of MESSI systems.
This result rules out relevant boundary steady states in many enzymatic examples--for instance, those in~\cite{ADLS07}.

Recall that a digraph is weakly reversible if any connected component is strongly connected, that is,
when for any pair of nodes in the same connected component there is
a directed path joining them. 
We have the following persistence result.

\begin{theorem}\label{th:bss}
Let $G$ be the underlying digraph of a MESSI system. Assume that the associated digraph $G_2$  
is weakly reversible and the associated digraph $\GE$ has no directed cycles. 
Then $G$ has no relevant boundary steady states and so the system is persistent. 
Moreover, the system is consistent.
\end{theorem}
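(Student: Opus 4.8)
The plan is to leverage the structure theorem for persistence in conservative networks from~\cite{ADLS07} (Theorem~2), which reduces persistence to ruling out relevant boundary steady states. Since Theorem~\ref{th:conservations} already guarantees that a MESSI system is conservative, it suffices to show: (i) no relevant boundary steady state exists, and (ii) the system is consistent. For (i), I would argue by contradiction. Suppose $x^\ast \in \partial(\R^s_{\ge 0})$ is a steady state lying in a stoichiometric compatibility class that meets $\R^s_{>0}$. Let $Z = \{ \ell : x^\ast_\ell = 0\}$ be the set of vanishing coordinates; since $x^\ast$ is on the boundary but its compatibility class meets the positive orthant, $Z$ is a nonempty proper subset whose complement is nonempty. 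The key fact (observed in the paragraph preceding Theorem~\ref{th:conservations}) is that each set $\Sp^{(\alpha)} \cup \Sp\intal(\alpha)$ is a siphon; and by Theorem~2 of~\cite{ADLS07}, the support of a relevant boundary steady state must be the complement of a union of such siphons — more precisely, $Z$ contains a siphon. The goal is to show that the weak reversibility of $G_2$ together with the acyclicity of $\GE$ forces $Z$ to be empty (a contradiction), or that it forces all coordinates to vanish (contradicting intersection with the positive orthant).

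The main work is in step (i), and I expect it to proceed as follows. Using Remark~\ref{rem:G2}, the steady state equations of $G$ decompose into the equations coming from $G_2$ (equivalently $G_1$) on the core species, plus the equations expressing intermediate concentrations $u_k$ as positive rational monomials in the core concentrations (this is the content of Proposition~\ref{prop:int} and the elimination of intermediates from~\cite{fw13}). Hence a boundary steady state of $G$ restricts to a boundary steady state of the mass-action system on $G_2^\circ$ (with the labels evaluated at $x^\ast$), and an intermediate $u_k$ vanishes exactly when some core species feeding into it vanishes. So it is enough to show that the $G_2^\circ$ system, being a linear (monomolecular) network that is weakly reversible, has no relevant boundary steady states *within each connected component*, and then to propagate the vanishing across components using the acyclicity of $\GE$. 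Within a single weakly reversible connected component of $G_2^\circ$ — whose node set is some $\Sp^{(\alpha)}$ by Lemma~\ref{lem:min} — a nonzero steady state of a strongly connected linear system has full support (the kernel of a weakly reversible Laplacian is spanned by a strictly positive vector, by the Matrix–Tree theorem / Perron–Frobenius), *provided the edge labels are strictly positive*. The edge labels of $G_2^\circ$ are of the form $\tau \cdot x_i$ with $\tau > 0$, so a label vanishes precisely when the relevant enzyme/partner species $x_i$ vanishes. This is exactly where $\GE$ enters: an edge from $\Sp^{(\beta)}$ to $\Sp^{(\alpha)}$ in $\GE$ records that a species of $\Sp^{(\beta)}$ appears in a label of $G_2^\circ$ between species of $\Sp^{(\alpha)}$. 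If $\GE$ has no directed cycles, we can order the sets $\Sp^{(\alpha)}$ topologically; then an induction along this order shows that if all species in the $\GE$-predecessors of $\Sp^{(\alpha)}$ are positive at $x^\ast$, the labels inside the component $\Sp^{(\alpha)}$ are all positive, whence by weak reversibility of $G_2^\circ$ either all of $\Sp^{(\alpha)}$ is positive at $x^\ast$ or all of it vanishes. The conservation relation $\ell_\alpha(u,x) = \cte_\alpha$ from~\eqref{eq:consalpha}, with $\cte_\alpha > 0$ (because the compatibility class meets the positive orthant), rules out the "all vanish" alternative: $\ell_\alpha$ involves only species in $\Sp^{(\alpha)} \cup \Sp\intal(\alpha)$, and if every $x_i$ in $\Sp^{(\alpha)}$ vanishes then every feeding intermediate in $\intal(\alpha)$ vanishes too, forcing $\cte_\alpha = 0$. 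The base case of the induction — the $\GE$-sources — is handled by the same argument since their components have labels not involving any species outside themselves (monomolecular labels), so weak reversibility alone gives full support. Thus $x^\ast \in \R^s_{>0}$, contradicting $x^\ast \in \partial(\R^s_{\ge 0})$.

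For step (ii), consistency, I would observe that a weakly reversible network is always consistent: for each connected component one takes the strictly positive vector in the kernel of its Laplacian (again Matrix–Tree), and these patch together to a strictly positive vector $\lambda$ with $\sum_{y\to y'}\lambda_{yy'}(y'-y) = 0$. Applying this to the $G_2$ network gives consistency for the core-species equations; the intermediate reactions are then handled by the explicit elimination (each intermediate is produced and consumed along the weakly reversible structure), so the full network $G$ is consistent. Alternatively — and perhaps more cleanly — one notes that the existence of a positive steady state, which follows from the persistence argument together with the Brouwer-type fixed point / degree argument standard for conservative persistent systems on compact compatibility classes, immediately implies consistency by the remark in the text that a positive steady state forces a positive relation among the $y'-y$.

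The step I expect to be the main obstacle is the *propagation of positivity across the components of $G_2^\circ$ using $\GE$-acyclicity*: one must carefully check that the labels appearing inside a component $\Sp^{(\alpha)}$ of $G_2^\circ$ really only involve species from $\GE$-predecessors of $\Sp^{(\alpha)}$ (this uses rules $(\mathcal R_2)$, $(\mathcal R_3)$ and the construction of $G_2$ and $\GE$), and that the weak reversibility of $G_2$ is inherited by the "evaluated" linear system after substituting $x^\ast$ into the labels — i.e., that no label that would be needed for strong connectivity of a component is forced to vanish. This interplay between the combinatorics of $\GE$ and the siphon/conservation structure is the heart of the proof; everything else is standard linear-algebra facts about weakly reversible linear networks plus the elimination-of-intermediates machinery already cited from~\cite{fw13}.
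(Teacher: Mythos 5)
Your proposal is correct and follows essentially the same route as the paper: reduce to ruling out boundary steady states via conservativity and \cite{ADLS07}, eliminate intermediates so that vanishing propagates to core species, use the strictly positive Matrix--Tree generator $\rho(H_\alpha)$ of each strongly connected component of $G_2$ to get the ``all positive or all zero'' dichotomy, order the classes $\Sp^{(\alpha)}$ by the acyclic structure of $\GE$, and invoke the conservation relations $\ell_\alpha=\cte_\alpha>0$ to exclude the all-zero case, with consistency obtained from a Brouwer-type fixed-point argument. The only (immaterial) difference is that you run a forward induction along a topological order of $\GE$, whereas the paper picks a minimal level containing a vanishing core species and derives the contradiction there.
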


\begin{remark}\label{rem:m}
The absence of directed cycles in $\GE$ precludes the existence of swaps. 
On the other side, note that if $G_2$ is weakly reversible, then the stoichiometric 
and the kinetic subspaces coincide by Proposition~\ref{prop:kinstoi}. 
\end{remark}

\begin{example}[Examples~\ref{ex:ES} and~\ref{ex:S}, continued]\label{ex:ESS}
The MESSI network in Example~\ref{ex:ES} from Figure~\ref{fig:sadi} (A)  (with partition
$\Sp^{(1)}=\{S_0,S_1\}$,  $\Sp^{(2)}=\{P_0,P_1\}$,   $\Sp^{(3)}=\{E\}$,  $\Sp^{(4)}=\{F\}$) 
is persistent since there are no directed cycles in $G_E$ 
(depicted at the upper right in Figure~\ref{fig:Gs}).
However, this is not the case in Example~\ref{ex:S} from Figure~\ref{fig:4}(D); 
$x_p=X_{tot}, y_p=Y_{tot}, x=xt=x_py=xty_p=y=0$
is a boundary steady state in the stoichiometric compatibility class defined by $X_{tot}, Y_{tot}$. Recall
that we are considering the (minimal) partition $\Sp^{(1)}=\{X,XT,X_p\}$,   $\Sp^{(2)}=\{Y,Y_p\}$. 
The associated graph $\GE$ has a cycle (depicted at the lower right in Figure~\ref{fig:Gs}).
\end{example}

%%===============================
%%\section*{Toric MESSI systems and Multistationarity}\label{sec:multistationarity}
%%===============================

\section{Parametrizing the steady states} \label{sec:param}%

A wide class of MESSI systems admits a rational parametrization. 
As we recalled in Remark~\ref{rem:G2}, it is shown  in~\cite{fw13} that 
the values of the intermediate species at steady state can be rationally 
written in terms of the core species in an algorithmic way. 
The following result (with the same assumptions 
as Theorem~\ref{th:bss}) extends Theorem~4 in \cite{TG09}.

\begin{theorem}\label{th:par}
Let $G$ be the underlying digraph of a MESSI system.
Assume that the associated digraph $G_2$  is weakly reversible and the associated digraph $\GE$ has no directed cycles. 
Then, $V_f \cap \R_{>0}^s$ admits a rational parametrization, which can be algorithmically computed. More explicitly, 
it is possible to define levels for the subsets $\Sp^{(\alpha)}, \alpha \ge 1,$ according to indegree. 
Then, given any choice of one index $i_\alpha$ in each $\Sp^{(\alpha)}$, the concentration of any core species $x_i$ in 
a subset  $\Sp^{(\beta)}$ can be rationally expressed in an effective way in terms of $x_{i_\beta}$ and the variables 
$x_{i_\alpha}$ for which the indegree of $\Sp^{(\alpha)}$ is strictly smaller than the indegree of $\Sp^{(\beta)}$.

Moreover, if the partition is minimal with $m$ subsets of core species,  the dimension of $V_f \cap \R_{>0}^s$ 
equals $m$ and $m=\dim (\stoich^\bot)$.
\end{theorem}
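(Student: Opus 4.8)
The plan is to combine the reduction to $G_1$ (and hence to $G_2$) from Remark~\ref{rem:G2} with an inductive elimination of variables driven by the structure of $\GE$. First I would observe that, by Remark~\ref{rem:tauq} and Remark~\ref{rem:G2}, it suffices to parametrize the positive steady states of the network associated to $G_2$ together with the intermediate equations: the intermediate concentrations $u_k$ are already rational (indeed polynomial, with the $\mu_k$ of Definition~\ref{def:tau}) in the core variables by \cite[Thm.~3.1]{fw13}, so the task reduces to parametrizing the positive zero set of the core equations coming from $G_2$. Since $G_2$ is a \emph{linear} graph whose connected components are in bijection with the $\Sp^{(\alpha)}$ (Lemma~\ref{lem:min}, using minimality) and, by hypothesis, each component is strongly connected (weak reversibility), within a single component $\Sp^{(\alpha)}$ the core equations are exactly the kernel conditions of a Laplacian-type matrix whose off-diagonal entries are the labels $\Katau_x$. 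By the Matrix--Tree theorem / Laplacian dynamics (as in \cite{MiGu13,TG09}), the positive solution ray inside $\Sp^{(\alpha)}$ is one-dimensional and each $x_i$ with $X_i\in\Sp^{(\alpha)}$ equals $x_{i_\alpha}$ times a ratio of sums of spanning-tree products of the edge labels of that component.

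The second step handles the fact that those edge labels are themselves monomials $\tau\, x_j$ in concentrations of \emph{other} components, which is precisely what $\GE$ records: an edge $\Sp^{(\alpha)}\to\Sp^{(\beta)}$ in $\GE$ means a species of $\Sp^{(\alpha)}$ appears in a label of $G_2^\circ$ between species of $\Sp^{(\beta)}$. Because $\GE$ has no directed cycles, it is a DAG, so I can define the \emph{level} of $\Sp^{(\beta)}$ as its indegree-based rank (equivalently, the longest directed path ending at $\Sp^{(\beta)}$); sources have level $0$. I would then do induction on the level: for a level-$0$ component no label involves other species, so the spanning-tree ratios are rational functions of $\Ka$ alone and each $x_i$ in it is (that rational function) times $x_{i_\alpha}$. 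For a component $\Sp^{(\beta)}$ of level $\ell$, every label on its edges involves only species from strictly lower-level components (by acyclicity of $\GE$ those are the only ones that can point into $\Sp^{(\beta)}$), and by the inductive hypothesis those are already rational in the chosen variables $x_{i_\alpha}$ with $\mathrm{level}(\Sp^{(\alpha)})<\ell$; substituting, the spanning-tree ratios for $\Sp^{(\beta)}$ become rational functions of $\Ka$ and those same $x_{i_\alpha}$, and multiplying by $x_{i_\beta}$ gives the claimed expression for each $x_i\in\Sp^{(\beta)}$. Carrying this through all levels expresses every core variable, hence (via the $\mu_k$) every intermediate variable, rationally in terms of the $m$ chosen variables $x_{i_\alpha}$, and the whole construction is algorithmic since the spanning-tree sums and the $\mu_k$ are explicit.

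For the dimension statement, having exhibited a rational parametrization $\R_{>0}^{m}\dashrightarrow V_f\cap\R_{>0}^s$ whose image is (an open dense subset of) $V_f\cap\R_{>0}^s$ and which is generically injective (the $m$ coordinates $x_{i_\alpha}$ are free and recover themselves), $\dim(V_f\cap\R_{>0}^s)=m$. The equality $m=\dim(\stoich^\bot)$ is then immediate from Theorem~\ref{th:conservations}: under the present hypotheses there are no swaps (Remark~\ref{rem:m}) and, $G_2$ being weakly reversible, the kinetic and stoichiometric subspaces coincide (Proposition~\ref{prop:kinstoi} via Remark~\ref{rem:m}), so with the minimal partition $\dim(\stoich^\bot)=m$.

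The main obstacle I anticipate is the bookkeeping in the inductive step—specifically, verifying cleanly that \emph{every} label appearing on edges within $\Sp^{(\beta)}$ involves only species of components at strictly lower $\GE$-level (this is where acyclicity of $\GE$ and the precise definition of the $G_2$ labels must be used carefully), and checking that the resulting denominators are generically nonzero so that the parametrization is genuinely defined on a dense set of positive parameter values. The Laplacian/spanning-tree input and the reduction to $G_2$ are standard given the cited results; the delicate part is threading the level structure of $\GE$ through the labels of $G_2$ so that the elimination really is triangular.
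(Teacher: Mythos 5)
Your proposal is correct and follows essentially the same route as the paper's proof: the paper also uses the level sets induced by the acyclic $\GE$ (the sets $L_k$ from the proof of Theorem~\ref{th:bss}), the Matrix--Tree generator $\rho(H_\alpha)$ of the Laplacian kernel of each strongly connected component $H_\alpha$ of $G_2$ to write $x_i=\rho(H_\alpha)_i\,x_{i_\alpha}/\rho(H_\alpha)_{i_\alpha}$, the triangular substitution through the levels, and the same dimension argument via Theorem~\ref{th:conservations} and Remark~\ref{rem:m}. The one obstacle you flag (nonvanishing denominators) is resolved exactly as in the proof of Theorem~\ref{th:bss}: strong connectivity guarantees at least one $i_\alpha$-tree, and all its labels are positive on the positive orthant, so $\rho(H_\alpha)_{i_\alpha}>0$ there.
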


Recall that a binomial is a polynomial with two terms and that a Laurent monomial is a monomial 
with integer exponents, which can be negative.
\begin{definition}\label{def:toric}
 A \emph{toric MESSI system} is a MESSI system whose positive steady states $V_f \cap \R_{>0}^s$ 
 can be described  with \emph{binomials}.
\end{definition}

It is well known that the real positive points of a nonempty algebraic variety described by binomials 
can always be parametrized by Laurent monomials.  This implies that if the MESSI system is toric, 
there exists a rational parametrization even if $\GE$ has directed cycles, as long as the system 
is consistent.

We now show that many common MESSI systems are toric in an explicit way coming
from the structure of the network, which we call s-toric. 

In order to define s-toric MESSI systems, we need to use some concepts from graph theory. 
A spanning tree of a digraph is a subgraph that contains all the vertices and is connected and acyclic
as an undirected graph. An $i$-tree of a graph is a spanning tree where the $i$th vertex
is its unique sink (equivalently, it is the only vertex of the tree with no edges leaving from it).
For an $i$-tree $T$, call $c^{T}$ the product of the labels of all the edges of $T$.
For the associated graph $G_2$ of a MESSI network $G$, the products $c^T$ are
monomials depending in principle on both the  rate constants $\tau$ and the $x$-variables.

\begin{definition}\label{def:storic}
 A \emph{structurally toric}, or \emph{s-toric MESSI system}, is a MESSI system whose digraph $G$ 
 satisfies the following conditions:
 \begin{itemize}
 \item[$(\cond')$] Condition $(\cond)$ holds, and moreover, for every intermediate complex $y_k$ there exists 
 a \emph{unique} core complex $y_{ij}$  such that $y_{ij}\uri y_k$ in $G$. 
  
  \item[$(\cond'')$] The associated multidigraph $MG_2$ does not have parallel edges, and the
 digraph $G_2$ is weakly reversible.
 
  \item[$(\cond''')$] For each $i \in \{1, \dots, n\}$ and any choice of $i$-trees $T,T'$ of $G_2^\circ$,
 the quotient $c^T/c^{T'}$ only depends on the rate constants $\tau$.
 \end{itemize}
\end{definition}

Examples of networks satisfying condition $(\cond''')$  are the phosphorylation cascades, as there is a unique
$i$-tree for each $i$. Our second running Example~\ref{ex:S} also has this property (see 
Example~\ref{ex:Sregulated}). 
Moreover, phosphorylation cascades, the multisite sequential distributive phosphorylation system,
the multisite processive phosphorylation system, and the bacterial EnvZ/OmpR network depicted in Figure~\ref{fig:4}
are s-toric MESSI systems.

\begin{example}[Running Example~\ref{ex:S}, continued]\label{ex:Sregulated}
For the system in Example~\ref{ex:S}, 
the graph $G_2^\circ$ is:
\begin{center}
 \begin{tabular}{ll}
   \multirow{2}{*}{\begin{tikzpicture}[ampersand replacement=\&] %[description/.style={fill=white,inner sep=2pt}]
   \matrix (m) [matrix of math nodes, row sep=2.5em, column sep=0.1em, text height=1.5ex, text depth=0.25ex]
      {X \& \overset{\tau_1}{\underset{\tau_2}{\rightleftarrows}} \& XT \& \overset{\tau_3}{\rightarrow}  \& X_p \\};
   \draw[->](m-1-5) to[in=-45,out=-135] node[below] (x) {\footnotesize $\tau_4y$} (m-1-1);
   \end{tikzpicture}} &
   $Y \underset{\tau_5x\scriptscriptstyle{T}}{\overset{\tau_4x_p}{\rightleftarrows}} Y_p.$\\
   & \\
 \end{tabular}
\end{center}
In this case, there are two $X$-trees:
\begin{center}
  \begin{tabular}{ll}
    \begin{tikzpicture}[ampersand replacement=\&] %[description/.style={fill=white,inner sep=2pt}]
     \matrix (m) [matrix of math nodes, row sep=2.5em, column sep=0.1em, text height=1.5ex, text depth=0.25ex]
      {T_1:  \& X \& \underset{\tau_2}{\leftarrow} \& XT \&   \& X_p \\};
     \draw[->](m-1-6) to[in=-45,out=-135] node[below] (x) {\footnotesize $\tau_4y$} (m-1-2);
     \end{tikzpicture} &
    \begin{tikzpicture}[ampersand replacement=\&] %[description/.style={fill=white,inner sep=2pt}]
     \matrix (m) [matrix of math nodes, row sep=2.5em, column sep=0.1em, text height=1.5ex, text depth=0.25ex]
      {T_2:  \& X \&  \& XT \& \overset{\tau_3}{\rightarrow}  \& X_p. \\};
     \draw[->](m-1-6) to[in=-45,out=-135] node[below] (x) {\footnotesize $\tau_4y$} (m-1-2);
     \end{tikzpicture}
 \end{tabular}
\end{center}
\noindent However,  $c^{T_1}=\tau_2\tau_4y$,  $c^{T_2}=\tau_3\tau_4y$, and  $c^{T_1}/c^{T_2}=\tau_2/\tau_3$, 
which only depends on the rate constants $\tau_i$.
For the other vertices, the corresponding tree is unique, and therefore this MESSI network is s-toric.
\end{example}

We now clarify the meaning of condition $(\cond')$.

\begin{example}\label{ex:condCprime}
Network (A) on the left of Figure~\ref{fig:c'} satisfies condition~$(\cond')$, 
while network (B) on the right does not since both core complexes 
$X_1$ and $X_2$ react via intermediates to the intermediate complex $U_2$.
{\small
\begin{figure}[ht]
\begin{center}
\begin{tabular}{ccccc}
 & \multirow{3}{*}{
\begin{tikzpicture}
\matrix (m) [matrix of math nodes, row sep=0.3em,ampersand replacement=\&,column sep=1em]
{
\& U_1\& X_3 \\
X_1\& \& X_2 \\
\& U_2 \& \\
};
\path[-stealth]
(m-2-1) edge (m-1-2)
(m-2-1) edge (m-3-2)
(m-1-2) edge (m-2-3)
(m-1-2) edge (m-1-3)
(m-3-2) edge (m-2-3);
\end{tikzpicture}
}    & & & \\
{\footnotesize (A)}& & & {\footnotesize (B)} & $X_1\rightleftarrows U_1 \rightarrow U_2 \rightleftarrows X_2.$\\
 & & & & \\
 & & & &
\end{tabular}
\end{center}
\caption{Validity of condition $(\cond')$}\label{fig:c'}
\end{figure}
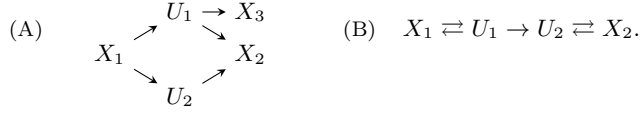}
\end{example}

We will use the following notation.

\begin{notation}\label{not:phi}
Given an intermediate complex $y_k$ of an s-toric MESSI system, denote by $y_{ij}$ the unique 
core complex reacting through intermediates to  $y_k$ and denote by $\bx^{\varphi(k)}$ the monomial
\begin{equation}\label{eq:phi}
 \bx^{\varphi(k)}=\left\lbrace \begin{array}{l c l}
                           x_{i}x_{j} & \text{if} & y_{ij} =X_{i}+ X_{j}\\
                           x_{i} & \text{if} &  j=0 \text{ and } \, y_{ij}= X_{i}.
                          \end{array}\right.
\end{equation}
\end{notation}

As we recalled in Remark~\ref{rem:tauq},  the rational map 
$\Katau:\R_{>0}^{\# \Edg}\to \R_{>0}^{\# \Edg_1}$ in Definition~\ref{def:tau} verifies 
that the steady states of the mass-action chemical reaction systems defined by $G$ with rate 
constants $\Ka$ and $G_1$ with rate constants $\Katau(\Ka)$ are in one-to-one correspondence 
via the projection $\pi({u},\bx)=\bx$. 
%This is detailed in Remark~\ref{rem:tau} in the Appendix.
We now give conditions for the inverse of this projection to be a monomial map in the 
concentrations of the core species.

\begin{proposition}\label{prop:int}
Given a MESSI network $G$ that satisfies condition ($\cond'$) in Definition~\ref{def:storic},  
there are (explicit) rational functions $\mu_k\in\Q(\Ka), 1\le k\le p$,  such that for any 
steady state $\bx \in \R^n_{>0}$ of the associated MESSI network $G_1$, the steady 
state $\pi^{-1}(\bx) = ({u}(\bx), \bx)$ of $G$ is given by the monomial map:
\begin{equation}\label{eq:monom}
 u_k(\bx) \, = \, \mu_k \, \bx^{\varphi(k)}, \quad k=1,\dots, p.
\end{equation}
The rational functions $\mu_k$ are in simple cases the usual Michaelis--Menten constants associated 
with the original rate constants $\Ka$.
\end{proposition}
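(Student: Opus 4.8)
The plan is to track what condition $(\cond')$ buys us at the level of the intermediate species' steady-state equations and to show that it forces the solution to be a single monomial rather than a sum of monomials. Recall from \cite{fw13} (see Remark~\ref{rem:tauq}) that at steady state the values $u_k$ of the intermediate species are rational functions of the core concentrations $\bx$; concretely, solving the linear system $\dot u = 0$ in the $u$-variables (the intermediates appear linearly in these equations, since every intermediate complex is monomolecular by $(\mathcal N_1)$) gives $u_k(\bx) = \sum_{j} \mu_{kj}\, \bx^{\varphi_j}$, where the sum runs over the core complexes $y_{i_jj_j}$ that react via intermediates into $y_k$, and $\bx^{\varphi_j}$ is the corresponding mono- or bimolecular monomial. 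First I would write down this linear subsystem explicitly: group the intermediate complexes lying on directed paths from core complexes, observe by condition $(\cond)$ that every intermediate sits between core complexes, and note that the matrix of the $u$-subsystem is (a block of) a Laplacian-type matrix whose inhomogeneous term is governed by the reactions entering the intermediate region from core complexes $y_{ij}$, carrying the monomial $\bx^{\varphi(k)}$ as in~\eqref{eq:phi}.

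The key step is then to invoke condition $(\cond')$: for every intermediate complex $y_k$ there is a \emph{unique} core complex $y_{ij}$ with $y_{ij}\uri y_k$. This means the "input" to the intermediate region feeding $y_k$ comes from a single core complex, hence carries a single monomial $\bx^{\varphi(k)}$; moreover, if $y_{k'}\to y_k$ is an internal edge among intermediates, then the unique core complex feeding $y_{k'}$ must coincide with the one feeding $y_k$ (any core complex reaching $y_{k'}$ also reaches $y_k$, and uniqueness forces equality), so $\varphi(k')=\varphi(k)$. Consequently the whole connected "cluster" of intermediate complexes reachable from a given core complex $y_{ij}$ shares the same monomial $\bx^{\varphi(k)}$, and one can factor this monomial out of the entire linear subsystem for that cluster. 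Solving the resulting purely numerical linear system (Cramer's rule, or the Matrix-Tree theorem applied to the Laplacian of the subgraph of intermediate reactions rooted appropriately, exactly as in \cite{TG09, fw13}) yields coefficients $\mu_k\in\Q(\Ka)$ independent of $\bx$, giving $u_k(\bx)=\mu_k\,\bx^{\varphi(k)}$ as claimed. The identification of $\mu_k$ with Michaelis--Menten constants in the simplest motifs (e.g.\ $S_0+E\rightleftarrows ES_0\to S_1+E$, where $\mu = \kappa_1/(\kappa_2+\kappa_3)$) is then a direct substitution in display~(15) of the ESM of~\cite{fw13}.

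The main obstacle, and the point that needs the most care, is the bookkeeping that shows condition $(\cond')$ genuinely collapses the sum $\sum_j \mu_{kj}\bx^{\varphi_j}$ to a single term: one must verify that no intermediate complex receives flow (directly or through other intermediates) from two distinct core complexes carrying different monomials, and Example~\ref{ex:condCprime} (network (B)) is precisely the cautionary case where $(\cond')$ fails and two monomials $\bx^{\varphi}$ genuinely appear. The cleanest way to organize this is to partition $\Sp^{(0)}$ into the classes $\intal$-style clusters indexed by the core complex feeding them, show these clusters are well-defined and disjoint using $(\cond')$ together with $(\cond)$, and then solve the Laplacian subsystem cluster by cluster; within each cluster the inhomogeneous vector is a scalar (the rate out of $y_{ij}$) times $\bx^{\varphi(k)}$, so homogeneity of degree one in the "source monomial" is manifest and the $\bx$-independence of each $\mu_k$ follows. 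Everything else is the standard intermediate-elimination computation of \cite{fw13}, now specialized so that the output is monomial rather than merely rational.
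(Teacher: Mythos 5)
Your proposal is correct and rests on the same two pillars as the paper's proof: the linearity of the steady-state subsystem in the intermediate concentrations, and the observation that condition $(\cond')$ forces every monomial entering that subsystem (directly or propagated along intermediate-to-intermediate edges) to equal the single monomial $\bx^{\varphi(k)}$. Your argument that an internal edge $y_{k'}\to y_k$ forces the feeding core complexes of $y_{k'}$ and $y_k$ to coincide is exactly the right use of uniqueness, and it is what makes your cluster decomposition well defined and the cluster subsystems decoupled. The paper organizes the computation differently: instead of partitioning $\Sp^{(0)}$ into clusters and solving an inhomogeneous numerical system per cluster, it collapses \emph{all} core complexes into a single auxiliary node $*$, obtaining a linear digraph $\widehat{G}$ on $\Sp^{(0)}\cup\{*\}$ that is strongly connected by $(\cond)$; the Matrix--Tree formula~\eqref{eq:rhoi} then gives $u_k=\rho_k/\rho$ globally, and $(\cond')$ is invoked to show that every $k$-tree of $\widehat{G}$ contains exactly one $x$-dependent label (the edge leaving $*$, which must carry $x_{i_k}x_{j_k}$), while every $*$-tree has labels in $\Q[\Ka]$ only. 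The two routes are interchangeable; the paper's single-graph version yields a common denominator $\rho\in\Q[\Ka]$ for all the $u_k$ at once, whereas your cluster-by-cluster elimination makes more transparent precisely why $(\cond')$ is the sharp hypothesis (and why network (B) of Example~\ref{ex:condCprime} fails). The only point you should make explicit is the invertibility of each cluster's Laplacian block, which follows from condition $(\cond)$ guaranteeing that every intermediate drains to a core complex; your appeal to the rooted Matrix--Tree theorem as in \cite{fw13} covers this.
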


%The first part of the following statement is detailed in the Appendix. 

It holds that an s-toric MESSI system  is toric and, moreover, its 
positive steady states can be described by explicit binomials.

\begin{theorem}\label{th:toric_toric}
 Any  s-toric MESSI system is toric. Moreover, we can choose $s-m'$ explicit binomials with coefficients in
$\Q(\Ka)$ which describe the positive steady states, where $m'$ is the number of connected components of $G_2$.
 
In particular, given a MESSI network $G$ with a partition of the set of species as in~\eqref{eq:S},
assume that for each $\alpha\ge1$ and $X_i \neq X_j\in\Sp^{(\alpha)}$ in the same
connected component of $G_2$ there exists a unique simple path $P_{ji}$ in $G_2^\circ$ from 
$X_j$ to $X_i$.\footnote{A \emph{simple path} is a path that visits each vertex exactly once.} 
Then, the associated dynamical system is s-toric and there exist explicit $\mu_k$ and $\eta_{ij}$ 
in $\Q(\Ka)$ such the $s-m'$ binomials describing the positive steady states can be chosen from the following:
 
\begin{minipage}{0.7\textwidth}
 \begin{align}
\label{eq:binom_w_3} & u_k-\mu_k \bx^{\varphi(k)}=0\\
\nonumber & \text{ for each intermediate }U_k  \, (1\leq k\leq p),\\
\label{eq:binom_w_4} & x_hx_i-\eta_{ij} x_m x_j=0 \\
\nonumber & \text{ if } X_i\overset{\tau x_h}{\longrightarrow}X_j
  \text{ is in }G_2^\circ \text{ and } X_j\overset{\tau'x_m}{\longrightarrow}X_\ell \text{ is in } P_{ji}.
\end{align}
\end{minipage}
\begin{minipage}{0.25\textwidth}
%  \begin{figure}
{\footnotesize
\begin{tikzpicture}
  \matrix (m) [matrix of math nodes, row sep=0.5em,ampersand replacement=\&,
    column sep=0.5em]{
    X_i \& \overset{\tau x_h}{\longrightarrow} \& X_j \\
    \& \& \\
     \& \& X_\ell\\};
     \draw[->,dashed](m-3-3) to[in=-55,out=190] node[below] (x) {} (m-1-1);
     \draw[->](m-1-3) to node[right] (x) {\tiny $\tau' x_m$} (m-3-3);
 \end{tikzpicture}
 }
% \caption{Path}\label{fig:path}
% \end{figure}
\end{minipage}
\end{theorem}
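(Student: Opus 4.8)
The plan is to push the whole computation onto the associated monomolecular network $G_2$ and then invoke the Matrix--Tree theorem. Since an s-toric system satisfies $(\cond')$, Proposition~\ref{prop:int} provides explicit $\mu_k\in\Q(\Ka)$ with $u_k=\mu_k\bx^{\varphi(k)}$ along every positive steady state; these are the $p$ binomials~\eqref{eq:binom_w_3}, and the monomial map $\bx\mapsto(u(\bx),\bx)$ they define identifies, by Remark~\ref{rem:tauq} and Remark~\ref{rem:G2}, the variety $V_f\cap\R^s_{>0}$ with the set of positive steady states of the linear network $G_2$ read in the core variables. So it remains to cut the latter out by $n-m'$ binomials in $\Q(\Ka)$, where $n=\#\Sp_1$ and $m'$ is the number of connected components of $G_2$; then $p+(n-m')=s-m'$, as claimed.

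By $(\cond'')$, $G_2$ is weakly reversible, so each connected component $C$ is strongly connected and linear; fixing positive values of the variables occurring in its labels, the associated Laplacian $L_C$ has a one-dimensional kernel spanned by the positive vector $\rho^C=(\rho_i)_{i\in C}$ with $\rho_i=\sum_T c^T$, the sum over $i$-trees $T$ of $C$ of the products of their edge labels (Tutte's Matrix--Tree theorem; see~\cite{MiGu13}). Since $(\cond'')$ forbids parallel edges in $MG_2$, each $c^T$ is a single monomial in the $\tau$'s and the $\bx$'s, and $(\cond''')$ says exactly that all $i$-trees share the same $\bx$-part; hence $\rho_i=\sigma_i(\tau)\,\bx^{\psi(i)}$ with $\sigma_i\in\Q(\tau)\subseteq\Q(\Ka)$ positive. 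Fixing a reference vertex $i_0$ in each component, the positive steady states in $C$ are precisely the points satisfying $x_i\rho_{i_0}=x_{i_0}\rho_i$ for all $i\in C$ (one direction because $\ker L_C$ is the ray, the other because $\rho_{i_0}>0$ on $\R^s_{>0}$); substituting the factorization turns these $\#C-1$ relations into binomials $\sigma_{i_0}x_i\bx^{\psi(i_0)}-\sigma_i x_{i_0}\bx^{\psi(i)}$. Together with~\eqref{eq:binom_w_3} this is a set of $s-m'$ binomials; conversely a positive point annihilating them satisfies $x_i/x_{i_0}=\rho_i/\rho_{i_0}$ on every component, hence lies in every $\ker L_C$ and, by the reduction above, is a steady state. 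So these binomials describe $V_f\cap\R^s_{>0}$, and the s-toric system is toric.

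For the ``in particular'' part, under the path hypothesis the existence of $P_{ji}$ in each component of $G_2^\circ$ makes that component strongly connected, so $G_2$ is weakly reversible; and uniqueness of simple paths forces a \emph{unique} $i$-tree for each $i$, since the tree must contain, for every vertex $v$, the unique path $v\rightsquigarrow X_i$, and the union of these paths is already a spanning tree with sink $X_i$. Thus $(\cond''')$ holds trivially, and together with $(\cond')$ this makes $G$ s-toric. For the explicit shape~\eqref{eq:binom_w_4}, fix an edge $e=(X_i\overset{\tau x_h}{\longrightarrow}X_j)$ of $G_2^\circ$: being a one-edge simple path, $e$ \emph{is} the unique path $X_i\rightsquigarrow X_j$, so $e$ is the edge leaving $X_i$ in the tree $T_j$; a short case analysis with uniqueness of simple paths shows that every vertex $v\neq X_i,X_j$ has the same outgoing edge in $T_i$ and in $T_j$, while the edge leaving $X_j$ in $T_i$ is the first edge $e'=(X_j\overset{\tau'x_m}{\longrightarrow}X_\ell)$ of $P_{ji}$. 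Hence $T_i=(T_j\setminus\{e\})\cup\{e'\}$, so $\ell_e\rho_i=\ell_{e'}\rho_j$, i.e.\ $\tau x_h\rho_i=\tau'x_m\rho_j$; combined with $x_i/x_j=\rho_i/\rho_j$ at steady state this becomes $x_hx_i-\eta_{ij}x_mx_j=0$ with $\eta_{ij}=\tau'/\tau\in\Q(\Ka)$. Choosing in each component the $\#C-1$ edges of a spanning tree — each oriented so as to be an edge actually present in $G_2^\circ$, possible by strong connectivity — gives $n-m'$ binomials of type~\eqref{eq:binom_w_4} along which the identity $\ell_e\rho_a=\ell_{e'}\rho_b$ telescopes to $x_i/x_{i_0}=\rho_i/\rho_{i_0}$, and one concludes as before that these together with~\eqref{eq:binom_w_3} describe $V_f\cap\R^s_{>0}$.

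The main obstacle is the combinatorial identity $T_i=(T_j\setminus\{e\})\cup\{e'\}$: one must rule out, using only uniqueness of simple paths, that $T_i$ and $T_j$ disagree at some vertex $v$ off $\{X_i,X_j\}$ — the point being that the unique paths $v\rightsquigarrow X_i$ and $v\rightsquigarrow X_j$ are forced to contain each other's endpoint. The remaining delicate steps are the $(\cond''')$-factorization $\rho_i=\sigma_i(\tau)\bx^{\psi(i)}$, the converse direction (a positive common zero of the chosen $s-m'$ binomials is a genuine steady state, via reconstructing $x\propto\rho^C$ by telescoping along a spanning tree), and the bookkeeping that the count is exactly $s-m'$.
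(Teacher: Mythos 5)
Your proposal is correct and follows essentially the same route as the paper: Proposition~\ref{prop:int} handles the intermediates, the tree-sum generator $\rho$ of the Laplacian kernel on each strongly connected component of $G_2$ (with $(\cond''')$ used to factor out its common monomial part) yields the core binomials and the count $s-m'$, and the tree-swap identity $T_i=(T_j\setminus\{e\})\cup\{e'\}$ — which is exactly the paper's Lemma~\ref{lem:unique_tree}, proved there by showing the modified tree is acyclic via uniqueness of simple paths — gives the explicit form~\eqref{eq:binom_w_4}. You correctly isolate that lemma as the one nontrivial combinatorial step, and your added care about the converse inclusion and the spanning-tree selection of $n-m'$ edge-binomials only makes explicit what the paper leaves implicit.
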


\smallskip

\begin{example}[Running Example~\ref{ex:ES}, continued]\label{ex:regulated}
Recall that the graph $G_2^\circ$ for the cascade in Example~\ref{ex:ES} is
\[S_0\overset{\tau_1e}{\underset{\tau_2f}{\rightleftarrows}}S_1
 \quad P_0\overset{\tau_3s_1}{\underset{\tau_4f}{\rightleftarrows}}P_1,\]
and the graph $G_2$ has two extra connected components, corresponding to the isolated nodes $E$ and $F$.
Clearly, for each vertex in $G_2^\circ$ there is only one simple directed path from 
the other vertex in the same connected component.
For example, the only $S_1$-tree, $T$, is $S_0\overset{\tau_1e}{\rightarrow}S_1$ and
$c^T=\tau_1e$.

We denote the concentration of the intermediate species $es_0, fs_1,s_1p_0,fp_1$ by $u_1, u_2, u_3, u_4$, 
respec\-tively. The corresponding rational functions $\mu_1, \dots, \mu_4$ in
the statement of Proposition~\ref{prop:int} equal
 \[\mu_1=\frac{\kappa_1}{\kappa_2+\kappa_3}, \mu_2=\frac{\kappa_4}{\kappa_5+\kappa_6}, 
 \mu_3=\frac{\kappa_7}{\kappa_8+\kappa_9},
 \mu_4=\frac{\kappa_{10}}{\kappa_{11}+\kappa_{12}}.\]
We further denote
$\eta_1=\frac{\tau_2}{\tau_1}, \eta_2=\frac{\tau_4}{\tau_3}$.
According to Theorem~\ref{th:toric_toric}, the following $6=10-4$ binomials describe the positive 
steady states of the associated MESSI system: 
$$u_1-\mu_1 e.s_0 \, = \, u_2-\mu_2 f.s_1 \, = \,
u_3-\mu_3 s_1.p_0 \, = \, u_4-\mu_4 f.p_1 \, = \,  e.s_0-\eta_1 f.s_1 \, = \,  s_1.p_0-\eta_2 f.p_1 =0.$$
The first four binomials correspond to~\eqref{eq:binom_w_3}, and the last two occur 
in~\eqref{eq:binom_w_4}.
\end{example}

\section{Toric MESSI systems and Multistationarity}\label{sec:multistationarity}

We present in this section a necessary and sufficient criterion to decide whether a system is 
multistationary, which holds for toric MESSI systems (see Definitions~\ref{def:toric} 
and~\ref{def:storic}).
Again, the assumptions we make seem to be very restrictive. Nevertheless, it can be easily seen 
that all standard phosphorylation cascades, multisite sequential phosphorylation networks and 
many two component bacterial networks are of this form, so there is a wide range of applications.  
This is summarized in Theorems~\ref{th:monostationarity} and~\ref{th:multistationarity}. 
We implemented this result by means of Algorithm~\ref{algo}, which certifies mono- or multistationarity, 
and in this last case provides different choices of rate constants for which multistationarity occurs.

\subsection*{Necessary and sufficient conditions}

\medskip

Theorem~\ref{th:monostationarity} below gives a necessary and sufficient criterion to detect the
capacity for multistationarity of a toric MESSI system. It is deduced from results in
\cite{mfrcsd13} and \cite{PM12}. Then, we give in Proposition~\ref{prop:rankM} checkable 
conditions that ensure the validity of the hypotheses of Theorem~\ref{th:monostationarity}. 
When the system is not monostationary, we finally show in Theorem~\ref{th:multistationarity} 
how to choose rate constants for which the system shows multistationarity (see also 
\cite{CoFlRa08,Fein95DefOne}).

\begin{notation}\label{not:BTM}
Let $G$ be a MESSI network.
Assume the positive steady states of the associated dynamical system are described
by binomials $x^{v'}-\eta x^{v}$, with $v,v'\in \Z_{\ge 0}^n$. We call $\bino$ the subspace of $\R^n$ 
generated by all the vectors $v'-v$.
Choose any matrix $\binoM$ whose columns form a basis
of $\bino$. For a positive vector $x$ write  $(x^\binoM)_j=x^{\binoM_j}$,
where $\binoM_j$ denotes the $j$th column of $\binoM$. 
Then, there exists a constant vector $\boldsymbol{\eta}$ such that $x$ is a positive steady state 
of the associated system if and only if $x^\binoM=\boldsymbol{\eta}$. 
Considering the orthogonal complement of $\bino$ in $\R^s$, we construct another matrix $\binoMp$ 
whose rows form a basis of the orthogonal subspace $\bino^\bot$. We can choose both $\binoM$ and 
$\binoMp$ with integer entries.
We consider also a matrix $\stoichM$ whose columns form a basis
of  the stoichiometric subspace $\stoich$. Again, we construct a matrix $\stoichMp$ whose
rows form a basis of the orthogonal complement $\stoich^\bot$. Thus, when
the stoichiometric and the kinetic spaces coincide, the row vectors of $\stoichMp$ are
the coefficients of a basis of linear conservation relations. 
For any natural number $s$ we denote $[s]=\{1, \dots, s\}$. Given a matrix
$A\in \mathbb{R}^{d\times s}$ with $s\ge d$ and a subset $J\subseteq [s]$, we
denote  by $A_J$ the submatrix of $A$ with column indices in $J$.  We furthermore denote $J^c$
the complement of $J$ in $[s]$ and $\nu(J)=\sum_{j \in J} j$. 
An orthant  $\mathcal{O} \subset \R^s$ is defined by the signs of the coordinates of its
points and it will be identified with a vector in  $\{-1,0,1\}^s$.
\end{notation}

\begin{definition}\label{def:signs}
Given matrices $\stoichMp$ and $\binoMp$ as above, with $d=\rank(\stoichMp)=\rank(\binoMp)$, we define the 
following sets of signed products:
 \[\begin{array}{lcl}
 \Sigma & = & \{\sign(\det(\stoichM^t_{I})\det(\binoM^t_{I})): \, I\subseteq [s], \, \#I=s-d\},\\
 \Sigma^\bot & = &\{\sign((-1)^{\nu(J)}\det(\stoichMp_{J})\det(\binoM^t_{J^c})): \, J\subseteq [s], \, \#J=d\},\\
 \Sigma_\bot & = &\{\sign((-1)^{\nu(J)}\det(\stoichM^t_{J^c})\det(\binoMp_{J})): \, J\subseteq [s], \, \#J=d\},\\
 \Sigma^\bot_\bot & = & \{\sign(\det(\stoichMp_{J})\det(\binoMp_{J})): \, J\subseteq [s], \, \#J=d\}.
 \end{array}\]
We say that a set $\sigma \not= \{0\}$ of signs is \emph{mixed} if $\{-,+\}\subset \sigma$ and \emph{unmixed} 
otherwise.
\end{definition}
The following lemma is a consequence of Lemma~2.10 in \cite{mfrcsd13} (and the references therein).
\begin{lemma}\label{lem:signs}
With the notation of Definition~\ref{def:signs}, if any of the four signs sets $\Sigma, \Sigma^\bot,  
\Sigma_\bot,\Sigma^\bot_\bot$ is different from $\{0\}$, the four of them are, and if so, if any of
the four is mixed, all of them are mixed.
\end{lemma}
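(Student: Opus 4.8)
The plan is to exploit the combinatorial identities relating the four families of maximal minors. Write $d = \rank(\stoichMp)=\rank(\binoMp)$, so $\stoichM,\binoM \in \R^{s\times(s-d)}$ have full column rank $s-d$, and $\stoichMp,\binoMp\in\R^{d\times s}$ have full row rank $d$. The key classical fact (Cauchy--Binet together with the relation between the maximal minors of a matrix and those of the matrix describing its orthogonal complement) is that for a full-rank matrix $A$ with column space $V$ and a full-rank matrix $A^\bot$ whose rows span $V^\bot$, one has
\[
\det(A_{I}^t) \;=\; \pm\, c\cdot (-1)^{\nu(I^c)}\det(A^\bot_{I^c})
\]
for every index set $I$ of size $\dim V$, where $c\neq 0$ is a fixed nonzero constant independent of $I$ and the overall sign is a fixed global sign. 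First I would record this identity, applied separately to the pair $(\stoichM,\stoichMp)$ and to the pair $(\binoM,\binoMp)$; this is precisely the content invoked from Lemma~2.10 of \cite{mfrcsd13}.

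Next I would substitute these identities into the four defining expressions of Definition~\ref{def:signs}. For $I\subseteq[s]$ with $\#I = s-d$, set $J = I^c$, so $\#J = d$. Using the identity for the stoichiometric pair we can trade $\det(\stoichM^t_I)$ for (a fixed nonzero scalar times a fixed global sign times) $(-1)^{\nu(J)}\det(\stoichMp_J)$, and similarly for the binomial pair. Carrying this out for each of the four products, one sees that each term of $\Sigma$ equals a fixed nonzero constant times a fixed global sign times the corresponding term of $\Sigma^\bot_\bot$, and likewise $\Sigma^\bot$ and $\Sigma_\bot$ each match $\Sigma$ up to a fixed global nonzero scalar and fixed global sign (the factors $(-1)^{\nu(J)}$ appearing in the definitions of $\Sigma^\bot$ and $\Sigma_\bot$ are exactly what is needed to make the bookkeeping of the orthogonal-complement sign come out uniformly). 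The crucial point is that the constant $c$ and the global sign do not depend on $I$ (resp.\ $J$): they are determined once and for all by the choice of bases in $\stoichM,\stoichMp,\binoM,\binoMp$.

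From this term-by-term proportionality both claims follow immediately. If one of the four sets, say $\Sigma$, is different from $\{0\}$, then some term $\sign(\det(\stoichM^t_I)\det(\binoM^t_I))$ is nonzero; by the proportionality the corresponding terms of the other three sets are nonzero as well, so none of them equals $\{0\}$. And if $\Sigma$ is mixed, it contains both a $+$ and a $-$ term; since multiplying every term of $\Sigma$ by the \emph{same} nonzero constant and the \emph{same} global sign yields (termwise) the set $\Sigma^\bot_\bot$ (and similarly for $\Sigma^\bot$, $\Sigma_\bot$), that operation preserves the presence of both signs, so each of the other three sets is mixed too; conversely the same argument run backwards shows mixedness of any one forces mixedness of $\Sigma$.

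The main obstacle is purely bookkeeping: getting the signs right. One must be careful that the factor $(-1)^{\nu(J)}$ built into the definitions of $\Sigma^\bot$ and $\Sigma_\bot$ precisely cancels the index-dependent sign $(-1)^{\nu(J)}$ (or $(-1)^{\nu(I)}$) produced by the orthogonal-complement minor identity, leaving only an index-\emph{independent} global sign; and one must check that the remaining scalar $c$ (a Gram-determinant-type quantity coming from Cauchy--Binet) is genuinely a single nonzero number, not something varying with $I$. Once the identity $\det(A_I^t)=\pm c\,(-1)^{\nu(I^c)}\det(A^\bot_{I^c})$ is cited in the correct normalized form from \cite{mfrcsd13}, the rest is a short and direct substitution, so I would keep the write-up brief and defer the sign computation to that reference.
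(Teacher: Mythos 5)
Your proposal is correct and follows exactly the route the paper takes: the paper disposes of this lemma by citing Lemma~2.10 of \cite{mfrcsd13}, which is precisely the duality identity $\det(A^t_I)=\pm c\,(-1)^{\nu(I^c)}\det(A^\bot_{I^c})$ relating the maximal minors of a matrix to those of a basis of the orthogonal complement, and your write-up simply makes explicit the resulting term-by-term proportionality (with a fixed nonzero scalar and global sign) among the four sign sets. The bookkeeping you describe --- the $(-1)^{\nu(J)}$ factors in $\Sigma^\bot$ and $\Sigma_\bot$ absorbing the index-dependent sign, and the two such factors cancelling in $\Sigma^\bot_\bot$ --- is accurate, so nothing is missing.
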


The following theorem gives a necessary and sufficient criterion to determine if the toric 
MESSI system is monostationary, based on \cite{mfrcsd13} and \cite{PM12}.

\begin{theorem}\label{th:monostationarity}
Let $G$ be a toric MESSI network with matrices $\stoichM$ and $\binoM$ as above, 
which verifies that $\rank(\stoichM)=rank(\binoM)=d$ and the signs sets $\Sigma, \Sigma^\bot,  
\Sigma_\bot,\Sigma^\bot_\bot$ are different from $\{0\}$.
Then, the following statements are equivalent:

\begin{enumerate}
 \item \label{itm:mono} The associated MESSI system is monostationary.
 \item \label{itm:unmixed} The signs sets $\Sigma, \Sigma^\bot,  
 \Sigma_\bot,\Sigma^\bot_\bot$ are unmixed.
 \item \label{itm:orthant} For all orthants $\mathcal{O}\in\{-1,0,1\}^s,
 \mathcal{O}\neq\mathbf{0}$, either $\stoich\cap \mathcal{O}= \emptyset$
 or $\bino^\bot\cap\mathcal{O}= \emptyset$.
\end{enumerate}
\end{theorem}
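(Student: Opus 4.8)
The plan is to deduce Theorem~\ref{th:monostationarity} from the general sign-based injectivity criterion of M\"uller et al.~\cite{mfrcsd13} specialized to the toric situation. Since the system is toric, the positive steady states $V_f\cap\R_{>0}^s$ are exactly the positive solutions of the binomial system $x^{\binoM}=\boldsymbol\eta$, which, by taking logarithms, is equivalent to saying that $\log x$ lies in an affine translate of $\bino^\bot$. Thus counting positive steady states in a stoichiometric compatibility class amounts to counting the intersection points of an affine translate of $\exp(\bino^\bot)$ (a monomial/toric parametrization) with an affine translate of $\stoich$. This is precisely the setup in which the results of~\cite{mfrcsd13} give that the relevant map is injective on every stoichiometric class (for all rate constants, equivalently for all $\boldsymbol\eta$) if and only if a certain sign condition holds.

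First I would make explicit the equivalence between monostationarity and injectivity: the map $\varphi(x) = (\stoichMp x^0 \text{ fixed}, x^{\binoM})$ or, more precisely, the map sending $x\in\R_{>0}^s$ to the pair consisting of its stoichiometric class (i.e., $\stoichMp x$) and the vector $x^{\binoM}$, must be injective. Because both $\rank(\stoichM)=\rank(\binoM)=d$, this is a map between spaces of matching dimension, and the Jacobian criterion / the degree-theoretic argument of~\cite{mfrcsd13} applies: injectivity for all parameters is equivalent to the nonvanishing, with constant sign, of the relevant $s\times s$ determinant expanded by the Cauchy--Binet formula. That expansion produces exactly the signed products appearing in $\Sigma$ (grouping $I$ of size $s-d$ from the stoichiometric side against $I$ from the binomial side). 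So statement~\eqref{itm:mono} $\Leftrightarrow$ ``$\Sigma$ is unmixed'', which by Lemma~\ref{lem:signs} is equivalent to any of the four sign sets being unmixed; this gives \eqref{itm:mono} $\Leftrightarrow$ \eqref{itm:unmixed}.

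Next I would establish \eqref{itm:unmixed} $\Leftrightarrow$ \eqref{itm:orthant}. This is a purely linear-algebraic / oriented-matroid statement: a set of signed maximal-minor products of two matrices is unmixed if and only if the row spaces (or their orthogonal complements) do not share a nonzero sign vector, i.e., there is no orthant $\mathcal O\neq\mathbf 0$ meeting both $\stoich$ and $\bino^\bot$. This is essentially the content of Lemma~2.10 and the surrounding discussion in~\cite{mfrcsd13} (the equivalence between sign conditions on Grassmann--Pl\"ucker-type products and the geometry of the associated oriented matroids). Here one uses that $\stoich^\bot$ is spanned by the rows of $\stoichMp$ and $\bino$ by the columns of $\binoM$, so $\Sigma^\bot_\bot$ encodes the "compatibility" of the oriented matroids of $\stoich^\bot$ and $\bino^\bot$; the existence of a common sign vector of $\stoich$ and $\bino^\bot$ is the standard obstruction to this set being unmixed.

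The main obstacle I expect is the first equivalence, \eqref{itm:mono} $\Leftrightarrow$ \eqref{itm:unmixed}: one must be careful that toricity gives not merely \emph{a} rational parametrization but the specific monomial structure $x^{\binoM}=\boldsymbol\eta$ with $\bino$ spanned by the binomial exponent differences, so that "for all $\boldsymbol\eta$" genuinely corresponds to "for all rate constants $\Ka$" — that is, that the coefficient vector $\boldsymbol\eta$ sweeps out (a Zariski-dense subset of) all positive vectors as $\Ka$ varies, so no degenerate cancellation can make the system accidentally monostationary. Granting this (which follows from the explicitness of the $\eta$'s in $\Q(\Ka)$, cf.\ Theorem~\ref{th:toric_toric}), the argument is a direct invocation of the injectivity machinery of~\cite{mfrcsd13} together with \cite{PM12} for the toric-specific bookkeeping, and the rest is Lemma~\ref{lem:signs} plus standard oriented-matroid duality.
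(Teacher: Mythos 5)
Your proposal is correct and follows essentially the same route as the paper: reduce monostationarity to injectivity of the monomial map $x\mapsto x^{\binoM}$ on stoichiometric compatibility classes (the paper's Remark on injectivity), invoke the sign/determinant criteria of M\"uller et al.\ (their Proposition~3.9 and Corollary~2.15) to identify this with unmixedness of $\Sigma$, pass to the other three sign sets via Lemma~\ref{lem:signs}, and observe that item~(3) is just the sign-vector reformulation of the condition that $\stoich$ and $\bino^\bot$ share no nonzero sign pattern. The caveat you raise about ``for all $\boldsymbol{\eta}$'' versus ``for all $\Ka$'' is handled in the paper at the same level of detail, via the explicit construction of rate constants in Theorem~\ref{th:multistationarity}.
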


\begin{example}[Example~\ref{ex:ES}, continued] 
Consider the two phosphorylation cascades in Figure~\ref{fig:sadi}.
Both cascades differ in the phosphatases: the cascade in Figure~\ref{fig:sadi} 
(B) has different phosphatases for each layer, while the cascade  (A) does not. 
The set $\Sigma$ corresponding to the cascade in (B) is unmixed, which according 
to Theorem~\ref{th:monostationarity} implies that the system is monostationary. 
In contrast, the set $\Sigma$ for the cascade in (A) is mixed, and the system has 
the capacity for multistationarity. For instance, if we consider $J$ the set of
indices corresponding to $S_0, P_0, ES_0$, and $FP_1$, and $\tilde{J}$ the set of 
indices corresponding to $S_0, P_1, ES_0$, and $FP_1$ (where $4=\rank(\stoichMp)=\rank(\binoMp)$),
$\sign(\det(\stoichMp_{J})\det(\binoMp_{J}))\neq \sign(\det(\stoichMp_{\tilde{J}})\det(\binoMp_{\tilde{J}}))$,
and they are both nonzero.

If we add the reactions $P_1+D\rightleftarrows P_1D$, which represent a drug interacting
with the phosphorylated form $P_1$, we can check that this new system remains multistationary for the cascade (A).
The new matrices $\hat{\stoichM}$ and $\hat{\binoM}$ can be obtained
in the following way:

\begin{minipage}{0.45\textwidth}
{\small
 \begin{tabular}{ccccccccc}
 & &  & $\phantom{00}$  & & $\, {\hspace{0.1cm}\scriptstyle P_1}$ & ${\scriptstyle D}$ & ${\scriptstyle P_1D}$ & \\
 $\hat{\stoichM}^t=$& \multicolumn{8}{l}{$\left(\begin{array}{cccccc}
                      & &  & \multicolumn{1}{c|}{*} & 0 & \phantom{-}0\\
                      & \stoichM^t & & \multicolumn{1}{c|}{\vdots} & \vdots & \phantom{-}\vdots \\
                      & & & \multicolumn{1}{c|}{*} & 0 & \phantom{-}0\\\cline{1-4}
                      0 & \dots &  0 & 1 & 1 & -1
                    \end{array}\right)$,}
\end{tabular}
}
\end{minipage}
\begin{minipage}{0.45\textwidth}
{\small
 \begin{tabular}{ccccccccc}
 & &  & $\phantom{00}$  & & $\, {\hspace{0.1cm}\scriptstyle P_1}$ & ${\scriptstyle D}$ & ${\scriptstyle P_1D}$ & \\
 $\hat{\binoM}^t=$& \multicolumn{8}{l}{$\left(\begin{array}{cccccc}
                      & &  & \multicolumn{1}{c|}{*} & 0 & \phantom{-}0\\
                      & \binoM^t & & \multicolumn{1}{c|}{\vdots} & \vdots & \phantom{-}\vdots \\
                      & & & \multicolumn{1}{c|}{*} & 0 & \phantom{-}0\\\cline{1-4}
                      0 & \dots &  0 & 1 & 1 & -1
                    \end{array}\right)$}.
\end{tabular}
}
\end{minipage}

Both sets of indices $J$ and
$\tilde{J}$ witnessing multistationarity do not contain $P_1$. Then,
from the structure of the matrix
$\sign(\det(\hat{\stoichM}^t_{J\cup \{P_1\}})\det(\hat{\binoM}^t_{J\cup \{P_1\}}))\neq$ 
$\sign(\det(\hat{\stoichM}^t_{\tilde{J}\cup \{P_1\}})\det(\hat{\binoM}^t_{\tilde{J}\cup \{P_1\}}))$, 
which by Theorem~\ref{th:monostationarity} ensures that the
cascade with the drug is multistationary. 
\end{example}

For s-toric MESSI systems we give in Proposition~\ref{prop:rankM} below sufficient conditions 
for the hypothesis in Theorem~\ref{th:monostationarity} that the ranks of $M$ and
$B$ coincide. These conditions are not necessary, but if any of them is not satisfied, 
the ranks might be different.

\begin{proposition}\label{prop:rankM}
Let $G$ be an s-toric MESSI network $G$. Assume that the partition is minimal with $m$ subsets
of core species and the associated digraph $\GE$ has no directed cycles. Then, 
$\rank(\binoMp)=\rank(\stoichMp)=m$.
\end{proposition}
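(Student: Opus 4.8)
The plan is to show that both $\rank(\binoMp)$ and $\rank(\stoichMp)$ equal $m$ by identifying explicit bases of the two orthogonal complements and counting. First I would recall from Theorem~\ref{th:conservations} that, since an s-toric MESSI system has a weakly reversible $G_2$ (condition $(\cond'')$), the stoichiometric and kinetic subspaces coincide (Remark~\ref{rem:m} via Proposition~\ref{prop:kinstoi}), and since there are no swaps (the absence of directed cycles in $\GE$ precludes swaps by Remark~\ref{rem:m}) and the partition is minimal, the second part of Theorem~\ref{th:conservations} gives $\dim(\stoich^\bot)=m$; that is, $\rank(\stoichMp)=m$, with the $m$ conservation relations $\ell_\alpha$ of~\eqref{eq:consalpha} forming a basis. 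This disposes of one of the two equalities immediately.

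The core of the argument is therefore to prove $\rank(\binoMp)=m$, equivalently $\dim(\bino)=s-m$, where $\bino\subseteq\R^s$ is spanned by the exponent-difference vectors $v'-v$ of the binomials describing the positive steady states. Here I would invoke Theorem~\ref{th:toric_toric}: an s-toric system is toric, and one can choose exactly $s-m'$ binomials of the forms~\eqref{eq:binom_w_3} and~\eqref{eq:binom_w_4}, where $m'$ is the number of connected components of $G_2$. Under the minimality hypothesis, Lemma~\ref{lem:min} tells us that the connected components of $G_2$ are in bijection with the subsets $\Sp^{(\alpha)}$, $\alpha\ge 1$, so $m'=m$. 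Thus $s-m$ binomials suffice; it remains to check that the corresponding $s-m$ exponent-difference vectors are \emph{linearly independent}, so that $\dim(\bino)=s-m$ exactly (it cannot exceed $s-\dim(\stoich^\bot)=s-m$ since $\bino\subseteq\stoich$, so independence of these $s-m$ vectors forces equality).

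To see the independence, I would separate the two families. The $p$ vectors coming from~\eqref{eq:binom_w_3}, i.e.\ $e_{u_k}-\varphi(k)$ (in the species coordinates), are visibly independent and, moreover, each involves an intermediate coordinate $u_k$ that appears in no other binomial's leading term; so after eliminating the $u_k$ they reduce the problem to the core coordinates and the $n-m$ vectors coming from~\eqref{eq:binom_w_4}. For those, within each connected component of $G_2^\circ$ with node set $\Sp^{(\alpha)}$ (of size $n_\alpha$), the binomials $x_hx_i-\eta_{ij}x_mx_j$ attached to the edges of a spanning structure give $n_\alpha-1$ independent vectors — essentially because they encode, after projecting onto the $\Sp^{(\alpha)}$-coordinates, the rank-$(n_\alpha-1)$ incidence/cycle structure of a connected graph on $n_\alpha$ vertices, and the "enzyme" coordinates $x_h,x_m$ from other blocks $\Sp^{(\beta)}$ only decorate these without creating relations because $\GE$ is acyclic (this is where acyclicity of $\GE$ is genuinely used). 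Summing $\sum_\alpha (n_\alpha-1) = n-m$, and adding the $p$ intermediate vectors, gives $p+(n-m)=s-m$ independent vectors, hence $\dim(\bino)=s-m$ and $\rank(\binoMp)=m$.

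The main obstacle I anticipate is the last independence check for family~\eqref{eq:binom_w_4}: one must argue carefully that the "enzyme decorations" $x_h, x_m$ — which live in blocks $\Sp^{(\beta)}$ with $\beta\neq\alpha$ — cannot conspire, across different blocks, to produce a linear dependence among the exponent vectors. The acyclicity of $\GE$ is exactly the combinatorial input that rules this out: one can order the blocks $\Sp^{(\alpha)}$ by a topological sort of $\GE$ and, processing them in that order, show that the vectors contributed by each block are independent modulo those already accounted for, since the newly appearing "internal" coordinates of $\Sp^{(\alpha)}$ are not touched by any later block. Making this induction precise — and matching it with the explicit spanning-tree description of the $\eta_{ij}$ from Theorem~\ref{th:toric_toric} — is the delicate part; everything else is bookkeeping.
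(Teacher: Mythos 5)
Your proposal follows essentially the same route as the paper's proof: $\rank(\stoichMp)=m$ is obtained from Theorem~\ref{th:conservations} (the paper routes this through Theorem~\ref{th:par}, which rests on the same ingredients), and $\rank(\binoMp)=m$ is obtained by exhibiting $p$ intermediate exponent vectors plus $n-m$ core exponent vectors and proving their independence block by block, using the acyclicity of $\GE$ through a level/topological ordering --- exactly the paper's argument via the sets $L_0, L_1, \dots$ from the proof of Theorem~\ref{th:bss}.

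Two points need repair. First, your parenthetical claim that $\bino\subseteq\stoich$ is false, and not just cosmetically: in the running Example~\ref{ex:regulated} the binomial $e\,s_0-\eta_1 f\,s_1$ contributes the exponent vector $w=e_E+e_{S_0}-e_F-e_{S_1}$ to $\bino$, while the conservation relation $\ell_3=e+u_1$ gives $\langle \ell_3,w\rangle=1\neq 0$, so $w\notin\stoich$. (If $\bino$ were always contained in $\stoich$, the sign criteria of Theorem~\ref{th:monostationarity} would largely trivialize.) Fortunately the bound $\dim(\bino)\le s-m$ you want from that containment is automatic: $\bino$ is spanned by the exponent differences of a describing set of binomials, and Theorem~\ref{th:toric_toric} supplies such a set with $s-m'=s-m$ elements (using $m'=m$ from Lemma~\ref{lem:min}), so only the independence of these vectors has to be proved. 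Second, for a general s-toric system you must use the tree binomials $x^{\gamma_j}x_i-\eta_{ij}x^{\gamma_i}x_j$ from the first part of Theorem~\ref{th:toric_toric}, not the edge form~\eqref{eq:binom_w_4}, which requires the additional unique-simple-path hypothesis; the argument is unchanged, since the key fact is that $\gamma_j$ for $X_j\in\Sp^{(\alpha)}$ is supported on blocks at strictly lower level than $\alpha$. Relatedly, the direction of your elimination is reversed: a block at a \emph{later} position in a sources-first topological order can perfectly well touch the internal coordinates of $\Sp^{(\alpha)}$ through its $\gamma$-exponents (downstream labels involve upstream species), so the triangular structure is obtained by eliminating from the maximal level downward --- on the coordinates $\Sp^{(\alpha)}\setminus\{i_\alpha\}$ of a block of maximal level, only that block's vectors are nonzero, and there each equals $-e_j$. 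With these adjustments your argument coincides with the paper's.
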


\begin{example}[Necessity of the hypothesis about $\GE$ in Proposition~\ref{prop:rankM}] 
\label{ex:rankM}
If there are directed cycles in $\GE$, we cannot assert that $\rank(\stoichMp)=\rank(\binoM)$. 
Consider, for instance, the following MESSI network without intermediate complexes:
{\small
\[S_0+R_1 \overset{\kappa_1}{\rightarrow} S_1+R_1 \qquad S_1+R_0\overset{\kappa_2}{\rightarrow} S_0+R_0\]
\[P_0+S_0 \overset{\kappa_3}{\rightarrow}  P_1+S_0 \qquad P_1+S_1 \overset{\kappa_4}{\rightarrow} P_0+S_1\]
\[R_0+P_0 \overset{\kappa_5}{\rightarrow} R_1+P_0 \qquad R_1+P_1\overset{\kappa_6}{\rightarrow} R_0+P_1,\]}
where $\Sp$ is the disjoint union of $\Sp^{(1)}= \{S_0, S_1\}$, $\Sp^{(2)}= \{P_0, P_1\}$, and $\Sp^{(3)}= \{R_0, R_1\}$. 
The corresponding digraph $G_2$ equals
\[S_0 \overset{\kappa_1 r_1}{\underset{\kappa_2 r_0}{\rightleftarrows}} S_1 \qquad
P_0 \overset{\kappa_3 s_0}{\underset{\kappa_4 s_1}{\rightleftarrows}} P_1 \qquad
R_0 \overset{\kappa_5 p_0}{\underset{\kappa_6 p_1}{\rightleftarrows}} R_1\]
and the digraph $G_E$ is a cycle:
 {\small %\begin{center}
\begin{tabular}{ll}
   \multirow{2}{*}{\begin{tikzpicture}[ampersand replacement=\&] %[description/.style={fill=white,inner sep=2pt}]
     \matrix (m) [matrix of math nodes, row sep=2.5em, column sep=0.1em, text height=1.5ex, text depth=0.25ex]
      {\Sp^{(1)} \& {\rightarrow} \& \Sp^{(2)} \& {\rightarrow}  \& \Sp^{(3)}. \\};
     \draw[->](m-1-5) to[in=-45,out=-135] 
node[below] (x) {} (m-1-1);
     \end{tikzpicture}} \\
 \end{tabular}
}

\vskip .8cm

We call $s_0, s_1$ the concentrations of $S_0, S_1$ (respectively), $p_0, p_1$  the 
concentrations of $P_0, P_1$, $r_0, r_1$ the concentrations of $R_0, R_1$.
There are three linearly independent conservation relations:
\[ s_0 + s_1 = C_1, \quad p_0 + p_1 = C_2, \quad r_0 + r_1 = C_3.\]
We expect the rank of $B$ to be $3$.
But the system equals
\small{
\[ d s_0/dt = -\kappa_1 s_0 r_1 + \kappa_2 s_1 r_0, \, \,
d p_0/dt = -\kappa_3 s_0 p_0+ \kappa_4 s_1 p_1, \, \, 
 d r_0/dt = -\kappa_5 p_0 r_0 + \kappa_6 p_1 r_1,\]}
and so we can choose $B$ to be the matrix:
$\left(\begin{array}{@{\hskip 1mm}r@{\hskip 2mm}r@{\hskip 2mm}r@{\hskip 2mm}r@{\hskip 2mm}r@{\hskip 2mm}r@{\hskip 1mm}}
           -1 & 1 &  0 & 0 &  1 & -1\\
           -1 & 1 & -1 & 1 &  0 & 0\\
            0 & 0 & -1 & 1 & -1 & 1
          \end{array}\right)$,
which has rank $2$.

Assume there exists a positive steady state. Then, we deduce that
\begin{equation}\label{eq:kappa}
 \kappa_1  \kappa_4  \kappa_5 \, = \, \kappa_2 \kappa_3 \kappa_6.
\end{equation}
So, when~\eqref{eq:kappa} is not satisfied, there are no positive steady states and when it is satisfied, 
any of the three steady state equations is a consequence of the other two, and when we intersect 
with the linear variety defined by the conservation relations, we get a variety of dimension $1$, 
with an infinite number of positive steady states (there are $5$ equations in $6$ variables).
\end{example}

If a consistent toric MESSI system is not monostationary,
we can effectively construct two
different steady states $\bx^1$ and $\bx^2$ and a reaction rate constant vector $\Ka$
that witness multistationarity based on item (3) in the statement of 
Theorem~\ref{th:monostationarity}, following the arguments in~\cite{PM12}
(see also \cite{CoFlRa08,Fein95DefOne}).

\begin{theorem}\label{th:multistationarity}
Let $G$ be a consistent MESSI network which satisfies the hypotheses of 
Theorem~\ref{th:monostationarity}, such that the associated system is toric and it is not monostationary. 
Then, for any choice
of  $\ww \in \stoich, \vv \in \bino^\bot$ in the same orthant,
the positive vectors $\bx^1$ and $\bx^2$ defined as
 \begin{align*}
  \left(x^1_i\right)_{i=1,\, \ldots,\, s} ~&=~
  \left\lbrace\begin{array}{lll}
               \frac{w_{i}}{e^{v_{i}}-1}, & & \text{if }v_{i} \neq 0 \\
	       \text{ any } \bar{x}_i>0, & & \text{otherwise,}
              \end{array}\right.\\
  \mathbf{x}^2 ~&=~ \diag(e^{\vv})\, \mathbf{x}^1
 \end{align*}
 are two different steady states of the given toric MESSI system for any vector of rate
 constants $\Ka$ which is a positive solution of the linear system
 $f(\bx^1,\Ka)=0$, with $f(\bx^1,\Ka)$ as in~\eqref{CRN}. 
\end{theorem}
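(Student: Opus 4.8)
The plan is to exploit the binomial (toric) description of the steady states together with the sign-condition characterization of multistationarity from Theorem~\ref{th:monostationarity}. Since the system is toric, by Notation~\ref{not:BTM} a positive vector $\bx$ is a steady state if and only if $\bx^\binoM = \boldsymbol{\eta}$ for the appropriate constant vector $\boldsymbol{\eta}$, equivalently if and only if $\bx^{\binoMp_j\cdot} $-type relations... more precisely, $\bx$ is a steady state exactly when $\log \bx$ lies in the affine translate of $\bino$ determined by $\log\boldsymbol{\eta}$ (after taking logarithms coordinatewise of the binomial equations $x^{v'}=\eta x^v$). Two positive steady states $\bx^1,\bx^2$ lie in the same stoichiometric compatibility class precisely when $\bx^2-\bx^1 \in \stoich$. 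So the first step is to translate ``multistationarity'' into: there exist $\bx^1,\bx^2 \in \R^s_{>0}$ with $\bx^2-\bx^1\in\stoich$ and $\log\bx^2-\log\bx^1 \in \bino$. Write $\ww = \bx^2-\bx^1$ and $\vv=\log\bx^2-\log\bx^1$; then $\ww\in\stoich$, $\vv\in\bino^\bot{}^\bot=\bino$ — wait, we need $\vv\in\bino$, and the hypothesis in the theorem is $\ww\in\stoich,\vv\in\bino^\bot$; note $\bino^\bot$ here is being used for the span $\bino$ itself in the sense that $\vv$ satisfies the binomial-difference relations, i.e.\ $\vv$ is orthogonal to every row of $\binoMp$, which means $\vv\in(\bino^\bot)^\bot=\bino$. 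I would state this carefully at the outset to avoid confusion with the notation.

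Next, the key algebraic observation (this is the classical ``exponential parametrization'' trick going back to Craciun--Feinberg and used in~\cite{PM12}): given any $\ww\in\stoich$ and $\vv\in\bino$ lying in the same (closed) orthant, the coordinatewise formula $x^1_i = w_i/(e^{v_i}-1)$ when $v_i\neq 0$ (and arbitrary positive $x^1_i$ otherwise) produces a genuinely positive vector, because $w_i$ and $e^{v_i}-1$ have the same sign by the same-orthant hypothesis (when $v_i>0$, $e^{v_i}-1>0$; when $v_i<0$, $e^{v_i}-1<0$; and $w_i=0$ whenever $v_i=0$ since $\ww,\vv$ share an orthant and the $i$th coordinate of $\vv$ being zero forces... actually here one needs $w_i=0\Leftrightarrow v_i=0$ on the relevant support, which is guaranteed by requiring $\ww,\vv$ to have \emph{the same} sign vector — I would make the ``same orthant'' hypothesis mean exactly $\sign(\ww)=\sign(\vv)$, or handle the mismatch by the freedom in choosing $x^1_i$). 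Then set $\bx^2=\diag(e^\vv)\bx^1$, so that $\log\bx^2-\log\bx^1=\vv\in\bino$, hence $\bx^1,\bx^2$ satisfy the \emph{same} binomial equations up to the constant $\boldsymbol\eta$ — that is, $(\bx^1)^\binoM$ and $(\bx^2)^\binoM$ are equal. And $\bx^2-\bx^1 = \diag(\bx^1)(e^\vv-\mathbf 1) = \ww\in\stoich$ by the very definition of $\bx^1$. Thus $\bx^1$ and $\bx^2$ are distinct (since $\vv\neq 0$), lie in the same stoichiometric class, and satisfy the same binomial relations.

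It remains to produce a rate constant vector $\Ka$ making both of them actual steady states of the MESSI system. Here I would use that the steady state polynomials $f_\ell(\bx,\Ka)$ are \emph{linear} in $\Ka$ (each $f_\ell$ is a sum of monomials in $\bx$ times the rate constants, cf.~\eqref{CRN}), so $f(\bx^1,\Ka)=0$ is a linear system in the unknown $\Ka$; by Theorem~\ref{th:monostationarity}(3) and the fact that the system is toric and consistent, the results of~\cite{PM12} guarantee this linear system has a \emph{positive} solution $\Ka$ — this is where the hypothesis ``consistent'' enters (it ensures the reaction vectors admit a positive combination vanishing, which is what lets one solve for positive rates). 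Finally, because the toric parametrization is governed by $\binoM$ and $\bx^1,\bx^2$ give the same value of $\bx^\binoM$, solving $f(\bx^1,\Ka)=0$ automatically forces $f(\bx^2,\Ka)=0$ as well: the steady state locus for this $\Ka$ is exactly the binomial variety $\{\bx>0:\bx^\binoM=(\bx^1)^\binoM\}$, which contains both points. I expect the main obstacle to be the last point — verifying that a single $\Ka$ with $f(\bx^1,\Ka)=0$ also kills $f(\bx^2,\Ka)$; this is not automatic from linearity alone and genuinely uses that for a toric MESSI system the positive steady state variety equals a \emph{binomial} variety cut out by the columns of $\binoM$ (Theorem~\ref{th:toric_toric} / Definition~\ref{def:toric}), so that $\bx^1$ and $\bx^2$, having equal $\binoM$-coordinates, are simultaneously on or off the steady state locus. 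I would carry this out by invoking the explicit binomial description and the fact that the $\Katau$-correspondence of Remark~\ref{rem:tauq} reduces everything to the network $G_1$, whose steady state equations are exactly the binomials of Theorem~\ref{th:toric_toric}.
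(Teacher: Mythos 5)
Your overall strategy is the paper's: build $\bx^1$ coordinatewise from a same-sign pair $(\ww,\vv)$, set $\bx^2=\diag(e^{\vv})\,\bx^1$ so that $\bx^2-\bx^1=\ww\in\stoich$, use consistency to produce a positive $\Ka$ solving the linear system $f(\bx^1,\Ka)=0$ (the paper takes $\kappa_{yy'}=\lambda_{yy'}(\bx^1)^{-y}$ for a positive $\lambda$ with $\sum\lambda_{yy'}(y'-y)=0$), and then invoke the binomial description of the positive steady state locus to conclude that $\bx^2$ is a steady state for the same $\Ka$. Your identification of the ``main obstacle'' --- that $f(\bx^1,\Ka)=0$ must force $f(\bx^2,\Ka)=0$ via the toric structure rather than by linearity --- and its resolution are exactly the paper's argument.

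There is, however, a genuine error in your handling of the duality between $\bino$ and $\bino^\bot$. The binomials are $x^{v'}-\eta x^{v}$, the subspace $\bino$ is spanned by the exponent differences $v'-v$, and the columns of $\binoM$ form a basis of $\bino$. Two positive points satisfy the same equations $x^{\binoM}=\boldsymbol{\eta}$ exactly when the difference of their coordinatewise logarithms is orthogonal to every column of $\binoM$, i.e., lies in $\bino^\bot$ --- which is precisely the hypothesis $\vv\in\bino^\bot$ as stated in the theorem, and is consistent with item (3) of Theorem~\ref{th:monostationarity} (an orthant meeting both $\stoich$ and $\bino^\bot$). Your claim that ``$\bino^\bot$ here is being used for the span $\bino$ itself'' is a misreading, and the ensuing assertion that $\vv\in\bino$ yields $(\bx^1)^{\binoM}=(\bx^2)^{\binoM}$ is false: one has $\left((\bx^2)^{\binoM}\right)_j=e^{\langle\vv,\binoM_j\rangle}\left((\bx^1)^{\binoM}\right)_j$, which agrees with $\left((\bx^1)^{\binoM}\right)_j$ for all $j$ precisely when $\vv\perp\bino$, and for a nonzero $\vv\in\bino$ this generically fails. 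If you simply take the hypothesis as written ($\vv\in\bino^\bot$ with $\sign(\vv)=\sign(\ww)$) and delete the attempted ``correction,'' the rest of your argument goes through and coincides with the paper's proof.
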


\subsection*{An algorithm to find different steady states in multistationary toric MESSI systems}

\medskip

We present here an algorithm based on Theorems~\ref{th:monostationarity} and~\ref{th:multistationarity} 
which checks whether a consistent toric MESSI system has the capacity 
for multistationarity. In this case, it looks for orthants where $\stoich$ and $\bino^\bot$ meet 
and finds two different steady states in the same stoichiometric compatibility class,
 together with a corresponding set of 
reaction constants (based on \cite{CoFlRa08,Fein95DefOne,PM12}). 

The algorithm to find these orthants relies on the theory of oriented matroids \cite{Bjetal99,RiZi97,Rock69}.
Recall that the \emph{support} of a vector is defined as the set of its
nonzero coordinates. A circuit of a real matrix $A$ is a nonzero element 
$r \in \mathrm{rowspan}(A)$ with minimal support (with respect to inclusion). 
Given an orthant $\mathcal{O}$ (resp. a vector $v$), a circuit $r$ is said to be 
\emph{conformal} to $\mathcal O$ (resp., $v$) if for any index $i$ {\it in its support}, 
 $\mathrm{sign}(r_i)=\mathcal{O}_i$ (resp.,  $\mathrm{sign}(r_i)=\mathrm{sign}(v_i)$).
A key result is that every vector $v \in \mathrm{rowspan}(A)$ is a nonnegative sum 
of circuits conformal to $v$ \cite{Rock69}. All the circuits of $A$ can be described in terms of 
vectors of maximal minors of $A$ (see Lemma~\ref{lem:circuits} in the Appendix) and one can thus compute all 
orthants containing vectors in $\mathrm{rowspan}(A)$ as those orthants $\mathcal O$ 
whose support equals the union of the supports of the circuits conformal to $\mathcal O$.  
These arguments also allow us to check the consistency of a given network, that is, whether 
there is a positive element in the kernel of a matrix with columns given by the reaction vectors $y'-y$.

\begin{algorithm}
% \caption{Test for multistationarity}
 \label{algo}
Given a consistent toric MESSI system with network $G$, the following procedure finds, if they exist,
multistationarity parameters $\Ka$ or decides that the system is monostationary.
\begin{itemize}
 \item[Input:] A  toric MESSI network $G$. \\ \vspace{-0.3cm}
 \item[Step 0:] Compute matrices $\stoichMp$ (or $M$) and $\binoM$ (or $B^\bot$) for $G$. 
 \item[Step 1:] Compute $\Sigma^\bot$ (or any of the sets $\Sigma,\Sigma_\bot, \Sigma^\bot_\bot$). 
 Check if $\Sigma^\bot$ is mixed.
 If it is unmixed, stop and assert that the system is monostationary.
 \item[Step 2:] Compute the circuits for $\binoMp$ 
 and find an orthant whose support equals the union of the circuits conformal to it.
 \item[Step 3:] For the orthant computed in Step 2,
 check if there is a conformal circuit of $\stoichM$ contained in this orthant. In this case,
 check whether its support equals the union of the circuits of $\stoichM$ conformal to it. 
 Otherwise, ignore it, and go back to Step 2.
 \item[Step 4:] For each orthant $\mathcal O$ with $\stoich\cap \mathcal{O}\neq \emptyset$
 and $\bino^\bot\cap\mathcal{O}\neq \emptyset$,  keep the conformal circuits.
 \item[Step 5:] Build vectors $\vv\in \bino^\bot$ and $\ww \in \stoich$, 
 for example, as the sum of the corresponding conformal circuits.
 \item[Step 6:] Output $\bx^1, \bx^2$ and $\Ka$ that witness multistationarity, as in 
 Theorem~\ref{th:multistationarity}. 
\end{itemize}
\end{algorithm}

Efficiency can certainly be improved at any step of the algorithm, mainly to avoid unnecessary 
computations. The rows of $\stoichMp$ usually present some nice structure that minimizes the 
search for orthants containing a circuit, because in the conditions of 
Theorem~\ref{th:conservations} all columns corresponding to the same set in the partition of 
the species are equal, which produces many zero minors that can be predicted. In Step 5, 
infinitely many different choices of $\vv$ and $\ww$ can be obtained by considering positive 
linear combinations of all circuits which are conformal to the orthant $\mathcal{O}$ (one 
circuit per support).

We implemented this algorithm in Octave~\cite{octave} for the cascades in Figure~\ref{fig:sadi}. 
In the multistationary case of only one phosphatase $F$, we obtained two different orthants 
$\mathcal O_1, \mathcal O_2$ where $\stoich$ and  $\bino^\bot$ meet. 
In both cases, we computed for $i=1,2$  a choice of corresponding rate constants $\Ka(i)$ and 
two steady states $\bx^1(i)$ and $\bx^2(i)$ in the same stoichiometric compatibility class. 
We ordered the species $S_0$, $S_1$, $P_0$, $P_1$, $ES0$, $FS1$, $S1P0$, $FP1$, $E$, $F$.
 We considered in both cases two sets of initial conditions (on the same 
stoichiometric compatibility class); first we set initial states 
$S_0=S_{tot}$, $P_0=P_{tot}$, $E=E_{tot}$, $F=F_{tot}$ and  then  initial states $S_0=S_{tot}$, 
$P_1=P_{tot}$, $E=E_{tot}$, $F=F_{tot}$, and all the other species equal to zero. 
We simulated the system and we depicted the output
in Figure~\ref{fig:plot}, which confirms the occurrence of two stoichiometrically compatible 
steady states for $\Ka(1)$ and $\Ka(2)$.
Approximate values are as follows: 
\[\Ka(1)\cong(25.46,0.86,0.86,11,0.86,0.86,0.14,0.21,0.21,37.47,0.21,0.21), \]
\[\bx^1(1)\cong(0.037,3.47,4.07,1.02,1.16,1.16,4.75,4.75,2.1,0.052),\]
\[\bx^2(1)\cong(2.04,0.47,11.07,0.019,3.16,3.16,1.7,1.75,0.1,1.05),\]
and
\[\Ka(2)\cong(101.86,1.72,1.72,33,0.86,0.86,37.47,0.13,0.13,0.42,0.63,0.63),\] 
\[\bx^1(2)\cong(0.019,0.052,1.02,4.07,0.58,1.16,7.91,1.58,1.05,1.16),\]
\[\bx^2(2)\cong(1.02,1.05,0.019,11.07,1.58,3.16,2.9,0.58,0.052,0.16).\]

\begin{figure}[t!]
  \begin{tabular}{ll}
\includegraphics[scale=0.38,trim={2.4cm 6.4cm 1cm 6cm},clip]{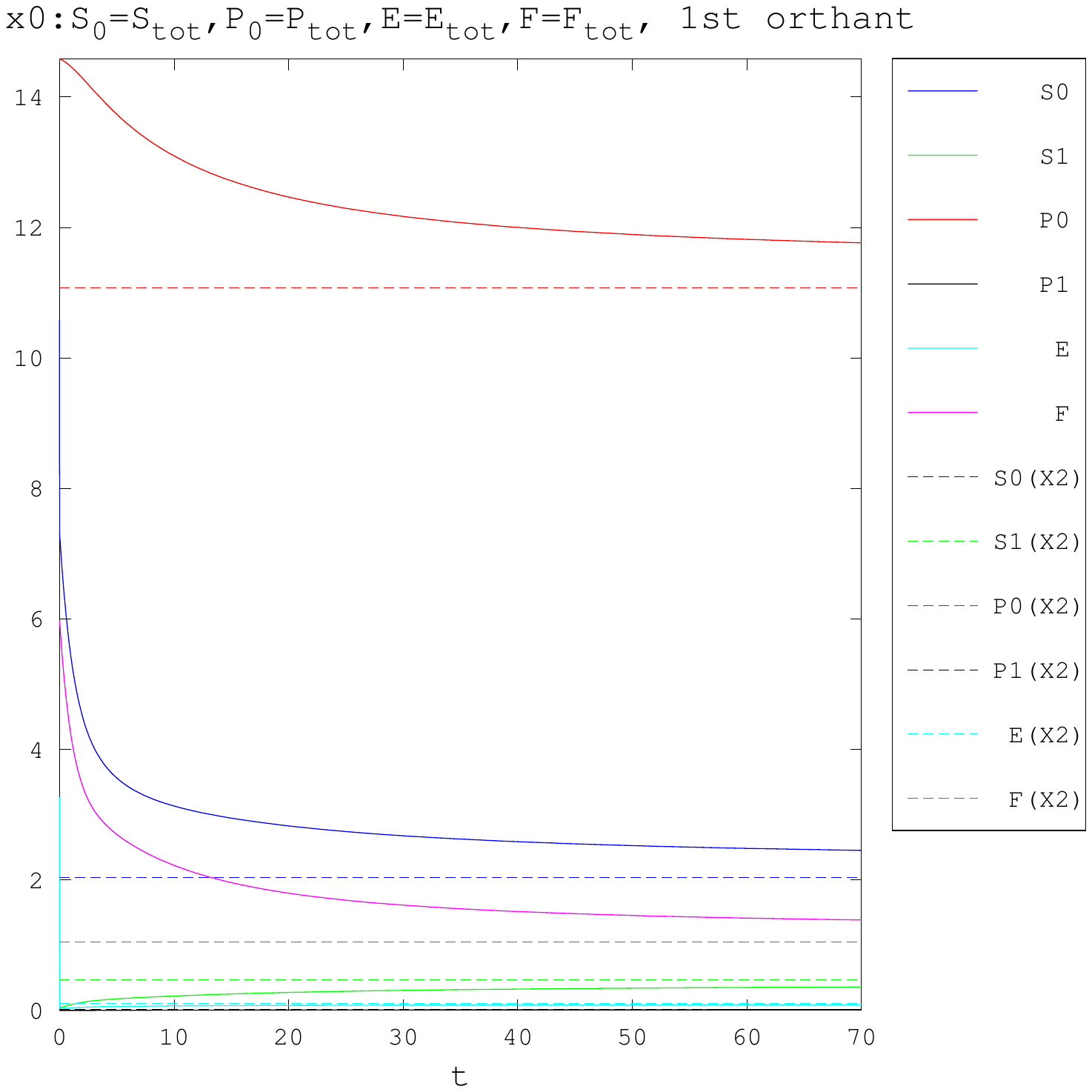} & 
\includegraphics[scale=0.38,trim={2.4cm 6.4cm 1cm 6cm},clip]{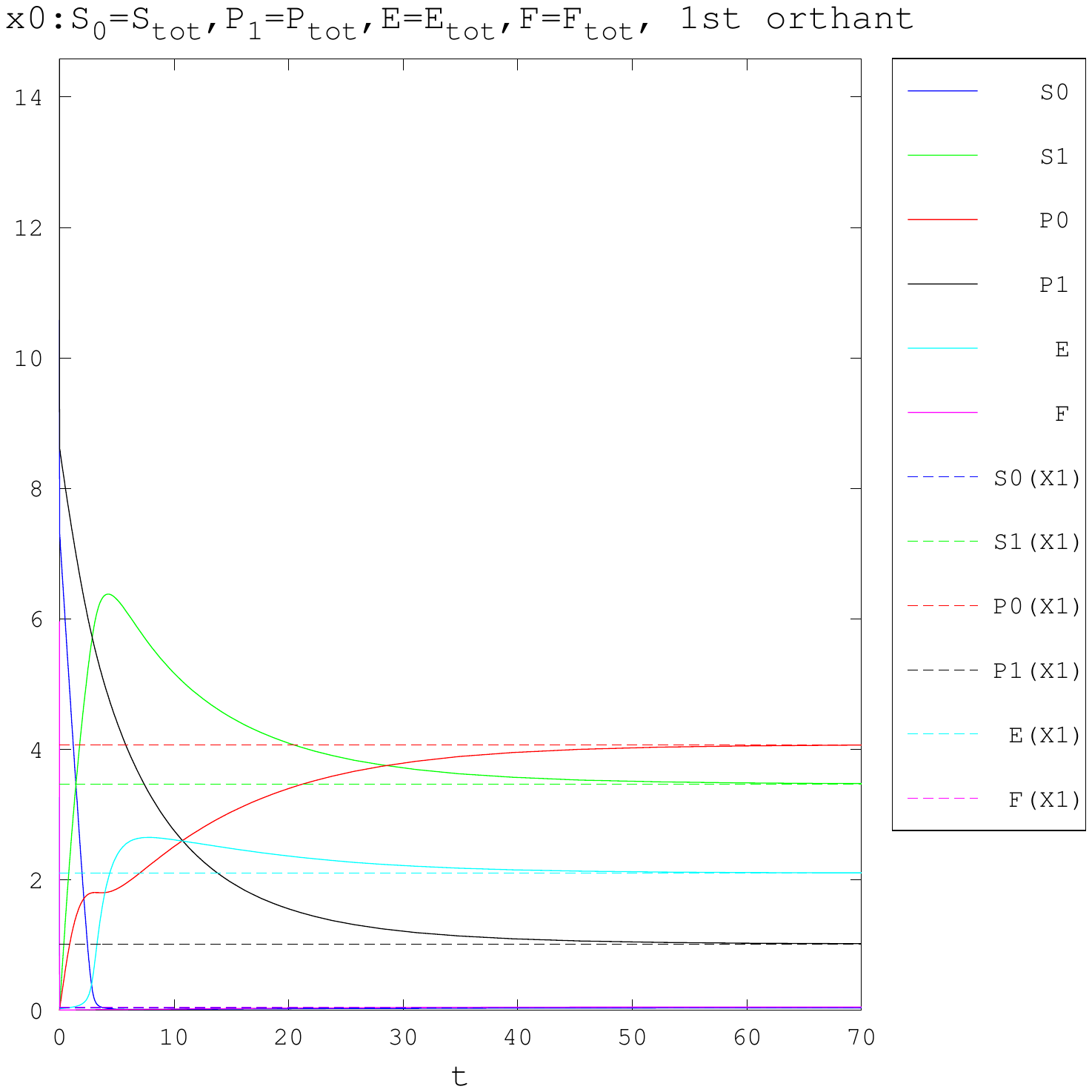} \\
\includegraphics[scale=0.38,trim={2.4cm 6.4cm 1cm 6cm},clip]{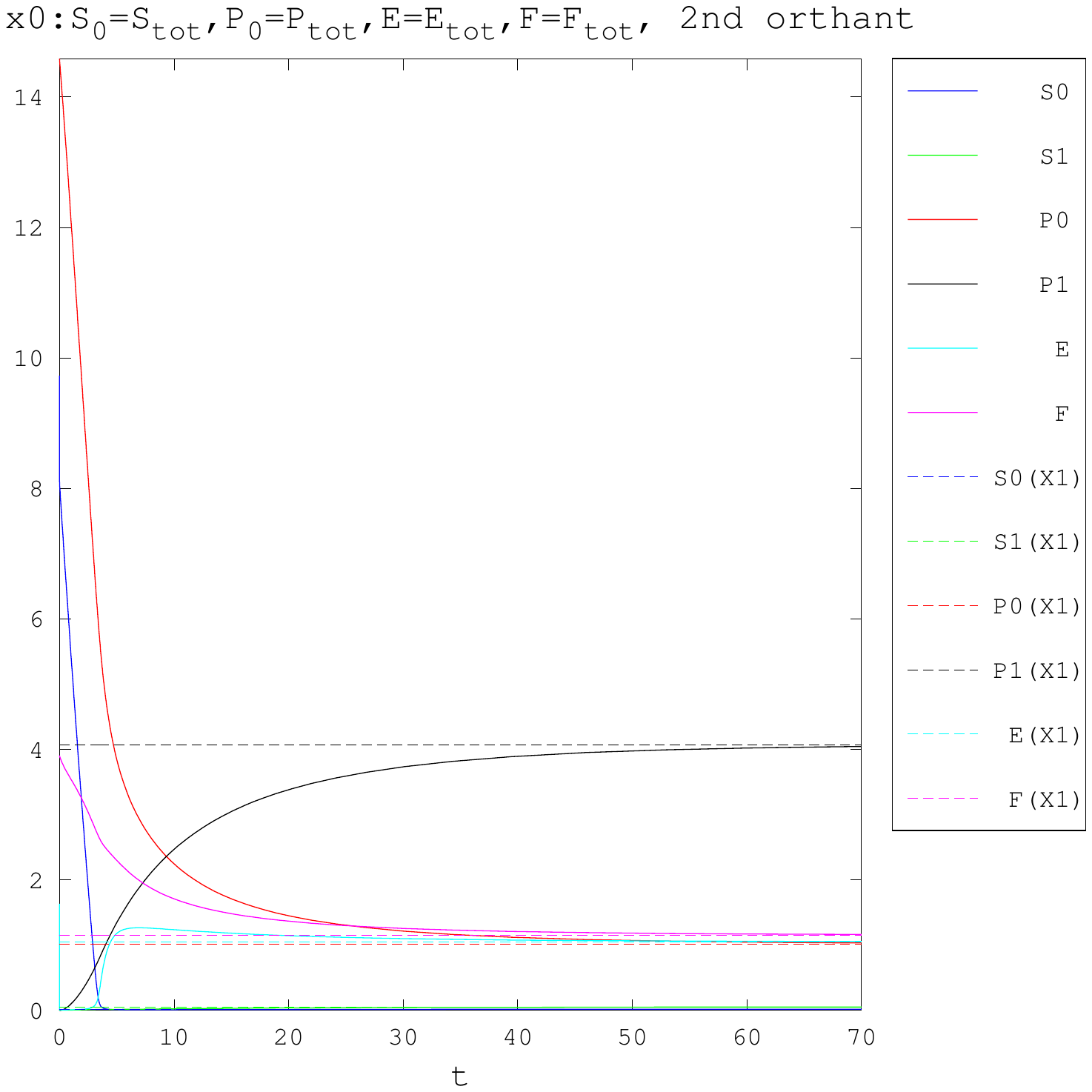} & 
\includegraphics[scale=0.38,trim={2.4cm 6.4cm 1cm 6cm},clip]{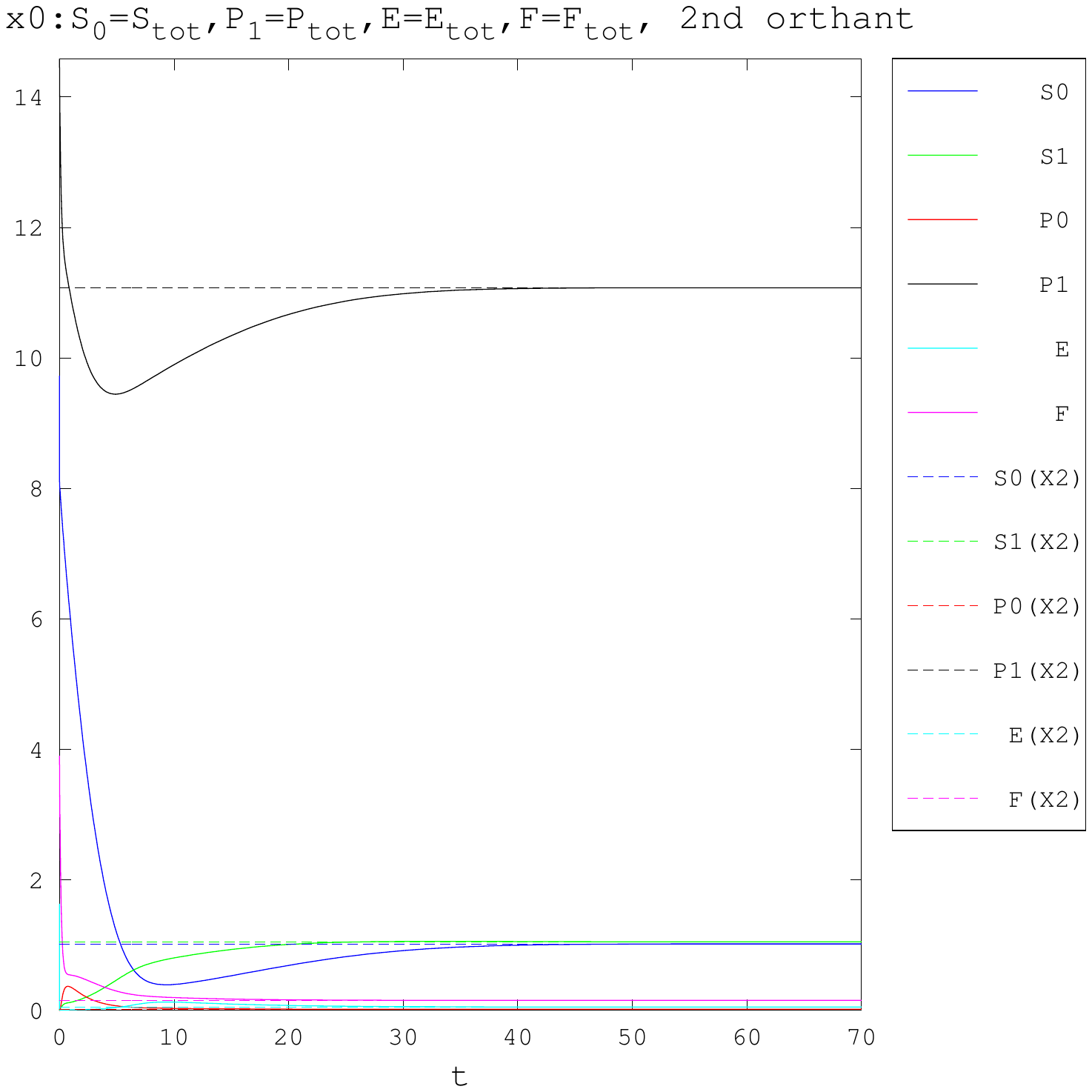}
\end{tabular}
\caption{Witnesses for multistationarity of the phosphorylation cascade with one phosphatase $F$,
  with reaction constants and total amounts obtained from  two orthants $\mathcal{O}_1, \mathcal{O}_2$
 given by  Algorithm~\ref{algo}.
Upper plots depict the two different steady states constructed from $\mathcal{O}_1$ (dashed lines) along with the simulated 
trajectories of $S_0$, $S_1$, $P_0$, $P_1$, $E$ and $F$. The initial state on the left
 is {\small $S_0=S_{tot}$, $P_0=P_{tot}$, $E=E_{tot}$, $F=F_{tot}$}, 
and the initial state on the right is {\small $S_0=S_{tot}$, $P_1=P_{tot}$, $E=E_{tot}$, $F=F_{tot}$}. 
The lower plots correspond to $\mathcal{O}_2$, with the same initial conditions.
We used the function \texttt{ode23s}, from the package \emph{odepkg} version 0.8.5 in Octave~\cite{octave}.}\label{fig:plot}
\end{figure}

\section{Discussion}

\medskip

Our contribution to the study of many different important biological systems modeled with mass-action kinetics
is the identification of a common underlying structure in quite diverse networks. We call this
a MESSI structure, since it describes Modifications of type
Enzyme-Substrate or Swap with Intermediates. The mathematical formulation of the distinguished properties
of MESSI biological systems allows us
to prove general results on their dynamics from the structure of the network.
We give very precise hypotheses that ensure the validity of our statements and which can be easily
verified in common networks of biological interest.

It is important to observe that all the conditions and hypotheses in our paper can be algorithmically
checked. In particular, it is possible to devise an algorithm to check whether a given network has a
MESSI structure,  to prove that a given partition is minimal,
to construct the associated digraphs and networks, including the corresponding labels, and to check the hypotheses of all
our statements. The construction of the rational parametrization in Theorem~\ref{th:par} is also algorithmic.
 Note also that the sufficient conditions which ensure persistence in Theorem~\ref{th:bss} are independent
of the conditions to have a toric MESSI system or even an s-toric MESSI system, including the criterion for
multistationarity given in Theorem~\ref{th:monostationarity}. However, the hypotheses in Proposition~\ref{prop:rankM}
to ensure the validity of the hypotheses in Theorem~\ref{th:monostationarity} also imply persistence. This does not mean
that multistationarity is related to persistence,  but when there are boundary steady states the hypotheses of Theorem~\ref{th:monostationarity}
should be verified in an ad hoc manner.

%%%% Appendix:
%

\vspace{2cm}

\appendix

\section{Proofs}\label{sec:pf}

%%===============================
%\section*{The Laplacian of a digraph}
%%===============================
We assume the reader is familiar with the notion of the Laplacian $\Lap(G)$ of a 
digraph $G$ and its main properties. One key observation is that mass-action 
kinetics associated with a linear digraph $G$ with variables $x=(x_1, \dots, x_s)$ 
equals $\dot x =\Lap(G) x$. A second key observation is that the fact that the rows 
of $\Lap(G)$ add up to zero translates into $\sum_{i=1}^s \dot x_i=0$, and so 
$\sum_{i=1}^s x_i$ is a conserved quantity. 
The last key observation is that when $G$ is strongly connected, the kernel of $\Lap(G)$ 
has dimension one and there is a known generator $\rho(G)$ with positive entries described 
as follows.
Recall that an $i$-tree $T$ of a graph is a spanning tree where the $i$th vertex is its unique 
sink (equivalently, the $i$th is the only vertex of the tree with no edges leaving from it), 
and  we call $c^{T}$ the product of the labels of all the edges of $T$. Then, the $i$th 
coordinate of $\rho(G)$ equals
\begin{equation}\label{eq:rhoi}
\rho(G)_i=\underset{T\; an \; i-tree}{\sum}c^{T}.
\end{equation}
We refer the reader to~\cite{MiGu13,Tutte} for a detailed account.

%%===============================
\subsection*{Conservation Relations and Persistence}\label{ap:consrel}
%%===============================
 
In order to prove Theorem~\ref{th:conservations}, we need to first introduce a remark.
We call $\stoich_1$ the stoichiometric subspace of the biochemical network defined by 
the associated digraph $G_1$ of a MESSI reaction network $G$ (with stoichiometric subspace 
$\stoich$). We denote by  
$\tilde{\stoich_1}=\{(0,\dots,0,w)\in \R^{p+n}: w\in\stoich_1\subseteq \R^n\}$
the lifting of $\stoich_1$ to $\R^{p+n}$.

\begin{remark}\label{rem:n+p}
 With the previous notation, the following equality of dimensions is an immediate 
 consequence of Lemma~1 in the ESM of~\cite{fw13}:
\begin{equation}\label{eq:dim}
 \dim(\stoich)=\dim(\stoich_1)+p.
\end{equation}
 \end{remark}

We will also need Lemma~\ref{lem:min} in the main text. 

\begin{proof}[Proof of Lemma~\ref{lem:min}]
Each vertex in the associated digraph $G_2$ to the digraph $G$ is labeled by 
only one species. If one species of $\Sp^{(\alpha)}$ appears on a vertex of $G_2$,
by $({\mathcal R}_2)$ and $({\mathcal R}_3)$ and the construction of $G_2$, all the 
species in the vertices of the corresponding connected component of $G_2$ belong to 
the same $\Sp^{(\alpha)}$.
Moreover, if two core species $X_i, X_h$ in the same subset $\Sp^{(\alpha)}$ correspond 
to different connected components of $G_2$, then for any complex $y_{ij}$ containing $X_i$ 
and any complex $y_{h \ell}$ containing $X_h$, the relation $y_{ij} \uri y_{h \ell}$ does 
{\it not}  hold. It follows that we can refine each subset $\Sp^{(\alpha)}$ as the disjoint 
union of the subsets of species in each connected component of $G_2$ which consists of species 
in $\Sp^{(\alpha)}$, and no further refinement is possible if the set of intermediate species 
is maximal.
\end{proof}

We are ready to prove Theorem~\ref{th:conservations}. We will mainly adapt the results in~\cite{fw13} (Theorem~2.1) to our 
setting.

\begin{proof}[Proof of Theorem~\ref{th:conservations}]
Given a chemical reaction network $G$ and a partition of the set of species $\Sp$ that leads to a
MESSI system with the given complexes and reactions, consider the mass-action system defined by $G_1$, 
with species $X_1,\dots, X_n$.
By Theorem~2.1 in \cite{fw13},  the conservation relations in $G$ are in one-to-one correspondence with 
the conservation relations of $G_1$ in an explicit way that we detail below after our hypotheses. 
Recall that by Remark~\ref{rem:G2}, the associated graph $G_2$ determines the same %steady state 
equations.

Fix $\alpha \ge 1$. As we remarked in the proof of Lemma~\ref{lem:min},
each subset $\Sp^{(\alpha)}$ coincides with the variables in the vertices of some of the
connected components of the associated digraph $G_2$. Given such a connected component $H$, 
let $\Sp^{(\alpha)}_{H}$ be its set of vertex labels. As  $G_2$ is a linear digraph, $H$ is also linear and 
so the matrix of the associated (linear) system is given by  its Laplacian $\Lap(H)$. Therefore, 
the sum of its rows equals zero, which means that $\sum_{X_i\in S^{(\alpha)}_H}\dot x_i=0$ 
and a fortiori $\sum_{X_i\in S^{(\alpha)}}\dot x_i=0$,
for the mass-action system defined by $G_1$. We find now the corresponding linear combination which 
includes the concentrations of the intermediate species by adapting Lemma~1 in the ESM of~\cite{fw13}.

Let $\omega^\alpha \in \{0,1\}^n$ be the characteristic vector of $\Sp^{(\alpha)}$, so that
$\langle \omega^\alpha, \dot x \rangle = \sum_{X_i\in S^{(\alpha)}}\dot x_i$.
For any complex $y^j$ of $G_1$,  we know from $({\mathcal R}_2)$ and $({\mathcal R}_3)$ 
that it has at most one species in $\Sp^{(\alpha)}$. Then, 
\[\omega^\alpha\cdot y^j=\left\lbrace\begin{array}{ll}
                        1 & \text{if there is a species of } \Sp^{(\alpha)} \text{ in }y^j\\
                        0 & \text{otherwise.}
                       \end{array}\right.\]
Define the $(p+n)$-vector:
\[\widetilde{\omega}^\alpha_i=\left\lbrace\begin{array}{ll}
            \omega^\alpha_i & \text{for } i=p+1,\dots,p+n\\
                        1 & \text{if } i\in \intal(\alpha)\\
                        0 & \text{otherwise,}
                       \end{array}\right.\]
where $\intal(\alpha)$ is as in~\eqref{eq:intal}. Lemma~1 in the ESM of~\cite{fw13} asserts precisely
that the linear form defined by $\widetilde{\omega}^\alpha$ 
leads to the conservation of the whole network associated with the linear form defined
by $\omega^\alpha$ on the variables in $\Sp_1=\Sp \setminus \Sp^{(0)}$. But this linear form 
is precisely $\ell_\alpha$, as we wanted to prove.
Since we are assuming that all species participate in at least one reaction and intermediate 
species satisfy condition~$(\cond)$, we have that $\Sp^{(0)} = \cup_{\alpha=1}^m \Sp\intal(\alpha)$. 
Therefore, all coefficients of the conservation relation $\sum_{\alpha=1}^m 
\ell_\alpha$ are positive and we get that any MESSI system is conservative.

To see the second part of the statement, note that $\ell_{1}, \dots, \ell_{m}$ define linearly independent
conservation relations and so $\dim(\stoich^\bot)=s-\dim(\stoich)\ge m$. It only remains to prove that,
if $G$ has no swaps, then $s-\dim(\stoich)\leq m$. By
Remark~\ref{rem:n+p} it holds that $\dim(\stoich)=\dim(\tilde{\stoich}_1)+p$  because clearly  
$\dim(\stoich_1)=\dim(\tilde{\stoich}_1)$. It is then enough to show that $\dim(\tilde{\stoich}_1)\ge n-m$.
If $X_i+X_j\to X_\ell + X_k$ in $G_1$, and there are no swaps in $G$, either $i\in\{\ell,k\}$ or
$j\in\{\ell,k\}$. Assume, without loss of generality, that $j=k$. Then $e_\ell-e_i\in \stoich_1$,
for $e_i$ is the $i$th canonical vector of $\R^n$.
As $\Sp$ is minimal, if $X_i, X_\ell \in \Sp^{(\alpha)}$, necessarily $X_i$ and $X_\ell$ belong to the same 
connected component of $G_2$.
Then there is an undirected path between $X_i$ and $X_\ell$ in $G_2$.
By a telescopic sum, as in the proof of Lemma~\ref{lem:S1} below, we have that
each vector $e_\ell-e_i\in \stoich_1$ for each $X_i,X_\ell\in \Sp^{(\alpha)}$.
Fix $X_i\in \Sp^{(\alpha)}$; then for all $\ell \neq i$, $e_\ell-e_i\in \stoich_1$.
This gives us $n_\alpha-1$ linearly independent vectors for each $\alpha\ge 1$, which
are in turn linearly independent from the corresponding vectors obtained from each $\beta$, 
$\beta \neq \alpha$, $1\le \beta\le m$ (when $n_\alpha >1$). Adding over $\alpha \ge 1$, we obtain $n-m$ 
linearly independent vectors in $\stoich$. (Notice that if $\Sp^{(\alpha)}$ is a singleton, $n_\alpha-1=0$.)
Therefore, $\dim(\stoich)\ge p+n-m=s-m$, which is what we wanted to prove. The total number of conservation 
relations in a system is equal to the codimension of the kinetic subspace. If, morover, the kinetic subspace 
equals $\stoich$, then $\dim(\stoich^\bot)=m$, as claimed.
\end{proof}

%\bigskip

%%===============================
%\subsection*{Proof of Proposition~17 %\ref{prop:int} 
%and Theorem~13}%\ref{th:bss}}
%%===============================
We now focus on the occurrence of boundary steady states. Both proofs of Theorem~\ref{th:bss} and 
Proposition~\ref{prop:int} below are  based on the proof of Theorem~3.1 in \cite{fw13} (Theorem~2 in their ESM).

\begin{proof}[Proof of Theorem~\ref{th:bss}]
Assume there is a boundary steady state in some stoichiometric compatibility 
class that intersects the positive orthant. 

Following the proof of Theorem~2 in the ESM of \cite{fw13}, it can be seen that 
at steady state the concentration of an intermediate species $u_k$ is a nonnegative 
linear combination of monomials in the concentrations of the core species in the complexes 
that react via intermediates to it. Then, if there is an intermediate species $U_k$ such 
that $u_k=0$  at steady state, there is at least one core species (in a core complex that 
reacts via intermediates to $U_k$) that vanishes at steady state.
Therefore, if there is a boundary steady state, there is a core species $X_i$ 
such that $x_i=0$ at steady state.

By Lemma~\ref{lem:min},  we can refine the given MESSI structure in such a way that subsets 
of core species are in bijection with the connected components of $G_2$. In order to avoid 
unnecessary notation, we will assume in what follows that the partition is minimal.
Recall that a vertex in a directed graph has \emph{indegree zero} if it is not the
head of any directed edge. Let us define the subsets of indices
\begin{align*}
  L_0=&\{\beta \ge 1 : \text{indegree of }\Sp^{(\beta)}\text{ is }0\},\\
  L_k=&\{\beta \ge 1: \text{for any edge }  \Sp^{(\gamma)}\to\Sp^{(\beta)}\text{ in }\GE \text{ it holds that }
  \gamma \in L_t, \text{ with }  t<k \}\backslash \underset{t=0}{\overset{k-1}{\bigcup}} L_t,  k \ge1.
\end{align*}
The main observation that makes the following inductive argument work is that as  $\Sp$ 
is finite and there are no directed cycles in $\GE$, there must
exist a subset $\Sp^{(\beta)}$ with $1\le \beta\le m$ such that its indegree in $\GE$ is zero. 
This means that $L_0\neq \emptyset$.

Let $\ell \ge 0$ be minimal with the property that there exist $\alpha \in L_\ell$ and a core 
species $X_i \in \Sp^{(\alpha)}$ such that $x_i=0$ at steady state. Denote by $H_\alpha$ the 
connected component of $G_2$ with vertices the species in $\Sp^{(\alpha)}$.
Let $\rho(H_\alpha)$ be the generator of the kernel of $\Lap(H_\alpha)$ as in~\eqref{eq:rhoi}. 
Its entries are nonnegative sums of terms involving the rate constants $\tau$ and concentrations 
of species in $L_j$ with $j < \ell$. Then, $\rho(H_\alpha)$ has nonzero coordinates since 
$H_\alpha$ is strongly connected because $G_2$ is weakly reversible and $\ell$ is minimal.
Moreover, the following equation is satisfied at steady state for any $X_j \in \Sp^{(\alpha)}$:
\begin{equation}\label{eq:binrho}
 \rho(H_\alpha)_j \,  x_i- \rho(H_\alpha)_i \, x_{j}=0.
\end{equation}
Then the corresponding concentrations $x_j$ vanish at
steady state for any $X_j \in \Sp^{(\alpha)}$.
Take $k \in \intal(\alpha)$. The concentration of the intermediate species $u_k$ is a nonnegative 
linear combination of monomials in the concentrations of the core species that react via intermediates 
to it. By condition $(\cond)$ and rule $({\mathcal R}_3)$, any such monomial contains one variable 
indexed by a species in $\Sp^{(\alpha)}$.
As  $x_j=0$ for all $j \in \Sp^{(\alpha)}$ we get that $u_k=0$. 
This gives a contradiction by \eqref{eq:consalpha} in Theorem~\ref{th:conservations} since 
$C_\alpha$ is a nonzero constant.

As MESSI systems are conservative, the existence of nonnegative steady states is guaranteed by 
fixed-point arguments.  Indeed, a version of the Brouwer fixed-point theorem ensures that a nonnegative 
steady state exists in each compatibility class.  As the system has no boundary steady states,
we deduce the existence of a positive steady state in each compatibility class, and, in 
particular, the consistency of the system.
\end{proof}

%%===============================
\subsection*{Parametrizing the steady states}\label{sec:parproofs}
%%===============================

We  first prove  the existence of rational parametrizations under the hypotheses
of Theorem~\ref{th:par}.

\begin{proof}[Proof of Theorem~\ref{th:par}]
The arguments of the proof are similar to those in the proof 
of Theorem~\ref{th:bss}. Again, we will assume that the partition is
minimal to ease the notation.
Recall the sets $L_k$ in that proof and the crucial
remark that $L_0\neq \emptyset$ because the graph $\GE$ has no directed cycles.

% \textcolor{magenta}{By Remark~\ref{rem:G2}, steady state equations for the intermediate species can be read from the equations determined by the digraph $G_2$.}
For each $\alpha\geq 1$, fix $X_{i_\alpha}\in\Sp^{(\alpha)}$. Because of the minimality of the partition,  
any other $X_i \in \Sp^{(\alpha)}$ lies in the connected component $H_\alpha$ of $G_2$ containing 
$X_{i_\alpha}$. We can then parametrize all the species in $\Sp^{(\alpha)}$ for $\alpha\in L_k$ in 
terms of $x_{i_\alpha}$ and the species in $L_j$ for $j<k$, recursively using~\eqref{eq:binrho} to write
  \[x_i=\dfrac{\rho(H_\alpha)_i}{\rho(H_\alpha)_{i_\alpha}}x_{i_\alpha}\]
at steady state.
Moreover,  the concentrations of intermediate species can be rationally written in terms of all 
$x_{i_\alpha}, \alpha=1,\dots, m$ (see Definition~\ref{def:tau} and Remark~\ref{rem:tauq}). Thus, 
$\dim(V_f \cap \R_{>0}^s)=m$.
The last equality $\dim \stoich^\bot =m$ in the statement follows
from Theorem~\ref{th:conservations} using Remark~\ref{rem:m}.
\end{proof}

We show now that the {\em{positive}} steady states of s-toric MESSI systems can be described 
by binomials, and we postpone the proof of the choice of very explicit binomials when
any pair of nodes in the same component are connected by a single simple path.

\begin{proof}[Proof of Proposition~\ref{prop:int}]
 Following the arguments in \cite{fw13}, we first build a new labeled directed graph
$\widehat{G}$ with node set $\Sp^{(0)}\cup\{*\}$, which consists of collapsing all 
core complexes into the vertex $*$, and labeled directed edges that are obtained 
from hiding the core complexes in the labels. For example, $X_i+X_j\overset{\kappa}{\rightarrow} U_k$
becomes $*\overset{\kappa x_ix_j}{\longrightarrow}U_k$ and $U_k\overset{\kappa'}{\rightarrow} X_i+X_j$
becomes $U_k\overset{\kappa'}{\longrightarrow}*$. This new graph is linear and satisfies that
$\dot{\bu}=0$ is equivalent to $\Lap(\widehat{G})\, \widetilde{\bu}=0$, where
$\widetilde{\bu}=(u_1, \dots, u_p,1)^t$ (this last coordinate stands for 
``the concentration'' of the node $*$).
It is important to notice that the graph $\widehat{G}$ is strongly connected by condition~$(\cond)$.

Then, at steady state we obtain that $\widetilde{\bu}$ is proportional to the vector
$\rho_{\widetilde{G}}= (\rho_1, \dots, \rho_p, \rho)$ defined in~\eqref{eq:rhoi}, 
so that $u_k=\rho_k/\rho$ for any $k=1, \dots, p$.  It is straightforward to check 
that every $*$-tree involves labels in $\Q[\kappa]$. 
On the other hand, for every $U_k$, as by condition $(\cond')$ there is a unique
core complex $y_{i_kj_k}$ such that $y_{i_kj_k}\uri y_k$, every $k$-tree involves labels in
$\Q[\kappa, x_{i_k}x_{j_k}]$. Moreover, as there must be a path from $*$ to $U_k$ in each $k$-tree,
$x_{i_k}x_{j_k}$ necessarily appears as a label on those trees. Then,
\begin{equation}\label{eq:uk}
u_k= \mu_k \,  x_{i_k}x_{j_k}, \, k=1, \dots p,
\end{equation}
where 
\[\mu_k =\frac{\rho_k}{x_{i_k}x_{j_k}} \frac{1} {\rho} \in \Q(\kappa).\]
\end{proof}

\begin{proof}[Proof of the first part of Theorem~\ref{th:toric_toric}]
 Let  $\bx$ be a positive steady state and $X_i\neq X_j$ in $\Sp^{(\alpha)}$ in 
 the same connected component $H$ of $G_2$. Let $\rho(H)$ be the explicit generator 
 of the kernel of $\Lap(H)$ as in~\eqref{eq:rhoi}. Then, as in~\eqref{eq:binrho}, 
 $\rho(H)_j x_i- \rho(H)_i x_j=0$.
 Fix a $j$-tree $T_0$. The product of the labels $c^{T_0}$ of all the edges
 in $T_0$ is equal to a monomial $x^{\gamma_j}$ times a polynomial in
 the rate constants $\tau$. For any other $j$-tree $T$,  condition ($\cond'''$)
 ensures that $c^T = \mu_T(\tau) \, c^{T_0}$, with $\mu_T \in \Q(\tau)$. 
 It follows that the quotient of the sum $\rho(H)_j$ by $x^{\gamma_j}$
 lies in $\Q(\tau)$ (and also there exists a monomial $x^{\gamma_i}$ such that 
 $\rho(H)_i/x^{\gamma_i} \in \Q(\tau)$). Call
 \begin{equation}\label{eq:eta}
 \eta_{ij}=\rho(H)_i x^{\gamma_j}/\rho(H)_j x^{\gamma_i}\in \Q(\tau) \subset \Q(\Ka).
 \end{equation}
 Then, 
$x^{\gamma_j}x_i-\eta_{ij}x^{\gamma_i}x_j=0$.
 Combining this with~\eqref{eq:uk}, the {\it positive} steady states can be described 
 by the binomials:
 \begin{align}\label{eq:b}
  & u_k-\mu_k \bx^{\varphi(k)} \text{ for each intermediate species }U_k\\ \label{eq:b1}
  & x^{\gamma_j}x_i-\eta_{ij}x^{\gamma_i}x_j \text{ if }  X_i , X_j \text{ lie  in the same connected component of } G_2.
 \end{align}
 We can fix one species $X_{i_h}$ in each connected component $H$
 of $G_2$ and consider the binomial equations of the form in~\eqref{eq:b1} where $i=i_h$.   
 There are $p$ further binomial equations in~\eqref{eq:b}. These $p + n-m'= s-m'$ binomial 
 equations cut out the positive steady states.
\end{proof}

To prove the second part of Theorem~\ref{th:toric_toric},
we first need a combinatorial lemma.
 
\begin{lemma}\label{lem:unique_tree}
Assume $H$ is a digraph with the property that there is a unique simple path $P_{ij}$ 
from any node $X_i$ to any node $X_j$  in the same connected component of $H$.
Then the following hold:
\begin{enumerate}
\item[(i)] For each vertex $X_i$ of $H$ there is only one $i$-tree, denoted by $T_i$.
\item[(ii)] Let  $X_i\overset{\tau x_h}{\longrightarrow}X_j$ be  an edge in $H$. Then, 
$T_i$ is obtained from $T_j$ by deleting the edge
 $X_i\overset{\tau x_h}{\longrightarrow}X_j$ and adding the edge
 $X_j\overset{\tau' x_m}{\longrightarrow}X_\ell$, where $X_\ell$ is such that
 $X_j\overset{\tau' x_m}{\longrightarrow}X_\ell$ is in $P_{ji}$.
\end{enumerate}
\end{lemma}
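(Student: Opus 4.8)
The plan is to deduce both parts from the single structural fact that an $i$-tree is completely determined by its collection of out-edges, together with the hypothesis that canonical simple paths are unique. First I would record two preliminary observations. (a) The hypothesis that every pair of vertices in a common (undirected) connected component is joined by a directed path in each direction forces each connected component of $H$ to be strongly connected; hence $i$-trees exist, and since an $i$-tree lives inside one component it is harmless to assume $H$ connected. (b) In any $i$-tree $T$, every vertex $X_j$ with $j \neq i$ has out-degree exactly $1$ (there are $n-1$ edges, only $X_i$ has out-degree $0$, so the count is forced), and following out-edges from $X_j$ yields a directed walk that cannot repeat a vertex (a repetition would produce an undirected cycle in the tree $T$) and therefore terminates at the unique sink $X_i$; this walk is a \emph{simple} directed path from $X_j$ to $X_i$ in $H$.

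For part (i), the key step is that the path produced in (b) must equal $P_{ji}$, by uniqueness of simple paths. Consequently the out-edge of $X_j$ in $T$ is forced to be the first edge of $P_{ji}$, for every $j\neq i$. Since the $n-1$ edges of $T$ are precisely these $n-1$ out-edges (distinct, as they have distinct sources), we get $E(T) = \{\,\text{first edge of } P_{ji} : j\neq i\,\}$, independent of $T$; this proves uniqueness. For completeness I would also check that this prescribed edge set does define an $i$-tree: it is spanning, has $n-1$ edges, and is acyclic — the last point follows by showing that tracing out-edges from any $X_k$ reproduces $P_{ki}$ step by step (each successive tail of $P_{ki}$ is again a canonical path by uniqueness), so the walk reaches $X_i$ without repetition.

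For part (ii), write $e : X_i \overset{\tau x_h}{\longrightarrow} X_j$ for the given edge. First observe that the single edge $e$ is itself a simple path from $X_i$ to $X_j$, so $P_{ij}=e$, and hence the out-edge of $X_i$ in $T_j$ is $e$. Let $e' : X_j \overset{\tau' x_m}{\longrightarrow} X_\ell$ be the first edge of $P_{ji}$; this is the out-edge of $X_j$ in $T_i$. The heart of the argument is to compare, for each vertex $X_k \notin \{X_i, X_j\}$, the first edges of $P_{ki}$ and $P_{kj}$. I split into two cases: if $X_j$ does not lie on $P_{ki}$, then $P_{ki}\cdot e$ is a simple path from $X_k$ to $X_j$, so $P_{kj}=P_{ki}\cdot e$; if $X_j$ does lie on $P_{ki}$, then the sub-path from $X_j$ onward is $P_{ji}$ and the sub-path up to $X_j$ is $P_{kj}$, so $P_{ki}=P_{kj}\cdot P_{ji}$. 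In both cases $P_{ki}$ and $P_{kj}$ begin with the same edge, hence $X_k$ has the same out-edge in $T_i$ and in $T_j$. Using the edge-set description from part (i), it follows that $E(T_i)$ and $E(T_j)$ agree except at $X_i$ (out-edge $e$ in $T_j$, none in $T_i$) and at $X_j$ (out-edge $e'$ in $T_i$, none in $T_j$), i.e. $T_i = (T_j\setminus\{e\})\cup\{e'\}$, which is the claim; the labels simply ride along with the edges.

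I expect the only delicate point to be the bookkeeping in the case analysis of (ii): verifying that the concatenations $P_{ki}\cdot e$ and $P_{kj}\cdot P_{ji}$ are genuinely simple (so that path-uniqueness applies) and dispatching the degenerate situations $X_k = X_i$ or $X_k = X_j$ and the possibility of loops. Everything else is a routine consequence of the unique-simple-path hypothesis and the standard description of an $i$-tree by its out-edges.
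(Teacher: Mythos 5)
Your argument is correct. For part (i) you follow essentially the paper's route: in any $i$-tree each vertex $X_j\neq X_i$ has exactly one out-edge, the induced walk to the sink is a simple path and hence equals $P_{ji}$, so the tree is forced; your packaging of the tree as the set of first edges of the paths $P_{ji}$, plus the check that this prescribed edge set really is an $i$-tree, is a slightly more explicit version of the same idea. For part (ii) you genuinely diverge. The paper forms the candidate $T'=(T_j\setminus\{e\})\cup\{e'\}$ and verifies directly that it is an $i$-tree: the only nontrivial point is acyclicity, which is ruled out by exhibiting two distinct simple paths from $X_\ell$ to $X_j$ (one inside $T'$, the other $P_{\ell i}$ followed by the edge $X_i\to X_j$), contradicting path-uniqueness; then part (i) gives $T'=T_i$. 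You instead compare $E(T_i)$ and $E(T_j)$ vertex by vertex, using the dichotomy of whether $X_j$ lies on $P_{ki}$ to show that $P_{ki}$ and $P_{kj}$ share their first edge for every $k\notin\{i,j\}$. Both are sound: the paper avoids your case analysis but needs the acyclicity contradiction, while your version makes part (ii) a mechanical consequence of the edge-set description from part (i) and yields the identities $P_{kj}=P_{ki}\cdot e$ and $P_{ki}=P_{kj}\cdot P_{ji}$ as a byproduct. The delicate points you flag (simplicity of the concatenations, the degenerate vertices $X_k\in\{X_i,X_j\}$) are exactly the right ones, and the one-line checks you indicate all go through.
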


\begin{proof}
Proof of (i): Let $X_j$ ($j\neq i$) be in the same connected component of $H$ as $X_i$. 
In any $i$-tree there is an edge leaving from $X_j$; otherwise $X_j$ would be another 
sink different from $X_i$. Moreover, there must be a path from $X_j$ to $X_i$ in any 
such $i$-tree. If the path visits some vertex twice (or more times), there would be a 
cycle in the underlying undirected graph of the tree, which is not possible.
Hence, the path is simple. By hypothesis, there is only one choice for this path,
and so there is only one $i$-tree in $H$.

Proof of (ii): 
 Call $T'$ the new digraph obtained from $T_j$ by deleting the edge
 $X_i\overset{\tau x_h}{\longrightarrow}X_j$ and adding the edge
 $X_j\overset{\tau' x_m}{\longrightarrow}X_\ell$. $T'$ still visits
 every vertex of the corresponding connected component of $H$, and the only vertex
 from which no arrows leave is $X_i$. We claim that there are no cycles
 in $T'$. In fact, the only possible cycle in $T'$ must involve the new edge from $X_j$ to $X_\ell$. Then,
 there is a directed path in $T'$ (and therefore in $H$) from $X_\ell$ to $X_j$. Moreover,
 as the paths in $T_j$ are simple, this path from $X_\ell$ to $X_j$ in $T'$ is simple. But in
 $H$ there is another simple path $P_{\ell i}\cup\{X_i\to X_j\}$ from $X_\ell$ to $X_j$,
 which is different from the one obtained in $T'$ since the edge $X_i\to X_j$ does not exist in $T'$.
This is a contradiction since by assumption there is only one simple path in $H$ from $X_\ell$ to $X_j$. Then, $T'=T_i$.
\end{proof}

\begin{proof}[Proof of the second part of Theorem~\ref{th:toric_toric}]
 If there is a unique simple path $P_{ij}$ from each $X_i$ to each $X_j$ in the same connected component
 of $G_2$, and $X_i\overset{\tau x_h}{\longrightarrow}X_j$ is in $G_2$, the binomial in~\eqref{eq:b}
 involves the edges on $T_i$ and the edges  on $T_j$. But, from Lemma~\ref{lem:unique_tree}, $T_i$ and $T_j$ only differ in the 
edges
 $X_i\overset{\tau x_h}{\longrightarrow}X_j$ and
 $X_j\overset{\tau' x_m}{\longrightarrow}X_\ell$, where $X_\ell$ is such that
 $X_j\overset{\tau' x_m}{\longrightarrow}X_\ell$ is in $P_{ji}$.
Then, after taking out a monomial, the following binomials define the positive steady states:
 
 \begin{minipage}{0.7\textwidth}
\begin{align*}
  & u_k-\mu_k \bx^{\varphi(k)}\text{ for each intermediate species }U_k\\
  & \tau x_h x_i-\tau'x_m x_j \text{ if }  X_i\overset{\tau x_h}{\longrightarrow}X_j \text{ in } G_2^\circ
  \text{ and } X_j\overset{\tau' x_m}{\longrightarrow}X_\ell \text{ is in } P_{ji}.
 \end{align*}
\end{minipage}
\begin{minipage}{0.25\textwidth}
%  \begin{figure}
{\footnotesize
\begin{tikzpicture}
  \matrix (m) [matrix of math nodes, row sep=0.5em,ampersand replacement=\&,
    column sep=0.5em]{
    X_i \& \overset{\tau x_h}{\longrightarrow} \& X_j \\
    \& \& \\
     \& \& X_\ell\\};
     \draw[->,dashed](m-3-3) to[in=-55,out=190] node[below] (x) {} (m-1-1);
     \draw[->](m-1-3) to node[right] (x) {\tiny $\tau' x_m$} (m-3-3);
 \end{tikzpicture}
 }
% \caption{Path}\label{fig:path}
% \end{figure}
 \end{minipage}
 \vspace{1mm}
\end{proof}

%%===============================
\subsection*{Toric MESSI systems and Multistationarity}\label{sec:toricmss}
%%===============================

We will prove Theorem~\ref{th:monostationarity} 
by adapting Proposition~3.9 and Corollary~2.15 in \cite{mfrcsd13} and Theorem~5.5 
in \cite{PM12} to our setting.
We recall that a chemical reaction system has the capacity for multistationarity if
there exists a choice of rate constants such that there are two or more {\it positive}  
steady states in one stoichiometric compatibility class
$(x^0+\stoich) \cap \mathbb{R}^s_{\geq 0}$ for some initial state
$x^0 \in \mathbb{R}^s_{\geq 0}$ (and it is monostationary otherwise).

\begin{remark}\label{rem:inj}
Consider a toric MESSI system whose positive steady states can be described by binomial equations of
the form $x^{y'}-\eta x^{y}=0$. Equivalently, the positive steady states of the toric MESSI system 
can be described by the monomial equations $x^{y'-y}=\eta$, where we consider Laurent monomials.
We construct now a matrix $\binoM$ whose columns form a basis of the subspace $\bino$ generated
by these difference vectors $y'-y$, and also the monomial map $x\mapsto x^\binoM$,
where $(x^\binoM)_j=x^{\binoM_j}=x_1^{\binoM_{1j}}\cdot \ldots \cdot x_s^{\binoM_{sj}}$,
for each column $\binoM_j$ of $\binoM$. Then  $x^*$ is a positive steady state of the
system if and only if ${x^*}^{\binoM}=\tilde{\eta}$ for an appropriate vector $\tilde{\eta}$.
Thus, the system is monostationary \emph{for any choice of rate constants}
if and only if the monomial map $x\mapsto x^\binoM$ is injective on each stoichiometric compatibility class
$(x^0+\stoich)\cap \R_{>0}^s$ for every $x^0\in \R_{>0}^s$.
\end{remark}

\begin{proof}[Proof of Theorem~\ref{th:monostationarity}]
Under the hypotheses in the statement, we want to prove the equivalence of the assertions:
\begin{enumerate}
 \item[(i)] \label{itm:monoa} The associated MESSI system is monostationary.
 \item[(ii)] \label{itm:unmixeda} The signs sets $\Sigma, \Sigma^\bot,  
\Sigma_\bot,\Sigma^\bot_\bot$ are unmixed.
 \item[(iii)] \label{itm:orthanta} For all orthants $\mathcal{O}\in\{-1,0,1\}^s,
 \mathcal{O}\neq\mathbf{0}$, either $\stoich\cap \mathcal{O}= \emptyset$
 or $\bino^\bot\cap\mathcal{O}= \emptyset$.
\end{enumerate}
We first prove %(\ref{itm:monoa}) $\Leftrightarrow$ (\ref{itm:unmixeda}) 
(i) $\Leftrightarrow$ (ii) by adapting the results in
\cite{mfrcsd13}. We will see that (i) and (ii) are both equivalent to
\begin{equation*}
 \{\sign(v): v\in \ker(\binoM^t)\}\cap\{\sign(v): v\in \stoich\}=\{0\},
\end{equation*}
where $(\sign(v))_i=\sign(v_i)$ for $i=1,\dots,s$. This is also equivalent 
by the definition of $\bino^\bot$ to
\begin{equation}\label{eq:signos}
 \{\sign(v): v\in \bino^\bot\}\cap\{\sign(v): v\in \stoich\}=\{0\}.
\end{equation}
By Remark~\ref{rem:inj}, (i) is equivalent to the injectivity of the map $x\mapsto x^\binoM$
on each stoichiometric compatibility class $(x^0+\stoich)\cap \R_{>0}^s$. We deduce
from Proposition~3.9 in \cite{mfrcsd13} that  (i) is equivalent to~\eqref{eq:signos}.
Previously, in Corollary~2.15 the authors had proved that~\eqref{eq:signos} is in turn equivalent to
asking that for all $J\subseteq [s]$, $\#J=s-d=\rank(\binoM)=\rank(\stoichM)$,
$\det(\binoM_J)\det(\stoichM_J)$ is either zero or has the same sign as all other nonzero products,
and moreover, at least one such product is nonzero. In other words, \eqref{eq:signos}
is equivalent to the set $\Sigma$
being unmixed. By Lemma~\ref{lem:signs}, 
this is equivalent to (ii).
To finish the proof, we just need to show that 
\eqref{eq:signos} $\Leftrightarrow$ (iii), but this is straightforward.
\end{proof}

We now prove Theorem~\ref{th:multistationarity}, and we postpone the proof of Proposition~\ref{prop:rankM},
which needs an ancillary lemma.

\begin{proof}[Proof of Theorem~\ref{th:multistationarity}]
 
By Theorem~\ref{th:monostationarity}, 
if the system is not monostationary, we know that there exists an orthant $\mathcal{O}\in\{-1,0,1\}^s,
\mathcal{O}\neq\mathbf{0}$, such that $\stoich\cap \mathcal{O}\neq \emptyset$ and
$\bino^\bot\cap\mathcal{O}\neq \emptyset$. Then, there exist
 $\ww \in \stoich, \vv \in \bino^\bot$ such that $\sign(\ww)=\sign(\vv)$.
 Inspired by Theorem~5.5 in \cite{PM12},  for any index $i$ not in the support of $\vv$,
we choose any positive real number $h_i$ and we define
 positive vectors $\bx^1$ and $\bx^2$ as follows:
 \begin{align*}
  %\label{eq:def_x1}
  \left(x^1_i\right)_{i=1,\, \ldots,\, s} ~&=~
    \left\lbrace\begin{array}{lll}
               \frac{w_{i}}{e^{v_{i}}-1}, & & \text{if }v_{i} \neq 0 \\
	       h_i, & & \text{otherwise,}
              \end{array}\right.\\
  {x}^2 ~&=~ \textrm{diag}(e^{\vv})\, {x}^1
 \end{align*}
where ``$e^x$'' for a vector $x \in \R_{>0}^s$ denotes the vector
$(e^{x_1}, e^{x_2}, \dots, e^{x_s})\in\mathbb{R}^s$ and $\mathrm{diag}(x)$
denotes the diagonal matrix whose diagonal is the vector $x$.

As the system is consistent, there exists a positive vector $\lambda$ such that 
$\sum_{y \to y'}\lambda_{yy'} (y'-y)=0$.  For any edge $y \to y'$,
take the (positive) rate constant 
\[k_{y y'} = \lambda_{y y'} (\bx^1)^{-y},\]
which defines a positive vector $\Ka$ satisfying
\[f(\bx^1,\Ka)= \underset{y\to y'}{\sum} \kappa_{yy'} \,  (\bx^1)^y \, (y'-y) = 0.\]
Then, $\bx^1$ is a positive steady state of the system for these reaction
rate constants  $\Ka$. As the system is a toric MESSI system $\bx^1$ is a
solution of the binomial equations that describe the positive steady states.
Call $\boldsymbol{\eta}:=(x^1)^\binoM$. Then, $\bx$ is a positive steady state of
the system if and only if $\bx^\binoM=\boldsymbol{\eta}$.
It can be checked that $((x^2)^\binoM)_j=e^{\langle \vv,\binoM_j\rangle}(x^1)^{\binoM_j}$, and,
as $\vv\in\bino^\bot$, we have $(x^1)^\binoM=(x^2)^\binoM=\boldsymbol{\eta}$.
Therefore, $x^2$ is also a positive steady state
of the system. Moreover, $x^2-x^1=\ww\in\stoich$, and so
$x^1$ and $x^2$ belong to the same stoichiometric compatibility class.
\end{proof}

Recall the definitions of $S_1$ and $\tilde{S}_1$ before Remark~\ref{rem:n+p}.

\begin{lemma}\label{lem:S1}
Assume that condition $(\cond')$ in Definition~\ref{def:storic}  holds, and consider the vectors
 \begin{equation}\label{eq:vk}
  v_k=y_k-y_{i_kj_k}.
 \end{equation}
 Then, $\stoich=\tilde{\stoich}_1\oplus\langle v_1,\dots,v_p\rangle$.
\end{lemma}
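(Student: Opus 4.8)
The plan is to establish two set inclusions which, combined with a dimension count, yield the direct-sum decomposition. First I would check that each vector $v_k=y_k-y_{i_kj_k}$ from~\eqref{eq:vk} lies in $\stoich$. By condition $(\cond')$ there is a unique core complex $y_{i_kj_k}$ with $y_{i_kj_k}\uri y_k$ in $G$, and by the definition of $\uri$ this means either that $y_{i_kj_k}\to y_k$ is itself a reaction of $G$ --- in which case $v_k$ is literally a reaction vector --- or that there is a path $y_{i_kj_k}=z_0\to z_1\to\cdots\to z_r=y_k$ all of whose internal vertices are intermediate complexes. Writing $v_k=\sum_{t=1}^{r}(z_t-z_{t-1})$ as a telescoping sum of reaction vectors shows $v_k\in\stoich$ in both cases. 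Note that the uniqueness part of $(\cond')$ is used only to make the vectors $v_k$ unambiguously defined; the argument itself is insensitive to the choice of core complex mapping into $y_k$.

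Next I would show $\tilde{\stoich}_1\subseteq\stoich$. The subspace $\stoich_1\subseteq\R^n$ is spanned by the vectors $y_{\ell m}-y_{ij}$ ranging over the edges $y_{ij}\to y_{\ell m}$ of $G_1$, that is, over pairs of core complexes with $y_{ij}\uri y_{\ell m}$ in $G$; for each such pair the same telescoping argument gives $y_{\ell m}-y_{ij}\in\stoich$. Since core complexes have vanishing intermediate coordinates, the vector $y_{\ell m}-y_{ij}\in\R^{p+n}$ coincides with the lift to $\R^{p+n}$ of the corresponding generator of $\stoich_1$, and hence $\tilde{\stoich}_1\subseteq\stoich$. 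Putting the two inclusions together we obtain $\tilde{\stoich}_1+\langle v_1,\dots,v_p\rangle\subseteq\stoich$.

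I would then verify that this sum is direct. Because $y_k$ is the unimolecular intermediate complex $U_k$, its restriction to the first $p$ (intermediate) coordinates is the standard basis vector $e_k\in\R^p$, whereas $y_{i_kj_k}$, being a core complex, has all of its first $p$ coordinates equal to zero; thus the projection of $v_k$ onto the intermediate coordinates equals $e_k$. Consequently $v_1,\dots,v_p$ are linearly independent, and no nontrivial linear combination of them can belong to $\tilde{\stoich}_1$, whose elements have zero intermediate coordinates. Therefore $\tilde{\stoich}_1\cap\langle v_1,\dots,v_p\rangle=\{0\}$ and $\dim\bigl(\tilde{\stoich}_1\oplus\langle v_1,\dots,v_p\rangle\bigr)=\dim(\stoich_1)+p$, using that $\dim(\tilde{\stoich}_1)=\dim(\stoich_1)$.

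Finally I would invoke the dimension identity $\dim(\stoich)=\dim(\stoich_1)+p$ of Remark~\ref{rem:n+p} (equation~\eqref{eq:dim}): the inclusion of the previous paragraph together with the equality of dimensions forces $\stoich=\tilde{\stoich}_1\oplus\langle v_1,\dots,v_p\rangle$, as desired. I do not expect a genuine obstacle in this argument; the only point requiring care is the bookkeeping between $\stoich_1\subseteq\R^n$ and its lift $\tilde{\stoich}_1\subseteq\R^{p+n}$, together with the observation that a core complex, viewed in $\R^{p+n}$, has zero intermediate coordinates, so that ``lifting'' acts as the identity on such vectors --- this is precisely what lets the telescoping computations carried out in $G$ be reused verbatim for $G_1$.
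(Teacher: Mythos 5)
Your proof is correct and follows essentially the same route as the paper's: the telescoping-sum argument to show $v_k\in\stoich$ and $\tilde{\stoich}_1\subseteq\stoich$, directness of the sum via the intermediate coordinates, and the dimension identity of Remark~\ref{rem:n+p} to upgrade the inclusion to an equality. Your version is slightly more explicit about why the sum is direct (the projection of $v_k$ onto the intermediate coordinates being $e_k$), which the paper leaves as ``clear from the definitions.''
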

\begin{proof}
It is clear, from the definitions of $\tilde{\stoich}_1$ and the vectors $v_k$,
that $\tilde{\stoich}_1\cap\langle v_1,\dots,v_p\rangle=\{0\}$ (as no intermediate
complex appears in the reactions of $G_1$).
Moreover, the vectors $v_k$ are linearly independent, and therefore
$\dim(\langle v_1,\dots,v_p\rangle)=p$.
By Remark~\ref{rem:n+p}, we know that $\dim(\stoich)=\dim(S_1)+p=\dim(\tilde{S}_1)+p$.
Thus, we only need to show now that $\stoich\supseteq \tilde{\stoich}_1\oplus\langle v_1,\dots,v_p\rangle$. 

For simplicity, we will assume that all core complexes consist of
two species, but it is easy to adapt the proof for the case where the core
complexes consist of only one species. 
We first notice that $v_k\in \stoich$ for all $k$.
In fact, if $X_{i_k}+X_{j_k}\uri U_k$, there exist
$U_{k_1},\dots,U_{k_t}$ intermediates such that the chain of reactions
$X_{i_k}+X_{j_k}\to U_{k_1}\to \dots \to U_{k_t}\to U_k$ is in $G$. Therefore, 
from the telescopic sum $y_k-y_{i_kj_k}=(y_k-y_{k_t})+(y_{k_t}-y_{k_t-1})+
\dots+(y_{k_2}-y_{k_1})+(y_{k_1}-y_{i_kj_k})$,
we see that $v_k \in \stoich$, as we wanted to prove.
Given $X_i+X_j\to X_\ell+X_m$ in $G_1$, there
exist intermediates $U_{k_1},\dots,U_{k_t}$ such that the chain of reactions
$X_i+X_j\to U_{k_1}\to \dots \to U_{k_t}\to X_\ell+X_m$ is in $G$. As above, from a telescopic sum
we deduce that $y_{\ell m}-y_{ij}\in \stoich$. Hence, $\tilde{\stoich}_1\subseteq\stoich$ and
$\tilde{\stoich}_1\oplus\langle v_1,\dots,v_p\rangle\subseteq\stoich$.
\end{proof}

\begin{proof}[Proof of Proposition~\ref{prop:rankM}]
By Theorem~\ref{th:par}, we know that $\rank(\stoichMp)=m$.
  We show now that $\rank(\binoM)=s-m$, or equivalently that $\mathrm{dim}(\bino)=s-m=p+n-m$.
 From~\eqref{eq:uk} we see that the vectors $v_k$ defined
 in~\eqref{eq:vk} live in $\bino$ for all $1\le k\le p$ (recall that $y_k$ denotes the vector
corresponding to the monomolecular complex $U_k$). This implies that
 $\langle v_1, \dots, v_p\rangle \subseteq \bino$.
 As none of the exponents determined by~\eqref{eq:b} involves any variable $u_i$,
 it is enough to find $n-m$ linearly independent vectors in $\bino$ that have support in
 the last $n$ coordinates.
 
 Call $\bino_x$ the projection $\pi_x(\bino)$ of $\bino$ onto the last $n$ coordinates
 corresponding to  $x_1,\dots,x_n$. We need to prove then that $\dim(\bino_x)=n-m$.
 For each $\alpha\ge 1$, fix $i_\alpha \in \Sp^{(\alpha)}$
 and for each $X_j\in \Sp^{(\alpha)}$, $j\neq i_\alpha$, call
 $z_{i_\alpha j}=(\gamma_j+e_{i_\alpha})-(\gamma_{i_\alpha}+e_j)$,
 the vector in $\R^n$ deduced from the exponents of the binomials in~\eqref{eq:b}.
Denote by $\bino_\alpha$ the linear subspace with generators $\{z_{i_\alpha j}\}_{j\neq i_\alpha}$. 
 We claim that $\dim(\bino_\alpha)=n_\alpha-1$ for any $\alpha\ge 1$ and that
$\bino_x=\bino_1\oplus \bino_2\oplus\dots \oplus \bino_m$ 

To prove these claims, we need to recall the proof of Theorem~\ref{th:bss}.
We consider again the subsets $L_0, L_1, \dots$, and we assume that $\alpha \in L_k$. Then, as remarked
in the last paragraph of that proof, it holds that the connected component $G_2^\alpha$ with
vertices in $\Sp^{(\alpha)}$ (ensured by Lemma~\ref{lem:min} by our hypothesis of minimality of the partition) 
has labels in $\Q[\tau, \bx_\beta:\beta\in L_t, t<k]$. This implies that the $j$th coordinate of
the vector $z_{i_\alpha h}$ equals $-1$ if $h=j$ and $0$ otherwise. So the vectors
$\{z_{i_\alpha j}\}_{j\neq i_\alpha}$ are linearly independent, that is, $\dim(\bino_\alpha)=n_\alpha-1$,
and by a similar argument we deduce that the sum is direct. 
Therefore, $\dim(\bino_x)=\sum_{\alpha=1}^m(n_\alpha-1)=n-m$, as wanted.
 
\end{proof}

\subsection*{Algorithm}\label{ap:algorithm}

Step~1 in the algorithm follows directly from Theorem~\ref{th:monostationarity}.
Step~7 follows from 
\cite{CoFlRa08,Fein95DefOne,PM12} and Theorem~\ref{th:multistationarity}. 
Theorem~\ref{th:toric_toric}
explains how to find a matrix $B$ for an s-toric MESSI system.
The intermediate steps follow from the following considerations.
Given a matrix $A$, every vector in $\mathrm{rowspan}(A)$ is a conformal sum of circuits. 
(We refer the reader to \cite{mure16,Rock69,St95}.)
Moreover, the circuits of a matrix $A\in \R^{d\times s}$ of rank $d$ are found in the following way. 
For $J\subseteq [s]$ with $\#J=d-1$,  define $r_J\in \mathrm{rowspan}(A)$ as the vector 
$r_{J,\ell}= (-1)^{\mu(\ell,J)} \det(A_{J \cup \{\ell\}})$, where
$\mu(\ell,J)$ is the sign of the permutation of $J \cup \{\ell\}$ which takes $\ell$ followed by
the ordered elements of $J$ to the ordered elements of $J \cup \{\ell\}$,  for all $\ell \in \{0,
\dots, s\}$.
The following lemma is straightforward and well known.

\begin{lemma}\label{lem:circuits}
Let $A\in \R^{d\times s}$ be a matrix of rank $d$ and $J\subseteq [s]$ such that $\#J=d-1$ and
$\mathrm{rank}(A_J)=d-1$. Then $r_J$ is a circuit of $A$. 
Moreover, up to a multiplicative constant, 
these are all the circuits of $A$ (possibly repeated).
\end{lemma}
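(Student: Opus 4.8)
The plan is to verify the two assertions of the lemma directly from the definition of a circuit as a nonzero element of $\mathrm{rowspan}(A)$ with inclusion-minimal support. First I would check that $r_J$, when nonzero, indeed lies in $\mathrm{rowspan}(A)$. Each coordinate $r_{J,\ell}$ is, up to sign, a maximal minor of $A$ obtained by bordering the columns indexed by $J$ with the column indexed by $\ell$; expanding such a determinant along the new column and collecting terms exhibits $r_J$ as an explicit linear combination of the rows of $A$ (this is the classical Cramer/cofactor identity: $A\, r_J^{\,t}=0$ reads off as the statement that each row of $A$ has zero pairing with $r_J$, because appending a repeated row to $A_{J\cup\{\ell\}}$ gives a singular matrix). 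So $r_J \in \mathrm{rowspan}(A)$. The hypothesis $\mathrm{rank}(A_J)=d-1$ guarantees $r_J\neq 0$: some $(d{-}1)\times(d{-}1)$ minor of $A_J$ is nonzero, say using rows indexed by a set $R$; then choosing $\ell$ so that the $d\times d$ minor on rows $R\cup\{\text{one more}\}$ and columns $J\cup\{\ell\}$ is nonzero — which exists since $\mathrm{rank}(A)=d>\mathrm{rank}(A_J)$ — shows $r_{J,\ell}\neq 0$.

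Next I would bound the support of $r_J$ and argue minimality. By construction, $\mathrm{supp}(r_J)\subseteq J\cup\{\ell : \det(A_{J\cup\{\ell\}})\neq 0\}$, and more to the point, if $\ell_1,\ell_2\notin J$ both give nonzero coordinates then the two index sets $J\cup\{\ell_1\}$ and $J\cup\{\ell_2\}$ together span a $d$-dimensional column space containing $A_J$'s $(d{-}1)$-dimensional one; hence $\mathrm{supp}(r_J)$ is contained in a set of columns whose span has dimension exactly $d$. I would then invoke the standard fact that for a subspace $W=\mathrm{rowspan}(A)$ of dimension $d$, a nonzero vector $v\in W$ has minimal support if and only if the columns of $A$ indexed by $\mathrm{supp}(v)$ are "minimally dependent," i.e. they span a space of dimension $|\mathrm{supp}(v)|-1$ with every proper subset independent. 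A short linear-algebra argument shows $r_J$ has precisely this property: any vector $w\in W$ with $\mathrm{supp}(w)\subsetneq\mathrm{supp}(r_J)$ would force the columns in $\mathrm{supp}(w)$ to be dependent, but those columns are among $J$-columns plus at most one extra, and the $J$-columns are independent (rank $d-1$), so such $w$ must be zero. Hence $r_J$ is a circuit.

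For the converse ("up to a scalar, these are all the circuits"), I would start from an arbitrary circuit $c$ with support $S=\mathrm{supp}(c)$. Minimality of $S$ forces the columns $A_S$ to be minimally dependent, so $|S|\leq d+1$ and in fact $\mathrm{rank}(A_S)=|S|-1$; pick any column $\ell_0\in S$ and set $J:=S\setminus\{\ell_0\}$. Then $\mathrm{rank}(A_J)=|S|-1$; if $|S|-1=d-1$ we are in the situation of the lemma and I claim $c$ is proportional to $r_J$: both are nonzero vectors of $W$ supported on $S$ with the $J$-columns independent, so each is determined up to scalar by the single linear relation among $A_S$'s columns. (If $|S|-1<d-1$, one extends $J$ by further columns outside $S$ preserving the rank count until $\#J=d-1$; this does not change the relation on $S$ and yields the same conclusion, so every circuit is proportional to some $r_J$ with $\#J=d-1$; different $J$'s with the same relation produce proportional $r_J$'s, which is the "possibly repeated" caveat.)

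The routine cofactor-expansion identities are the only computations, and the main (mild) obstacle is bookkeeping the sign convention $\mu(\ell,J)$ so that $r_J$ genuinely annihilates each row of $A$ rather than differing row-by-row by a sign; once the permutation sign is fixed as in the statement, this is immediate from Laplace expansion. The rest is the standard dictionary between circuits of a vector configuration and minimal linear dependencies among columns, for which I would cite \cite{Bjetal99,RiZi97,Rock69}.
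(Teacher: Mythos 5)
The paper itself offers no proof of this lemma (it is stated as ``straightforward and well known'' with a pointer to the oriented-matroid literature), so the only question is whether your argument is correct. The opening step is fine: expanding $\det\bigl(A_{J\cup\{\ell\}}\bigr)$ along the bordering column $A_\ell$ writes $r_{J,\ell}=\sum_i c_i A_{i\ell}$ with coefficients $c_i$ independent of $\ell$, so $r_J=c^tA\in\mathrm{rowspan}(A)$, and $\rank(A_J)=d-1<d=\rank(A)$ gives a column $A_\ell\notin\mathrm{span}(A_J)$ and hence $r_J\neq 0$. But your parenthetical justification is false: $A\,r_J^t=0$ would say $r_J\in\ker(A)=\mathrm{rowspan}(A)^\bot$, the opposite of what you need, and it simply does not hold (take $A=\bigl(\begin{smallmatrix}1&0&1\\0&1&1\end{smallmatrix}\bigr)$, $J=\{1\}$: then $r_J=(0,-1,-1)$ is the second row times $-1$ and is not orthogonal to the first row). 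You are importing the Cramer identity for the kernel of a $d\times(d+1)$ matrix into a setting where it does not apply.

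The same confusion between $\ker(A)$ and $\mathrm{rowspan}(A)$ derails the minimality argument and the converse, which is the real gap. The paper's circuits are minimal-support elements of the \emph{row span}; for those, the correct dictionary is that $v=b^tA$ vanishes at $\ell$ iff $b\perp A_\ell$, so minimal support means the zero set $[s]\setminus\mathrm{supp}(v)$ indexes columns spanning a hyperplane of the column space. Your ``standard fact'' (support indexes minimally dependent columns, $|S|\le d+1$, $J:=S\setminus\{\ell_0\}$) is the characterization of minimal-support elements of $\ker(A)$ and is false here: in the example above the circuit $(1,0,1)$ has support $\{1,3\}$ indexing two \emph{independent} columns, and it equals $r_J$ for $J=\{2\}$, i.e.\ $J$ sits in the \emph{complement} of the support, not inside it. Indeed $r_{J,\ell}=0$ for every $\ell\in J$ (repeated column), so $\mathrm{supp}(r_J)=\{\ell: A_\ell\notin\mathrm{span}(A_J)\}$ is disjoint from $J$, the claim ``those columns are among $J$-columns plus at most one extra'' cannot be right, and row-span circuits can have support as large as $s-d+1$. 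The correct route is short: $r_J=c^tA$ with $c$ spanning the line $\mathrm{span}(A_J)^\bot$; if $w=b^tA$ is nonzero with $\mathrm{supp}(w)\subseteq\mathrm{supp}(r_J)$ then $b\perp A_j$ for all $j\in J$, so $b$ lies on that same line and $w$ is a scalar multiple of $r_J$ — this gives minimality. Conversely, for a circuit $r=b^tA$ with zero set $Z$, maximality of $Z$ forces $\rank(A_Z)=d-1$ (otherwise $\mathrm{span}(A_Z)^\bot$ has dimension $\ge 2$ and one can enlarge $Z$); choosing $J\subseteq Z$ with $\#J=d-1$ and $\rank(A_J)=d-1$ then exhibits $r$ as a multiple of $r_J$. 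As written, your converse selects the wrong $J$ and would not recover the given circuit.
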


 \section*{Acknowledgments}
 We would like to thank the referee for providing useful comments.

%%=========================================


\begin{thebibliography}{1}
%%=========================================

\bibitem{ADLS07}
{\sc{D. Angeli, P. De Leenher, and E. Sontag}},
\emph{A Petri net approach to the study of persistence
in chemical reaction networks},
Math. Biosci. 210 (2007), pp. 598 --618.

\bibitem{Bjetal99}
{\sc{A. Bj{\"o}rner,  M. Las Vergnas, B. Sturmfels, N. White,  and G. Ziegler}},
{Oriented matroids}, {Encyclopedia Math. Appl. 46},
2nd. edition, {Cambridge University Press, Cambridge, UK}, {1999}.

\bibitem{CDVS16}
{\sc {S. Catozzi. J. P. Di-Bella, A. Ventura, J.-A. Sepulchre}},
\emph{Signaling cascades transmit information
downstream and upstream but unlikely
simultaneously},
BMC Syst. Biol. 16(1) (2016), pp. 1--20.

\bibitem{CoFlRa08}
{\sc{C. Conradi, D. Flockerzi, and J. Raisch}},
\emph{ Multistationarity in the activation of a \uppercase{MAPK}:
 Parametrizing the relevant region in parameter space},
Math. Biosci. 211(1) (2008), pp. 105--131.

\bibitem{ShCo}
{\sc{C.Conradi, and A. Shiu}},
\emph{A global convergence result for processive multisite phosphorylation systems},
Bull. Math. Biol.  77(1) (2015), pp. 126--155.

\bibitem{davis}
{\sc{R.~J. Davis}},
\emph{Signal transduction by the JNK group of MAP kinases},
Cell, 103 (2000), pp. 239--252.

\bibitem{cyc-007}
{\sc{R.~J. Deshaies, and J.~E. Ferrell}},
\emph{Multisite phosphorylation and the countdown to {S} phase},
{Cell} 107(7) (2001), pp. 819--822.

\bibitem{octave}
{\sc{J. W. Eaton, D. Bateman, S. Hauberg, R. Wehbring}}, 
\emph{ GNU Octave version 4.0.0 manual: a high-level interactive language for
numerical computations}, 
\verb#http://www.gnu.org/software/octave/doc/interpreter/#, 2015.

\bibitem{fe79}
{\sc{M. Feinberg}},
% \newblock (1979),
\newblock Lectures on chemical reaction networks, University of Wisconsin,
\newblock available at
{\tt http://crnt.osu.edu/\-LecturesOnReactionNetworks}, 1979.

\bibitem{Fein95DefOne}
{\sc{M. Feinberg}},
\emph{Multiple steady states for chemical reaction networks of deficiency one},
Arch. Ration. Mech. Anal. 132(4) (1995), pp. 371--406.

\bibitem{Tool}
{\sc{M.~Feinberg et al.}},
\emph{Chemical Reaction Network Toolbox},
available at {\tt https://crnt.osu.edu/CRNTWin}.

\bibitem{feho77}
{\sc{M. Feinberg, F. Horn}}, \emph{Chemical mechanism structure and the coincidence
of the stoichiometric and kinetic subspaces, } Arch. Ration. Mech. Anal. 66(1) (1977),
pp. 83--97.

\bibitem{fw12}
{\sc{E. Feliu, and C. Wiuf}}, \emph{Enzyme-sharing as a cause of multi-stationarity in signalling systems},
J. R. Soc. Interface 9(2012), pp. 1224--1232.

% \bibitem{fw12b}
% {\sc{E. Feliu, and C. Wiuf}}, \emph{Preclusion of switch behavior in reaction networks with mass-action
% kinetics}, Appl. Math. Comput., 219 (2012), pp. 1449--1467.

\bibitem{fw13}
{\sc{E. Feliu, and C. Wiuf}}, \emph{Simplifying biochemical models with intermediate species},
J. R. Soc. Interface, 10 (2013), 20130484.

\bibitem{fw13b}
{\sc{E. Feliu, and C. Wiuf}}, \emph{Variable elimination in post-translational modification
reaction networks with mass-action kinetics}, J. Math. Biol., 66 (2013), pp. 281--310.

\bibitem{Gn11a}
{\sc{G. Gnacadja}},\emph{Reachability, persistence, and constructive chemical reaction networks (part II):
a formalism for species composition in chemical reaction network theory and application to persistence},
J. Math. Chem. 49(10) (2011), pp. 2137--2157.

\bibitem{Gn11b}
{\sc{G. Gnacadja}}, \emph{Reachability, persistence, and constructive chemical reaction networks (part III):
a mathematical formalism for binary enzymatic networks and application to persistence},
J. Math. Chem. 49(10) (2011), pp. 2158--2176.
% 
% \bibitem{gu12}
% {\sc{J. Gunawardena}}, \emph{A linear framework for time-scale separation in nonlinear biochemical systems},
% PLoS ONE, 7 (2012), e36321.

\bibitem{NFAT-002}
{\sc{N. Hermann-Kleiter, and G. Baier}},
\emph{{NFAT} pulls the strings during CD4+ T helper cell effector functions},
{Blood} 115(15) (2010), pp. 2989--2997.

\bibitem{NFAT-001}
{\sc{P.~G. Hogan, L. Chen, J. Nardone, and A. Rao}},
\emph{Transcriptional regulation by calcium, calcineurin, and {NFAT}},
{Genes Dev.} 17(18) (2003), pp. 2205--2232.

\bibitem{hornberg}
{\sc{J.~J. Hornberg, B. Binder, F.~J. Bruggeman, B. Schoeber, R. Heinrich, and H.~V. Westerhoff}},
\emph{Control of MAPK signalling: from complexity to what really matters},
Oncogene 24 (2005), pp. 5533--5542.

\bibitem{HsSi00}
{\sc{W. Hsing, and T.~J. Silhavy}},
\emph{Function of conserved histidine-243 in phosphatase activity of EnvZ, 
the sensor for porin osmoregulation in Escherichia coli},
J. Bacteriol. 179 (1997), pp. 3729--3735.

\bibitem{sig-016}
{\sc{C.-Y.~F. Huang, and J.~E. Ferrell}},
\emph{Ultrasensitivity in the mitogen-activated protein kinase cascade},
{Proc. Natl. Acad. Sci. USA} 93(19) (1996), pp. 10078--10083.

\bibitem{IgNi89}
{\sc{M.~M. Igo, A.~J. Ninfa, J.~B. Stock, and T.~J. Silhavy}},
\emph{Phosphorylation and dephosphorylation of a bacterial 
transcriptional activator by a transmembrane receptor},
Genes Dev. 3 (1989), pp. 1725--1734.

\bibitem{kholo00}
{\sc{B.~N. Kholodenko}},
\emph{Negative feedback and ultrasensitivity can bring about oscillations
in the mitogen-activated protein kinase cascades},
Eur. J. Biochem., 267 (2000), pp. 1583--1588.

% \bibitem{Ketal15}
% {\sc{V.B. Kothamachu, E. Feliu, L. Cardelli, O.S. Soyer OS}},
% \emph{Unlimited multistability and Boolean logic in microbial signalling},
% J. R. Soc. Interface 12 (2015): 20150234.
% %http://dx.doi.org/10.1098/rsif.2015.0234

\bibitem{kyriakis}
{\sc{J.~M Kyriakis, and J. Avruch}},
\emph{Mammalian mitogen-activated protein 
kinase signal transduction pathways activated by stress and inflammation},
Physiol. Rev., 81(2) (2001), pp. 807--869.

\bibitem{NFAT-003}
{\sc{F. Macian}},
\emph{NFAT proteins: key regulators of T-cell development and function},
{Nat. Rev. Immunol.} 5(6) (2005), pp. 472--484.

\bibitem{MFW16}
{\sc{M. Marcondes de Freitas, E. Feliu, and C. Wiuf}}, 
\emph{Intermediates, catalysts, persistence, and boundary steady states},
J. Math. Biol., 74 (2017) pp. 887--932.

\bibitem{MaHoKh04}
{\sc{N.~I. Markevich, J.~B. Hoek, and B.~N. Kholodenko}},
\emph{Signaling switches and bistability arising 
from multisite phosphorylation in protein kinase cascades},
J. Cell Biol. 164 (2004), pp. 353--359.

\bibitem{MiGu13}
{\sc{I. Mirzaev, J. Gunawardena}},
\emph{Laplacian dynamics on general graphs},
{Bull. Math. Biol.}, 75 (2013), pp. 2118--2149.

\bibitem{mure16}
{\sc{S. M\"{u}ller, G. Regensburger}},
\emph{Elementary vectors and conformal sums in polyhedral 
geometry and their relevance for metabolic pathway analysis},
Front Genet. 7 (2016), p. 90. %doi:10.3389/fgene.2016.00090.

\bibitem{mfrcsd13}
{\sc{S. M\"{u}ller, E. Feliu, G.  Regensburger, C. Conradi, A. Shiu, A. Dickenstein}},
\emph{Sign conditions for injectivity of generalized polynomial maps with applications
to chemical reaction networks and real algebraic geometry},
Found. Comput. Math. 16(1) (2016), pp. 69--97.

\bibitem{pearson}
{\sc{G. Pearson, F. Robinson, T. Beers Gibson, B.~E. Xu, M. Karandikar, K. Berman, and M.~H. Cobb}},
\emph{Mitogen-activated protein (MAP) kinase pathways: regulation and physiological functions},
Endocr. Rev., 22 (2001), pp. 153--183.

\bibitem{PM12}
{\sc{M. P\'erez Mill\'an, A. Dickenstein, A. Shiu, and C. Conradi}},
\emph{Chemical reaction systems with toric steady states},
Bull. Math. Biol. 74(5) (2012), pp. 1027--1065.

\bibitem{PrSi95}
{\sc{L. Pratt, and  T.~J. Silhavy}},
\emph{Porin regulation of \em{Escherichia coli}},
in Two-Component Signal
Transduction, J. A. Hoch and T. J. Silhavy, Eds. (American
Society for Microbiology, Washington, DC, 1995),
pp. 105--127.

\bibitem{RiZi97}
{\sc{J. Richter-Gebert, and G.~M. Ziegler}},
\emph{Oriented matroids},
in Handbook of Discrete and Computational Geometry, J. E. Goodman and J. O'Rourke, eds. 
Chapman and Hall/CRC, Boca Raton, FL, 1997, pp. 111--132.

\bibitem{Rock69}
{\sc{R.~T. Rockafellar}},
\emph{The elementary vectors of a subspace of $R^N$}, in
Combinatorial Mathematics and its Applications (Proc. Conf., Univ. North
Carolina, Chapel Hill, N.C., 1967),  104--127. Univ. North Carolina
Press, Chapel Hill, N.C., 1969.

\bibitem{SaWiFe15}
{\sc{M. S\'aez, C. Wiuf, E. Feliu}},
\emph{Graphical reduction of reaction networks by linear elimination of species},
J. Math. Biol. 74(1) (2017), pp. 195--237.

\bibitem{schaeffer}
{\sc{H.~J. Schaeffer, and M.~J. Weber}},
\emph{Mitogen-activated protein kinases: specific messages from ubiquitous messengers},
Mol. Cell Biol. 19 (1999), pp. 2435--2444.

\bibitem{schoeberl}
{\sc{B. Schoeberl, C. Eichler-Jonsson, E.~D. Gilles, and G. Muller}},
\emph{Computational modeling of the dynamics of the MAP kinase
cascade activated by surface and internalized EGF receptors},
Nat. Biotechnol., 20 (2002), pp. 370--375.

\bibitem{sig-051}
{\sc{Y.~D. Shaul, and R. Seger}},
\emph{The MEK/ERK cascade:  {F}rom signaling specificity to diverse functions},
{Biochim. Biophys. Acta.} 1773(8) (2007), pp. 1213--1226.

\bibitem{sf10}
{\sc{G. Shinar, and M. Feinberg}},
\emph{Structural sources of robustness in biochemical reaction networks},
{Science} 327(5971) (2010), pp. 1389--1391.

\bibitem{StRo00}
{\sc{A.~M. Stock, V.~L. Robinson, and P.~N. Goudreau}},
\emph{Two-component signal transduction},
Annu. Rev. Biochem. 69 (2000), pp. 183--215.

\bibitem{St95}
{\sc{B. Sturmfels}},
\emph{Gr\"{o}bner Bases and Convex Polytopes},
University Lecture Ser. 8, AMS, Providence, RI, 1995.

\bibitem{TG09}
{\sc{M. Thomson, and J. Gunawardena}}, 
\emph{The rational parametrisation theorem for multisite post-translational modification
systems},
J. Theoret. Biol. 261 (2009), pp. 626--636.


\bibitem{tvg07}
{\sc{A. Turjanski, J. Vaqu\'e, and J. Gutkind}},
\emph{MAP kinases and the control of nuclear events},
Oncogene 26 (2007), pp. 3240--3253.

\bibitem{Tutte}
{\sc{W. T. Tutte, W.~T.}}, \emph{The dissection of equilateral triangles into equilateral triangles},
Proc. Cambridge Philos. Soc., 44 (1948), pp. 463--482.

\def\cprime{$'$}
\bibitem{vh85}
{\sc{A. Vol{\cprime}pert, S. Hudjaev}},
\newblock \emph{Analysis in classes of discontinuous functions and equations of
 mathematical physics},
\newblock Mech. Anal. 8, Springer, Amsterdam, 1985.


\bibitem{ws08}
{\sc{L. Wang, and E. Sontag}}, \emph{On the number of steady states in a multiple futile cycle},
J. Math. Biol. 57(1) (2008), pp. 29--52.

\bibitem{widmann}
{\sc{C. Widmann, S. Gibson, M.~B. Jarpe, and G.~L. Johnson}},
\emph{Mitogen-activated protein kinase conservation of a three-kinase module from yeast to human},
Physiol. Rev., 79 (1999), pp. 143--180.

\bibitem{zarubin}
{\sc{T. Zarubin, and J.  Han}},
\emph{Activation and signaling of the p38 MAP kinase pathway},
Cell Res. 15 (2005), pp. 11--18.

\bibitem{ZhQi00}
{\sc{Y. Zhu, L. Qin, T. Yoshida, and M. Inouye}},
\emph{Phosphatase activity of histidine kinase EnvZ without kinase catalytic domain},
Proc. Natl. Acad. Sci. U.S.A. 97 (2000), pp. 7808--7813.



\end{thebibliography}
\end{document}